\tikzset{
    edge/.style={draw, postaction={decorate},
        decoration={markings,mark=at position .55 with {\arrow{>}}}},
    linking-D/.style={draw, postaction={decorate},
        decoration={markings,mark=at position 1 with { rectangle, draw, inner sep=1pt, minimum size=2mm, fill=cyan }}},
}
\newcounter{footnotemarknum}
\newcommand{\C}{{\mathbb C}}
\newcommand{\N}{{\mathbb N}}
\newcommand{\cF}{{\mathcal F}}
\newcommand{\cH}{{\mathcal H}}
\newcommand{\cO}{{\mathcal O}}
\newcommand{\cV}{{\mathcal V}}
\newcommand{\cS}{{\mathcal S}}
\newcommand{\cW}{{\mathcal W}}
\newcommand{\SU}{\mathrm{SU}}
\newcommand{\be}{\begin{equation}}
\newcommand{\ee}{\end{equation}}
\newcommand{\beq}{\begin{eqnarray}}
\newcommand{\eeq}{\end{eqnarray}}
\renewcommand{\u}{{\mathfrak{u}}}
\newcommand{\la}{\langle}
\newcommand{\ra}{\rangle}
\newcommand{\tr}{{\mathrm{Tr}}}
\newcommand{\f}{\frac}
\newcommand{\tl}{\widetilde}
\def\nn{\nonumber}
\def\pp{\partial}
\def\rd{\mathrm{d}}
\def\hH{\wh{H}}
\def\vphi{\varphi}
\def\veps{\varepsilon}
\def\eps{\epsilon}
\newcommand{\id}{\mathbb{I}}
\def\act{\triangleright}
\def\tk{\tilde{k}}
\def\ri{ { \rm{i} } }
\def\wh{\widehat}
\def\bGamma{{\Gamma}^{\circ}}
\newtheorem{theorem}{Theorem}[section]
\newtheorem{prop}[theorem]{Proposition}
\newtheorem{res}[theorem]{Result}
\newtheorem{coro}[theorem]{Corollary}
\newtheorem{defi}[theorem]{Definition}
\begin{document}

\title{Intertwiner Entanglement Excitation and Holonomy Operator}

\author{{\bf Qian Chen}}\email{chenqian.phys@gmail.com}
\affiliation{ENSL, CNRS, Laboratoire de Physique, F-69342 Lyon, France}
\author{{\bf Etera R. Livine}}\email{etera.livine@ens-lyon.fr}
\affiliation{ENSL, CNRS, Laboratoire de Physique, F-69342 Lyon, France}

\date{\today}

\begin{abstract}
In the loop quantum gravity  framework, spin network states carries entanglement between quantum excitations of the geometry at different space points. This intertwiner entanglement is gauge-invariant and comes from quantum superposition of spins and intertwiners.
Bipartite entanglement can be interpreted as a witness of distance, while multipartite entanglement reflects the curvature of the quantum geometry.
The present work investigates how the bipartite and multipartite intertwiner entanglement changes under the action of the holonomy operator, which is the basic building block of loop quantum gravity's dynamics.
We reveal the relation between entanglement excitation and the dispersion of the holonomy operator.
This leads to a new interesting connection between bulk geometry and boundary observables via
the dynamics of entanglement.
\end{abstract}

\maketitle
\tableofcontents

\section{Introduction}

Loop Quantum Gravity (LQG) proposes a background independent framework for a theory of quantum general relativity (see \cite{Gaul:1999ys,Thiemann:2007zz,Rovelli:2014ssa,Bodendorfer:2016uat} for reviews), provides a local definition of quantum states of space geometry as spin networks and a canonical description for their dynamics through a Hamiltonian constraint.
The geometry of 3d space slices is described by a pair of canonical fields, the (co-)triad and
the Ashtekar-Barbero connection. Standard loop quantum gravity approach performs a canonical quantization of the holonomy-flux algebra, of observables smearing the Ashtekar-Barbero connection along 1d curves - holonomies, and the (co-)triad along 2d surfaces - fluxes, and defines quantum states of geometry as polymer structures or graph-like geometries with edges and vertices. Those spin network states represent the excitations of the Ashtekar-Barbero connection as Wilson lines in gauge field theory. On the other hand, geometric observables are raised to quantum operators acting on the Hilbert space spanned by spin networks, leading to the celebrated result of  discrete spectra for areas (living along edges) and volumes (living at vertices) \cite{Rovelli:1994ge,Ashtekar:1996eg,Ashtekar:1997fb}.

So spin networks are the kinematical states of the theory and the game is to describe their dynamics, i.e. their evolution in time generated by the Hamiltonian constraints\footnotemark. 
\footnotetext{
Although the traditional canonical point of view is to attempt to discretize, regularize and quantize the Hamiltonian constraints \cite{Thiemann:1996aw,Thiemann:1996av}, this often leads to anomalies. The formalism naturally evolved towards a path integral formulation. The resulting spinfoam models, constructed from (extended) topological quantum field theories (TQFTs) with defects, define transition amplitudes for histories of spin networks \cite{Reisenberger:1996pu,Baez:1997zt,Barrett:1997gw,Freidel:1998pt} (see \cite{Livine:2010zx,Dupuis:2010kqh,Perez:2012wv} for reviews). The formalism then evolves in a third quantization, where so-called “group field theories” define non-perturbative sums over random spin network histories in a similar way than matrix model partition functions define sums over random 2d discrete surfaces \cite{DePietri:1999bx,Reisenberger:2000zc,Freidel:2005qe} (see \cite{Oriti:2006se,Carrozza:2013oiy,Oriti:2014yla} for reviews).
}
Evolving spin networks, formalized as spinfoams, describe the four-dimensional quantum space-time at the Planck scale. This quantum space-time is defined without reference to a background classical geometry, with quantum states of geometry are defined up to diffeomorphisms with no reference to any intrinsic coordinate system or background structure. Then concepts in classical geometry, such as distance, area, curvature, become emergent notions, in a continuum limit after suitably coarse-graining Planck scale quantum fluctuations. They can only be reconstructed from the interaction between subsystems, quantified by correlation and entanglement shared between subsystems 
(see e.g. \cite{Donnelly:2016auv,Feller:2017jqx})
%
For instance, bipartite correlations, equivalent to 2-point functions, allow to reconstruct a notion of distance while curvature should be reflected by multi-body correlations.
This perspective sets the field of quantum information at the heart of research in quantum gravity, with essential roles to play for entanglement, decoherence and quantum localization in probing quantum states of geometries and thinking about the quantum-to-classical transition for the space-time geometry.

\smallskip

In the context of loop quantum gravity, the work investigating the entanglement carried by spin networks states have slowly built since the birth of the theory \cite{Livine:2005mw,Livine:2007sy,Donnelly:2008vx,Livine:2008iq,Donnelly:2011hn,Donnelly:2014gva,Feller:2015yta,Bianchi:2015fra,Feller:2016zuk,Bianchi:2016hmk,Delcamp:2016eya,Livine:2017fgq,Baytas:2018wjd}, but has definitely sped up in the past few years with the burst of interest in the bulk-to-boundary propagator and bulk-from-boundary reconstruction in the light of holography, see for instance \cite{Anza:2016fix,Chirco:2017xjb,Colafranceschi:2021acz,Chirco:2021chk,Colafranceschi:2022ual}.
In this paper, we focus on intertwiner entanglement\footnotemark{}, i.e. the entanglement between quanta of volume carried by a spin network state,
\footnotetext{
As shown in \cite{Livine:2017fgq}, boundary state entanglement typically involves the gauge symmetry breaking and is defined with respect to gauge-variant physical measurements and the corresponding choice of boundary reference frame. On the other hand, intertwiner entanglement is gauge-invariant, and can be argued to quantify genuine spin network entanglement.
}
and study the entanglement created by the holonomy operator, which is recognized both as a discretized measure of curvature in loop quantum gravity  and as the basic building block of the Hamiltonian operator. We wish to understand how the excitation of curvature is related to the excitation of entanglement on spin network states. This is part of the larger line of research towards answering the questions: what is the relation between geometry and entanglement? Can one understand the dynamics of the quantum geometry directly in terms of evolution of entanglement and quantum information?

Furthermore, since the spin network is viewed as a many-body quantum system, it is necessary to introduce the notion of multipartite entanglement \cite{PhysRevA.68.042307,Amico:2007ag,GUHNE20091}.
It is still an ongoing topic in the field of quantum information. There are many candidates for quantifying the entanglement of multipartite system. These entanglement measures generally are not equivalent. In order to explore the entanglement structure for LQG, it is required to find a suitable entanglement measure.
Our work shows that the notion of geometric measure of entanglement may be a good one, and justify the geometric measure since it admits a straightforward generalization from bipartite to multipartite spin network.

\smallskip

Section \ref{Section:HolonomyOperatorSNHS} sets up the mathematical definitions of the Hilbert space of spin network states, the defines and describes the action of the holonomy operator on those states. A gauge-invariant holonomy operator defined on a loop acts on all the intertwiner states living at the nodes of the spin network.
Section \ref{Section:MultipartiteEntanglement} views spin network states as many-body quantum systems, living in the tensor product of the space of intertwiners attached to each node of the network. Spin network basis states, with fixed spins on the links and fixed intertwiners on the nodes, is a separable state and carries no entanglement.
We explore the notion of multipartite entanglement \cite{PhysRevA.68.042307,Amico:2007ag,GUHNE20091} and define the notion of geometric  entanglement for spin networks. Our main result is the computation of  the growth of entanglement due to the action of the holonomy on a spin network basis state.
Finally, sections \ref{section:CandyGraph} and \ref{section:TriangleGraph} apply this general result  to the simplest network structures: the 2-vertex graph with boundary -or candy graph- consisting in two nodes and a triangular graph consisting in three nodes. These simple settings allow for explicit computations of the evolution (generated by the holonomy operator) of both the bipartite entanglement and the geometric multipartite entanglement, showing that they match at leading order.



\section{Holonomy operators on spin network Hilbert space} \label{Section:HolonomyOperatorSNHS}
\subsection{Spin network Hilbert space}
\label{subsection:SP-SNWF}


Loop quantum gravity proceeds to a canonical quantization of general relativity (see \cite{Thiemann:2007zz} for detailed lectures, or \cite{Ashtekar:2021kfp} for a recent overview), describing the evolution of a 3d (space-like) slice in time, thereby generating the four-dimensional space-time. It defines quantum states of geometry and describes their constrained evolution in time.
A state of geometry is defined as a wave-function $\psi$ of the Ashtekar-Barbero connection on the canonical hypersurface. Loop quantum gravity choose cylindrical wave-functions, that depend on the holonomies of that connection along the edges of a graph $\Gamma$. These holonomies are $\SU(2)$ group elements,  $g_{e}\in\SU(2)$ for each edge $e$ of the graph. And the wave-functions are required to be gauge-invariance under local $\SU(2)$ transformations, which act at every vertices of the graph.

For closed 3d spatial slices, without boundary, we consider closed graphs, i.e. without open links.
A wave-function $\psi$ on a closed oriented graph $\Gamma$ is a function of one $\SU(2)$ group element $g_{e}$ for each edge $e$, and is assumed to be invariant under the $\SU(2)$-action at each vertex $v$ of the graph:
\be
\label{eq:GaugeTransformation}
\begin{array}{rlcl}
\psi_{\Gamma}: 
&\SU(2)^{\times E}
&\longrightarrow
&\C \\
&\{g_{e}\}_{e\in\Gamma}
&\longmapsto
&\psi(\{g_{e}\}_{e\in\Gamma})
=
\psi_{\Gamma}
(\{h_{t(e)}g_{e}h_{s(e)}^{-1}\}_{e\in\Gamma})\,, \quad
\forall h_{v}\in\SU(2)\,
\end{array}
%
\ee
where $t(e)$ and $s(e)$ respectively refer to the target and source vertices of the edge $e$. We write $E$ and $V$ respectively for the number of edges and vertices of the considered graph $\Gamma$.
The scalar product between such wave-functions is given by the Haar measure on the Lie group $\SU(2)$:
\be \label{eq:ScalarProduct-SpinNetwork-Integration}
\la \psi_{\Gamma}
| \tl{\psi}_{{\Gamma} } \ra
=\int_{\SU(2)^{{\times E}}}\prod_{e}\rd g_{e}\,
\overline{\psi_{\Gamma}
(\{g_{e}\}_{e\in\Gamma}) }\,
\tl{\psi}_{ {\Gamma} }
(\{g_{e}\}_{e\in{\Gamma} })
\,.
\ee
The Hilbert space of quantum states with support on the closed graph $\Gamma$ is thus realized as a space of square-integrable functions, $\cH_{\Gamma}=L^{2}(\SU(2)^{{\times E}}/\SU(2)^{{\times V}})$.

\smallskip

For a 3d slice with boundary, we consider graphs with open links puncturing the slice boundary. Those open links are connected to one vertex of the graph, while their other extremity is left loose. Without loss of generality, we can assume that all those open links are outward oriented, i.e. that their source vertex belongs to the graph, while their target vertex does not. Those open links are referred to as the boundary links.
We can use the same definition as above for a closed graph, considering wave-functions of one group element per link, including both the standard links in the interior and the boundary links. The difference is that gauge transformations will only act at the graph vertices and will not act on the open ends.

Another way to proceed, as advocated in \cite{Chen:2021vrc}, is to explicitly partition the graph set of edges into interior links and boundary links,
\be
\Gamma= \Gamma^{o}\sqcup \pp\Gamma\,.
\ee
One introduces the concept of boundary state as functions of the holonomies along the boundary links. Then
one considers the bulk wave-functions as mapping holonomies along the interior links to boundary states:
\be
\label{eq:bulkwavefunction}
\begin{array}{rlcl}
\psi_{\Gamma}: 
&\SU(2)^{\{e\in \Gamma^{o}\}}
&\longrightarrow
&\cF\big{[}\SU(2)^{\{e\in\pp\Gamma\}}\big{]} \\
&\{g_{e}\}_{e\in\Gamma^{o}}
&\longmapsto
&\psi[\{g_{e}\}_{e\in\Gamma^{o}}]\,,
\end{array}
\ee
with the bulk gauge invariance turning into a covariance property of the boundary states:
\be
\psi[\{g_{e}\}_{e\in\Gamma^{o}}](\{g_{e}\}_{e\in\pp\Gamma})
=
\psi[\{h_{t(e)}g_{e}h_{s(e)}^{-1}\}_{e\in\Gamma^{o}}](\{g_{e}h_{s(e)}^{-1}\}_{e\in\pp\Gamma})
\,, \quad
\forall h_{v}\in\SU(2)\,.
\ee
This follows the logic of state-sum models and discrete topological path integrals: boundary cells are dressed with quantum states, while bulk cells are dressed with morphisms of their boundary Hilbert space.
%
%
When decomposing into spins, wave-functions on a closed graph can be decomposed on spin network states, dressed with one irreducible spin representation on each link $j_{e}$ and with one intertwiner state (or singlet state)  on each vertex $I_{v}$. Wave-functions on an open graph can be similarly decomposed into bulk spin network states. In that case, as explained in  \cite{Chen:2021vrc}, the boundary Hilbert space is the tensor product of spin states living on the open links:
\be
\cH_{\pp\Gamma}=\bigotimes_{e \in \pp\Gamma}\cH_{e}
\,,\qquad
\cH_{e}=\bigoplus_{j_{e}\in\f\N2}\cV_{j_{e}}\,.
\ee
A bulk state can be understood as a boundary map, i.e. a function from bulk group elements to the boundary Hilbert space:
\be
\psi[\{g_{e}\}_{e\in\Gamma^{o}}]\in \cH_{\pp\Gamma}
\,.
\ee
We refer the interested reader to \cite{Chen:2021vrc}, and to the recent work \cite{Anza:2017dkd,Colafranceschi:2021acz,Colafranceschi:2022ual}, for more details on wave-functions for bounded regions and interesting work on typical bulk states.

\smallskip

Let us now briefly review spin network states and notations.
A basis of this Hilbert space can be constructed using the spin decomposition of $L^{2}$ functions on the Lie group $\SU(2)$ according to the Peter-Weyl theorem. A {\it spin} $j\in\f\N2$ defines an irreducible unitary representation of $\SU(2)$, with the action of $\SU(2)$ group elements realized on a $(2j+1)$-dimensional Hilbert space $\cV_{j}$. We use the standard orthonormal basis $|j,m\ra$, labeled by the spin $j$ and the magnetic index $m$ running by integer steps from $-j$ to $+j$, which diagonalizes the $\SU(2)$ Casimir $\vec{J}^{2}$ and the $\u(1)$ generator $J_{z}$. Group elements $g$ are then represented by the $(2j+1)\times (2j+1)$ Wigner matrices $D^{j}(g)$:
\be
D^{j}_{mm'}(g)=\la j,m|g|j,m'\ra\,,\qquad
\overline{D^{j}_{mm'}(g)}
=
D^{j}_{m'm}(g^{-1})
\,.
\ee
These Wigner matrices form an orthogonal basis of $L^{2}(\SU(2))$:
\be
\int_{\SU(2)}\rd g\,\overline{D^{j}_{ab}(g)}\,{D^{k}_{cd}(g)}
=
\int_{\SU(2)}\rd g\,\overline{D^{j}_{ba}(g^{-1})}\,{D^{k}_{cd}(g)}
=
\f{\delta_{jk}}{2j+1}\delta_{ac}\delta_{bd}
\,,\qquad
\delta(g)
=\sum_{j\in\f\N2}(2j+1)\chi_{j}(g)
\,, \label{eq:Peter-Weyl}
\ee
where $\chi_{j}$ is the spin-$j$ character defined as the trace of the Wigner matrix, $\chi_{j}(g)=\tr D^{j}(g)=\sum_{m=-j}^{j}\la j,m|g|j,m\ra$.
Applying this to gauge-invariant wave-functions allows to build the {\it spin network} basis states of $\cH_{\Gamma}$, which depend on one spin $j_{e}$ on each edge and one intertwiner $I_{v}$ at each vertex:
\be
\Psi_{\Gamma,\{j_{e},I_{v}\}}(\{g_{e}\}_{e\in\Gamma})
=
\la\{g_{e}\}) | \{j_{e},I_{v}\}\ra
=
\sum_{m_{e}^{t,s}}
\prod_{e}\sqrt{2j_{e}+1}\,\la j_{e}m_{e}^{t}|g_{e}|j_{e}m_{e}^{s}\ra
\,\prod_{v} \la \bigotimes_{e|\,v=s(e)} j_{e}m_{e}^{s}|\,I_{v}\,|\bigotimes_{e|\,v=t(e)}j_{e}m_{e}^{t}\ra
\,. \label{eq:SpinNetwork}
\ee
As illustrated on fig.\ref{fig:intertwiner}, an {\it intertwiner} is a $\SU(2)$-invariant  state -or singlet- living in the tensor product of the incoming and outgoing spins at the vertex $v$:
\be \label{eq:IntertwinerHilbertSpace}
I_{v}\in\textrm{Inv}_{\SU(2)}\Big{[}
\bigotimes_{e|\,v=s(e)} \cV_{j_{e}}
\otimes
\bigotimes_{e|\,v=t(e)} \cV_{j_{e}}^{*}
\Big{]}
\,.
\ee
The intertwiners are singlet states, recoupling $\SU(2)$ irreducible representations. The simplest cases are bivalent and trivalent intertwiner which are uniquely determined when spins are given, since there is an unique way to recouple two or three spins into a singlet state - null angular momentum. For higher valent vertex, to label a recoupling amongst spins, one needs to choose a channel to recouple them. This amounts to split a higher valent vertex into trivalent vertices and link them via a tree graph (no forming loop), e.g., fig.\ref{fig:Unfolding-5Valent}.
Once such channel, or tree, is chosen, an intertwiner basis state is defined by the assignment of a spin to each internal link, or intermediate spins. Two different intertwiner basis states that have different intermediate spins (i.e. at least one different intermediate spins) are mutually orthogonal. An generic intertwtiner state will then be a arbitrary superposition of those basis states.
There are various possible ways to choose a channel to label intertwiner basis states. The unitary map between basis associated to different channels is given by the $\{3nj\}$ spin recoupling symbols.

\smallskip

We will alleviate the notation for intertwiner.
An intertwiner $\vert \{ j_e\}_{e\ni v} \,, I_{v}^{ ( \{ j_e\}_{ e\ni v } ) } \ra$ is labeled by $\{ j_e\}_{e\ni v}$, the spins attached, and $I_{v}^{ ( \{ j_e\}_{ e\ni v } ) }$ the internal indices when attached spins $\{ j_e\}_{e\ni v}$ are given. For instance, a trivalent intertwiner only needs attached spins $\{ j_1,j_2,j_3 \}$ for labeling, while a four-valent intertwiner needs $\{ j_1,j_2,j_3,j_4\}$ for attached spins plus an internal index $j_{12}$ for recoupled spin of $\cV_{j_1} \otimes \cV_{j_2}$, likewise, for five-valent vertex unfolding in fig.\ref{fig:Unfolding-5Valent}, attached spins $\{ j_1,\cdots, j_5 \}$ and internal indices $\{ j_{12}, j_{45} \}$ are needed for labeling.
For each vertex $v$, the attached spins are implicitly expressed in internal indices $I_{v}^{ ( \{ j_e\}_{ e\ni v } ) }$, we don't need to explicitly specify the attached spins, and adopt $I_{v}$ simply instead of $I_{v}^{ ( \{ j_e\}_{ e\ni v } ) }$, unless in some necessary cases. Hence from now on, $\vert \{ j_e\}_{e\ni v} \,, I_{v}^{ ( \{ j_e\}_{ e\ni v } ) } \ra \equiv | I_{v} \ra$.
Under the alleviation, the scalar product between two spin network basis states on the same graph $\Gamma$ is then given by the product of the scalar products between their intertwiners:
\be \label{eq:ScarlarProduct-SpinNetwork-BasisStates}
\la \Psi_{ \Gamma_{ \{j_{e},I_{v}\}} } | \Psi_{ \Gamma_{ \{\tilde{j}_{e},\tilde{I}_{v}\}} } \ra
=\prod_{e}
\delta_{j_{e},\tilde{j}_{e}} \,
\prod_{v}
\la I_{v}|\tilde{I}_{v}\ra
\equiv
\prod_{v}
\la I_{v}|\tilde{I}_{v}\ra
\,.
\ee

\begin{figure}[htb]
	\centering
\begin{subfigure}[t]{0.45\linewidth}
	\begin{tikzpicture} [scale=1.2]
\coordinate  (O) at (0,0);

\node[scale=0.7] at (O) {$\bullet$} node[below] {$I_v$};

\draw[thick] (O)  --  node[midway,sloped]{$>$} ++ (1,1) node[right] {$| j_1, m_1 \ra$};

\draw[thick] (O)  to[bend left=20]  node[midway,sloped]{$<$} ++ (0.9,-0.9) node[right] {$| j_2, m_2 ]$};

\draw[thick] (O)  to[bend left=20] node[midway,sloped]{$>$} ++ (0,1.5) node[above] {$| j_5, m_5 \ra$};

\draw[thick] (O)  to[bend left=10]  node[midway,sloped]{$<$} ++ (-1.2,-0.5) node[left] {$| j_3, m_3 \ra$};

\draw[thick] (O)  to[bend left=10]  node[midway,sloped]{$>$} ++ (-1.1,0.6) node[left] {$| j_4, m_4 ]$};

\end{tikzpicture}
\caption{An five-valent intertwiner $I_v$ at vertex $v$.}
\label{fig:intertwiner}
\end{subfigure}
\begin{subfigure}[t]{0.45\linewidth}
\begin{tikzpicture}[scale=1]

\coordinate  (O) at (0,0);
\coordinate  (A) at (-1,0);
\coordinate  (B) at (-1,1);

\node[scale=0.7] at (O) {$\bullet$};
\node[scale=0.7] at (A) {$\bullet$};
\node[scale=0.7] at (B) {$\bullet$};

\draw[thick] (O) --node[midway,sloped]{$<$} node[midway,below] {$j_{12}$} (A) --node[midway,sloped]{$>$} node[midway,left] {$j_{45}$} (B);

\draw[thick] (O)  --  node[midway,sloped]{$>$} ++ (1,1) node[right] {$| j_1, m_1 \ra$};

\draw[thick] (O)  --  node[midway,sloped]{$<$} ++ (0.9,-0.9) node[right] {$| j_2, m_2 ]$};

\draw[thick] (B)  -- node[midway,sloped]{$>$} ++ (0.3,1.2) node[above] {$| j_5, m_5 \ra$};

\draw[thick] (A)  -- node[midway,sloped]{$<$} ++ (-1.2,-0.5) node[left] {$| j_3, m_3 \ra$};

\draw[thick]  (B)  -- node[midway,sloped]{$>$} ++ (-1.1,0.6) node[left] {$| j_4, m_4 ]$};

\end{tikzpicture}

\caption{Unfolding the five-valent intertwiner with trivalent virtual vertices.
}
\label{fig:Unfolding-5Valent}
\end{subfigure}
\caption{The notation for a higher valent intertwiner $I_{v}$ in terms of virtual spins.
}
\label{fig:Indices-5Valent}
\end{figure}
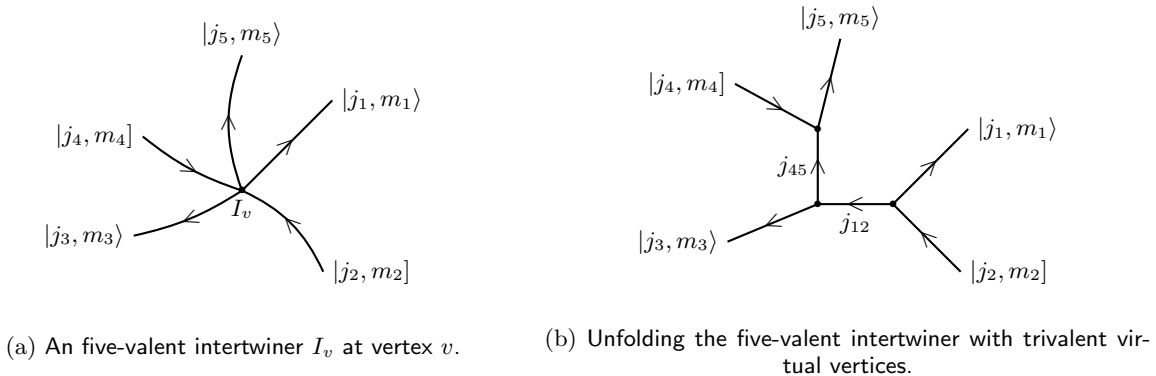
%


The spin networks form a basis of the Hilbert space of kinematical quantum states of geometry. Dynamics is then built as operators acting on spin networks. Most operators, including the various version of the Hamiltonian constraint operators, are constructed from the basic holonomy operator acting on closed loops on the graph. The next section is dedicated to analyzing the action of the holonomy operator on spin network basis states, in order to later investigate the entanglement created by holonomy operators.


\subsection{Loop holonomy operator on spin networks}

The dynamics of spin network states implement the flow generated by the Hamiltonian constraints on the embedded geometry of the canonical hypersurface. At the quantum level, the Hamiltonian constraint operators involve the holonomy operator and geometric observables, such as areas and volumes. The holonomy operator  is analogous to the Wilson-loop in QCD. It corresponds to the quantization of the curvature in the polymer quantization scheme used in loop quantum gravity, where one does not access to point-like excitations, but only to gauge-invariant observables smeared along 1d structures.
Let us analyze the action of the holonomy operator on spin network basis states, along the lines of \cite{Bonzom:2009zd,Borja:2010gn}.


\smallskip

Let us look at the holonomy operator with spin $\ell$ acting on a single edge $e$. This operator takes the tensor product of the spin-$\ell$ with the spin $k_{e}$ carried by the edge, and its action can be expressed in terms of Clebsch-Gordan coefficients decomposing this tensor product $\ell \otimes k_{e}$ into irreducible representations.
%
%
Indeed the Wigner matrices for the $\SU(2)$  group element $g_{e}$ carrying the holonomy along the edge $e$ satisfies the following algebraic relations:
\beq \label{eq:HolonomyOperator-TensorProduct}
\widehat{ D^{\ell}_{a_{e} b_{e} } }
\act
\Big[ D^{k_{e} }_{m_{e} n_{e}}(g_{e}) \Big]
&=&
D^{\ell}_{a_{e} b_{e} } (g_{e} ) \,
D^{k_{e} }_{m_{e} n_{e}}(g_{e} )
\,, \qquad
D^{\ell}
\otimes
D^{k_{e} }
=
\bigoplus_{K_{e}
=
| k_{e}-\ell | }^{ k_{e}+\ell }
D^{K_{e}}
\,,
\\
D^{\ell }_{a_{e} b_{e} }(g_{e} ) \,
D^{k_{e} }_{m_{e} n_{e} }(g_{e} )
&=&
\sum_{K_{e}=| k_{e}-\ell | }^{k_{e} + \ell } \,
\sum_{ M_{e}, N_{e}=-K_{e} }^{K_{e} } \,
(-1)^{M_{e}-N_{e} }
(2K_{e}+1)
\begin{pmatrix}
   \ell & k_{e} & K_{e} \\
   a_{e} & m_{e} & -M_{e}
  \end{pmatrix}
\overline{ \begin{pmatrix}
   \ell & k_{e} & K_{e} \\
   b_{e} & n_{e} & -N_{e}
  \end{pmatrix} } \,
D^{K_{e} }_{M_{e} N_{e} }(g_{e})
  \,, \nn \\
   && \label{eq:HolonomyOperator-3j}
\eeq
where the recoupled spin $K_{e}$ is bounded by the triangular inequalities $| k_{e}-\ell | \leq K_{e} \leq k_{e} +\ell$.

%

\medskip

The  holonomy operator along a single edge spoils the gauge invariance. In order to produce a gauge-invariant holonomy operator, one must consider a closed loop on the graph $\Gamma$ underlying the spin network state.
Consider a loop $W \subseteq \Gamma$ with $n$  edges, and assume the simplifying condition that it does not go through a vertex more than that once. The oriented loop $W$ can be described as the path $W[v_1 \overset{ e_{1} }{\to} \cdots \overset{ e_{n-1} }{\to} v_n \overset{ e_{n} }{\to} v_1]$ such that the edge $e_{i}$ links the vertex $v_{i}$ to $v_{i+1}$, with $i=1,\cdots, n$ and the implicit convention $n+1\equiv 1$. The loop holonomy operator is defined as a multiplicative operator on the wave-functions:
\beq
\left( \wh{ \chi_{ \ell } } \, \act_{W} \psi_{\Gamma} \right) ( \{g_e\}_{e\in\Gamma}  )
=
\chi_{\ell} (G_{W}) \psi_{\Gamma} ( \{g_e\}_{e\in\Gamma}  )
\,, \qquad \text{with} \quad 
G_{W}=\overleftarrow{\prod_{ e_{i} \in W } g_{ e_{i} }   }
\,,
\eeq
where  $\chi_{ \ell }(g)=\tr D^{\ell}(g)$ is the character of the spin-$\ell$ representation.
We take the inverse of a group element if the edge is oriented in the opposite direction than the loop.
Since the factor $\chi_{\ell} (G_{W})$ is a gauge invariant function, the resulting wave-function is still gauge-invariant. Thus the map $\wh{\chi_{\ell} } \, \act_{W}$
acts legitimately on the Hilbert space $\cH_{\Gamma}$ and we can write its action on the spin network basis:
%
%
%
\beq \label{eq:LoopHolonomyOperator-SpinNetwork}
\wh{\chi_{\ell} } \, \act_{W} \,
| \Psi_{ \Gamma, \{I_{v}\} } \ra
=
\sum_{ \{I'_v\} } \,
\tensor{   {\Big[ \tensor{Z(\Gamma) }{ _{\tensor{\chi}{_{\ell} }\, \act_{W}    }   } \Big] }   } {^{ \{ I'_{v} \} } _{ \{ I_{v} \} } }
\,
| \Psi_{ \Gamma, \{I'_{v}\} } \ra
\,,
\eeq
where the matrix elements  $\tensor{Z(\Gamma) }{ _{\tensor{\chi}{_{\ell} }\, \act_{W}    }   }$ are given by the following integrals:
\beq
\tensor{   {\Big[ \tensor{Z(\Gamma) }{ _{\tensor{\chi}{_{\ell} }\, \act_{W}    }    } \Big] }   } {^{ \{ I'_{v} \} } _{ \{ I_{v} \} } }
=
\int
\prod_{e\in\Gamma}\rd g_{e} \,
\overline{ 
\Psi_{\Gamma, \{I'_{v}\} }
(\{ g_{e} \}_{e\in\Gamma}) }
\, \chi_{\ell} ( G_{W} ) \,
\Psi_{\Gamma, \{I_{v}\} } ( \{ g_e \}_{e\in\Gamma}  )
\,.
\label{eq:Amplitudes}
\eeq
This matrix $\tensor{Z(\Gamma) }{ _{\tensor{\chi}{_{\ell} }\, \act_{W}    }   }$ satisfies a composition rule:
\beq \label{eq:Composition-Amplitudes}
\sum_{ \{ I'_{v} \} }
\,
\tensor{   {\Big[ \tensor{Z(\Gamma) }{   _{\tensor{\chi}{_{\ell_{1} } }\, \act_{W}    }   } \Big] }   } {^{ \{ I''_{v} \} } _{ \{ I'_{v} \} } }
\,
\tensor{   {\Big[ \tensor{Z(\Gamma) }{ _{\tensor{\chi}{_{ \ell_{2} } }\, \act_{W}    }  } \Big] }   } {^{ \{ I'_{v} \} } _{ \{ I_{v} \} } }
&=&
\sum_{ s=| {\ell}_{1} - {\ell}_{2} | }^{ {\ell}_{1} + {\ell}_{2} }
\,
\tensor{   {\Big[ \tensor{Z(\Gamma) }{    _{\tensor{\chi}{_s}\, \act_{W}    }    } \Big] }   } {^{ \{ I''_{v} \} } _{ \{ I_{v} \} } }
=
\tensor{   {\Big[ \tensor{Z(\Gamma) }{    _{ ( \tensor{\chi}{_{ {\ell}_{1} } } \cdot \tensor{\chi}{_{ {\ell}_{2} } } ) \, \act_{W}    }    } \Big] }   } {^{ \{ I''_{v} \} } _{ \{ I_{v} \} } }
\,,
\eeq
which is inherited from the character recoupling formula $\chi_{ {\ell}_{1} } \chi_{ {\ell}_{2} } = \sum_{ s=| {\ell}_1- {\ell}_2 | }^{ {\ell}_1+{\ell}_2 } \, \chi_{s}$.
%
An interesting fact to keep in mind is that the matrices $\tensor{Z(\Gamma) }{   _{\tensor{\chi}{_{\ell_1} }\, \act_{W}    }   }$ and $\tensor{Z(\Gamma) }{ _{\tensor{\chi}{_{ {\ell}_2 } }\, \act_{W}    }  }$ commute with each other with arbitrary spins ${\ell}_1$ and ${\ell}_2$.

\medskip

The transition matrix $Z$ can be expressed in terms of the $\{6j\}$ symbols of spin recoupling. To this purpose, we introduce  bouquet spins for the vertices along the loop $W$ following the previous work \cite{Chen:2021vrc}. The bouquet spin is a recoupled spin used to define a convenient intertwiner basis. As illustrated on fig.\ref{fig:bouquet}, we distinguish at each vertex $v$ the two edges belonging to the loop $\{ e \in W \}$ and we recouple all the spins carried by the other edges $\{e\notin W\}$ into a spin $J_{v}$.
If a vertex around the loop is 3-valent, then the bouquet spin is obvious simply the spin carried by the third edge, not belonging to the loop. In general, the definition of the bouquet spin allows to consider  all the intertwiners around the loop as 3-valent for all matters of operators acting on the loop edges.
%
%
%
%
This convenient unfolding of the intertwiners allows to write the action of the holonomy operator in terms of the spins along the loop and the bouquet spins:
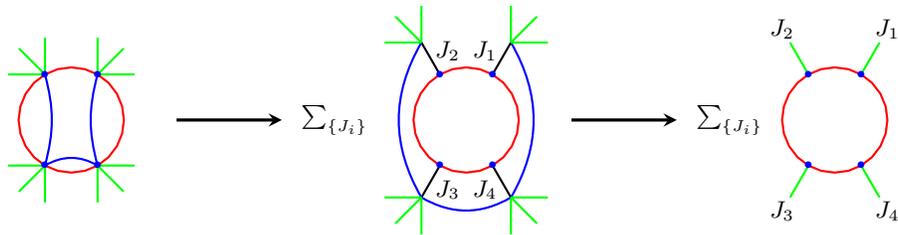
\begin{figure}[hbt!]
\centering
\begin{tikzpicture}[scale=0.7]
\coordinate (O) at (1.5,0);
\path (O) ++(60:1) coordinate (O1);
\path (O) ++(120:1) coordinate (O2);
\path (O) ++(240:1) coordinate (O3);
\path (O) ++(300:1) coordinate (O4);

   \draw [domain=0:360,thick,red] plot ({1.5+cos(\x)}, {sin(\x)});

\draw[thick] (O1) -- ++ (60:0.7) node [left,near end] {$J_1$} coordinate (O5);
\draw[thick] (O2) -- ++ (120:0.7) node [right,near end] {$J_2$} coordinate (O6);
\draw[thick] (O3) -- ++ (240:0.7) node [right,near end] {$J_3$} coordinate (O7);
\draw[thick] (O4) -- ++ (300:0.7) node [left,near end] {$J_4$} coordinate (O8);

\draw[thick,blue] (O6) to[bend right=30] (O7);
\draw[thick,blue] (O7) to[bend right=30] (O8);
\draw[thick,blue] (O8) to[bend right=30] (O5);

\draw[thick,green] (O1) ++ (60:0.7) -- ++ (0:0.7);
\draw[thick,green] (O1) ++ (60:0.7) -- ++ (45:0.7);
\draw[thick,green] (O1) ++ (60:0.7) -- ++ (90:0.7);

\draw[thick,green] (O2) ++ (120:0.7) -- ++ (90:0.7);
\draw[thick,green] (O2) ++ (120:0.7) -- ++ (135:0.7);
\draw[thick,green] (O2) ++ (120:0.7) -- ++ (180:0.7);

\draw[thick,green] (O3) ++ (240:0.7) -- ++ (180:0.7);
\draw[thick,green] (O3) ++ (240:0.7) -- ++ (225:0.7);
\draw[thick,green] (O3) ++ (240:0.7) -- ++ (270:0.7);

\draw[thick,green] (O4) ++ (300:0.7) -- ++ (270:0.7);
\draw[thick,green] (O4) ++ (300:0.7) -- ++ (315:0.7);
\draw[thick,green] (O4) ++ (300:0.7) -- ++ (0:0.7);

\draw[->,>=stealth,very thick] (-4,0) -- (-2,0);

\draw (-2,0) ++ (1,0) node {$\sum_{ \{ J_i \} }$};
\coordinate (A) at (-6,0);
   \draw [domain=0:360,red,thick] plot ({-6+cos(\x)}, {sin(\x)});
\path (A) ++(60:1) coordinate (A1);
\path (A) ++(120:1) coordinate (A2);
\path (A) ++(240:1) coordinate (A3);
\path (A) ++(300:1) coordinate (A4);

\draw[thick,blue] (A2) to[bend left=15] (A3);
\draw[thick,blue] (A3) to[bend left=30] (A4);
\draw[thick,blue] (A4) to[bend left=15] (A1);

\draw[thick,green] (A1) -- ++ (0:0.7);
\draw[thick,green] (A1) -- ++ (45:0.7);
\draw[thick,green] (A1) -- ++ (90:0.7);

\draw[thick,green] (A2) -- ++ (90:0.7);
\draw[thick,green] (A2) -- ++ (135:0.7);
\draw[thick,green] (A2) -- ++ (180:0.7);

\draw[thick,green] (A3) -- ++ (180:0.7);
\draw[thick,green] (A3) -- ++ (225:0.7);
\draw[thick,green] (A3) -- ++ (270:0.7);

\draw[thick,green] (A4) -- ++ (270:0.7);
\draw[thick,green] (A4) -- ++ (315:0.7);
\draw[thick,green] (A4) -- ++ (0:0.7);

\draw[->,>=stealth,very thick] (3.5,0) -- (5.5,0);

\draw (5.5,0) ++ (1,0) node {$\sum_{ \{ J_i \} }$};
\coordinate (B) at (8.5,0);
   \draw [domain=0:360,red,thick] plot ({8.5+cos(\x)}, {sin(\x)});
   
\draw (B) ++(-0.6,0);
\path (B) ++(60:1) coordinate (B1);
\path (B) ++(120:1) coordinate (B2);
\path (B) ++(240:1) coordinate (B3);
\path (B) ++(300:1) coordinate (B4);

\draw[thick,green] (B1) -- ++ (60:0.7) ++ (60:0.3) node [black] {$J_1$};
\draw[thick,green] (B2) -- ++ (120:0.7) ++ (120:0.3) node [black] {$J_2$};
\draw[thick,green] (B3) -- ++ (240:0.7) ++ (240:0.3) node [black] {$J_3$};
\draw[thick,green] (B4) -- ++ (300:0.7) ++ (300:0.3) node [black] {$J_4$};

\draw (A1) node[scale=0.7,blue] {$\bullet$};
\draw (A2) node[scale=0.7,blue] {$\bullet$};
\draw (A3) node[scale=0.7,blue] {$\bullet$};
\draw (A4) node[scale=0.7,blue] {$\bullet$};

\draw (O1) node[scale=0.7,blue] {$\bullet$};
\draw (O2) node[scale=0.7,blue] {$\bullet$};
\draw (O3) node[scale=0.7,blue] {$\bullet$};
\draw (O4) node[scale=0.7,blue] {$\bullet$};

\draw (B1) node[scale=0.7,blue] {$\bullet$};
\draw (B2) node[scale=0.7,blue] {$\bullet$};
\draw (B3) node[scale=0.7,blue] {$\bullet$};
\draw (B4) node[scale=0.7,blue] {$\bullet$};

\end{tikzpicture}
\caption{
The {\color{red}{red}} circle is the path $W$ to be acted by holonomy operator (left). The boundary edges {\color{green}{green}} recouple with bulk edges {\color{blue}{blue}} into bouquet spins $J_1,J_2,J_3,J_4$ at respect vertices (middle). Note that bouquet spins have superposition due to recoupling. The consequence of the loop holonomy operator amounts to acting on a trivalent graph (right).
}
\label{fig:bouquet}
\end{figure}

\begin{prop} \label{prop:Amplitudes-6jsymbols}
Given an oriented loop $W[v_1 \overset{ e_{1} }{\to} \cdots \overset{ e_{n-1} }{\to} v_n \overset{ e_{n} }{\to} v_1]$ on the graph $\Gamma$,
the loop holonomy operator $\wh{\chi_{\ell} } \, \act_{W}$ acts on the spin network basis, labeled by the spins $k_{i}$ on the loop edges and the bouquet spins $J_{i}$ on the loop vertices, by the following transition matrix:
%
\beq
\tensor{   {\Big[ Z(\Gamma)^{ \{J_i\} }_{\tensor{\chi}{_{\ell} }\, \act_{W}   } \Big] }   } {^{ \{ K_i \} } _{ \{ k_i \} } }
=
(-1)^{\sum_{i=1}^{n} (J_i +k_i +K_i +{\ell} ) }
\,
\overleftarrow{ \prod_{i=1}^{n} \,
\begin{Bmatrix}
J_{i} & K_{i} & K_{i+1} \\
\ell & k_{i+1} & k_{i}
\end{Bmatrix}
 }
\,
\prod_{i=1}^{n} \sqrt{ (2K_i+1)(2k_i+1) }
\,.
\label{eq:Amplitudes-6jsymbols}
\eeq
\end{prop}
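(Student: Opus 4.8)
The plan is to evaluate the matrix element \eqref{eq:Amplitudes} directly, inserting the explicit spin network wave-functions \eqref{eq:SpinNetwork} for the ket $\Psi_{\Gamma,\{I_v\}}$ and its conjugate bra, together with the character written as an ordered trace $\chi_\ell(G_W)=\tr\big[\,\overleftarrow{\prod_i}\,D^\ell(g_{e_i})\big]$. Since the holonomy multiplies only the Wigner matrices carried by the loop edges, I would first use the bouquet decomposition of \cite{Chen:2021vrc} to bundle, at each loop vertex $v_i$, all the non-loop edges into a single effective leg carrying the bouquet spin $J_i$. The non-loop data then become spectators that factor out and leave $J_i$ untouched, so that every vertex of the loop reduces to an effective trivalent intertwiner recoupling the two adjacent loop spins with $J_i$. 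This isolates the computation to the trivalent loop, with spins $k_i$ on the ket edges and $K_i$ on the bra edges.

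The next step is to carry out the $\SU(2)$ Haar integrations edge by edge. On each loop edge $e_i$ the integrand is the product $\overline{D^{K_i}(g_{e_i})}\,D^\ell(g_{e_i})\,D^{k_i}(g_{e_i})$; I would fuse the last two factors with the Clebsch--Gordan expansion \eqref{eq:HolonomyOperator-3j} and then integrate against $\overline{D^{K_i}}$ using the Peter--Weyl orthogonality \eqref{eq:Peter-Weyl}. This forces the recoupled spin to be $K_i$, cancels the factor $(2K_i+1)$ of the expansion against the $1/(2K_i+1)$ of the orthogonality relation, and leaves for each edge two Wigner $3j$ symbols: one carrying the magnetic indices that sit at the target vertex and one (conjugated) carrying those at the source vertex. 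The normalization prefactors $\sqrt{2k_i+1}$ and $\sqrt{2K_i+1}$ of \eqref{eq:SpinNetwork} pass through untouched and assemble into the announced $\prod_i\sqrt{(2K_i+1)(2k_i+1)}$.

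The core of the argument is then combinatorial. After the integrations I would regroup the surviving $3j$ symbols vertex by vertex: at each loop vertex exactly four $3j$ symbols share their magnetic indices locally --- the edge $3j$ from the incoming loop edge, the (conjugate) edge $3j$ from the outgoing loop edge, the original trivalent intertwiner, and the new intertwiner built from the bra --- while the two $\ell$-indices at that vertex are glued together by the character trace. Summing over the six locally contracted magnetic indices is precisely the tetrahedral contraction defining a $\{6j\}$ symbol built from the holonomy spin $\ell$, the two adjacent loop spins, their holonomy-shifted partners, and the bouquet spin; fixing the cyclic labelling of these data along the loop turns the per-vertex symbol into the factor $\begin{Bmatrix} J_i & K_i & K_{i+1}\\ \ell & k_{i+1} & k_i\end{Bmatrix}$, and the closure of the loop (with $k_{n+1}\equiv k_1$, $K_{n+1}\equiv K_1$) produces the ordered product $\overleftarrow{\prod_i}$.

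I expect the main obstacle to be bookkeeping rather than anything conceptual: one must fix a single convention for the $3j$ and $\{6j\}$ symbols and for the edge orientations, and then check that all magnetic-index-dependent phases cancel. Concretely, the phase $(-1)^{M_i-N_i}$ produced by \eqref{eq:HolonomyOperator-3j}, together with the phases $(-1)^{\sum(j-m)}$ and the column-permutation and index-reversal signs needed to cast the four $3j$ symbols into the standard $\{6j\}$ form, must combine so that every $m$-dependent sign drops out after summation, leaving only spin-dependent phases. Showing that the residual spin-dependent phases collapse across all vertices to the stated $(-1)^{\sum_i(J_i+k_i+K_i+\ell)}$ is the delicate step, and it is where a careful and globally consistent choice of recoupling conventions is essential.
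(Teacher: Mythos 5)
Your proposal is correct and follows essentially the same route as the paper: split the character into edge-wise Wigner matrices, fuse $D^{\ell}\otimes D^{k_i}$ via the Clebsch--Gordan decomposition \eqref{eq:HolonomyOperator-3j}, reduce each loop vertex to an effective trivalent intertwiner using the bouquet spin $J_i$, and recognize the per-vertex contraction of magnetic indices as a $\{6j\}$ symbol, with the weight factors assembling into $\prod_i\sqrt{(2K_i+1)(2k_i+1)}$. The only difference is presentational --- you carry out the Haar integrations and $3j$-symbol contractions algebraically where the paper argues graphically (figs.~\ref{fig:HolonomyOperator}--\ref{fig:LoopHolonomyOperator-6jsymbols}) --- and your closing remark about phase bookkeeping is no less explicit than the paper's own treatment of the sign $(-1)^{\sum_i(J_i+k_i+K_i+\ell)}$.
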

\begin{figure}[hbt!]
\begin{subfigure}[t]{1.0\linewidth}
\begin{tikzpicture}[scale=0.7]

\coordinate (O) at (0,0);

\draw[thick,blue] (O) -- 
++(3,0) node[midway,below=2.5] {$k[g]$};
\draw[red] (O) ++ (1.5,0.75) node [above] {$D^{\ell}_{ab}$} [dashed]-- ++(0,-0.65) node[midway,right] {$\otimes$};

\draw (O) ++(5.5,0) node {$= \ \displaystyle{ \sum_{ K=|k-{\ell} |}^{k+{\ell} } } \, (2K+1)$}
++ (2.1,0) coordinate (O1) [thick,blue]-- 
node[midway,below=2.5] {$k[\id]$}  ++ (2,0) -- 
node[midway,below=2.5] {$K[g]$} ++ (2,0) -- 
node[midway,below=2.5] {$k[\id]$} ++(2,0);

\draw[thick,red] (O1) ++ (2,0) node[scale=0.7] {$\bullet$} -- 
++(0,1.5) node[near end,left] {${\ell}[\id]$} node[above] {$b$};
\draw[thick,red] (O1) ++ (4,0) node[scale=0.7] {$\bullet$} -- 
++(0,1.5) node[near end,right] {${\ell}[\id]$} node[above] {$a$};

\end{tikzpicture}
%
%
\caption{The graphical illustration for the operation of holonomy operator. The red nodes are virtual vertices introduced by $D^{\ell}$.
}
\label{fig:HolonomyOperator}
\end{subfigure}
\begin{subfigure}[t]{1.0\linewidth}
\begin{tikzpicture}[scale=0.7]

\coordinate (O) at (0,0);
\coordinate (O3) at (-30:0.3);

\draw[thick] (O) -- ++(170:1.5);
\draw[thick] (O) -- ++(110:1.5);
\draw[thick] (O) -- ++(140:1.5);

\begin{knot}[
  consider self intersections=true,
  flip crossing=3,
]
\strand[thick,blue] (O) -- ++ (-50:1.5) --  ++ (-50:0.5);
\strand[thick,blue] (O) -- ++(-20:1.5) -- ++ (-20:0.5);
\strand[thick,blue] (O) -- node[midway,left] {$k_{i+1}$} ++(-90:1.5) ++ (-20:0.3) coordinate (O1);
\strand[thick,blue] (O) -- node[near end,above] {$k_{i}$} ++(20:1.5) ++ (-20:0.3) coordinate (O2);
\strand [line width=1pt,red,->,>=latex,rounded corners] (O2) -- (O3) -- (O1);
\end{knot}

\draw (O) node[scale=0.7,blue] {$\bullet$} ++(60:0.5)node {$v_i$};

\draw (O2) node[right,red] {$\widehat{\chi_{\ell} }$};

\draw[->,>=stealth,very thick] (O) ++(3,0) -- ++ (1.5,0);

\coordinate (A) at (9,0);

\draw (A)  ++ (-3,0) node {$ \displaystyle{ \sum_{ J_i } }$};

\draw[thick] (A) -- ++(170:1.5);
\draw[thick] (A) -- ++(110:1.5);
\draw[thick] (A) -- ++(140:1.5);

\draw[thick,blue] (A) -- ++(90:1.5);
\draw[thick,blue] (A) -- ++(190:1.5);

\draw[thick] (A) -- node[below,near start] {$J_i$} ++(-45:1) coordinate (A1);

\path (A1) ++ (-30:0.3) coordinate (A4);

\draw[thick,blue] (A1) -- node[midway,left] {$k_{i+1}$} ++(-90:1.5) ++ (-20:0.3) coordinate (A2);
\draw[thick,blue] (A1) -- node[midway,above] {$k_{i}$} ++(20:1.5) ++ (-20:0.3)coordinate (A3);

\draw [line width=1pt,red,->,>=latex,rounded corners] (A3) -- (A4) -- (A2);

\draw (A) node[scale=0.7] {$\bullet$};
\draw (A1) node[scale=0.7,blue] {$\bullet$};

\draw (A3) node[right,red] {$\widehat{\chi_{\ell} }$};

\end{tikzpicture}

\caption{For a vertex $v_i$ living along the $W$, its attached spins around either acted by, or not acted by the loop holonomy operator. For the spins which are not acted, they recouple into spin-$J_i$ attached to the vertex $v_i$.
}
\label{fig:BouquetSpins}
\end{subfigure}

\begin{subfigure}[t]{1.0\linewidth}
\begin{tikzpicture}[scale=0.7]

\coordinate (O) at (0,0);

\path (O) ++(18:2.5) coordinate (O1);
\path (O) ++(90:2.5) coordinate (O2);
\path (O) ++(162:2.5) coordinate (O3);
\path (O) ++(234:2.5) coordinate (O4);
\path (O) ++(306:2.5) coordinate (O5);

\draw[thick,blue] (O1) -- (O2) node[pos=0.25, inner sep=0pt, outer sep=0pt](A1) {}
node [pos=0.75, inner sep=0pt, outer sep=0pt] (B1) {}
--  (O3) node [pos=0.25, inner sep=0pt, outer sep=0pt](A2) {} 
node [pos=0.75, inner sep=0pt, outer sep=0pt](B2) {}
--  (O4) node [pos=0.25, inner sep=0pt, outer sep=0pt](A3) {}
node [pos=0.75, inner sep=0pt, outer sep=0pt](B3) {}
 -- (O5) node [pos=0.25, inner sep=0pt, outer sep=0pt](A4) {}
 node [pos=0.75, inner sep=0pt, outer sep=0pt](B4) {}
 -- (O1) node [pos=0.25, inner sep=0pt, outer sep=0pt](A5) {}
 node [pos=0.75, inner sep=0pt, outer sep=0pt](B5) {};
 
\draw[thick] (O1) node[scale=0.7,blue] {$\bullet$} --++(18:0.75);
\draw[thick] (O2) node[scale=0.7,blue] {$\bullet$} --++(90:0.75);
\draw[thick] (O3) node[scale=0.7,blue] {$\bullet$} --++(162:0.75);
\draw[thick] (O4) node[scale=0.7,blue] {$\bullet$} --++(234:0.75);
\draw[thick] (O5) node[scale=0.7,blue] {$\bullet$} --++(306:0.75);

\draw[thick,red] ($(A1)!0!90:(O1)$) node[scale=0.7] {$\bullet$} -- ($(A1)!0.65cm!-90:(O1)$);
\draw[thick,red] ($(B1)!0!90:(O1)$) node[scale=0.7] {$\bullet$} -- ($(B1)!0.65cm!-90:(O1)$);

\draw[thick,red] ($(A2)!0!90:(O2)$) node[scale=0.7] {$\bullet$} -- ($(A2)!0.65cm!-90:(O2)$);
\draw[thick,red] ($(B2)!0!90:(O2)$) node[scale=0.7] {$\bullet$} -- ($(B2)!0.65cm!-90:(O2)$);

\draw[thick,red] ($(A3)!0!90:(O3)$) node[scale=0.7] {$\bullet$} -- ($(A3)!0.65cm!-90:(O3)$);
\draw[thick,red] ($(B3)!0!90:(O3)$) node[scale=0.7] {$\bullet$} -- ($(B3)!0.65cm!-90:(O3)$);

\draw[thick,red] ($(A4)!0!90:(O4)$) node[scale=0.7] {$\bullet$} -- ($(A4)!0.65cm!-90:(O4)$);
\draw[thick,red] ($(B4)!0!90:(O4)$) node[scale=0.7] {$\bullet$} -- ($(B4)!0.65cm!-90:(O4)$);

\draw[thick,red] ($(A5)!0!90:(O5)$) node[scale=0.7] {$\bullet$} -- ($(A5)!0.65cm!-90:(O5)$);
\draw[thick,red] ($(B5)!0!90:(O5)$) node[scale=0.7] {$\bullet$} -- ($(B5)!0.65cm!-90:(O5)$);

\draw[->,>=stealth,very thick] (O) ++(4.5,0) -- node[above] {$\tr$} ++ (2,0);

\path (O) ++ (11,0) coordinate (P);

\path (P) ++(18:2.5) coordinate (P1);
\path (P) ++(90:2.5) coordinate (P2);
\path (P) ++(162:2.5) coordinate (P3);
\path (P) ++(234:2.5) coordinate (P4);
\path (P) ++(306:2.5) coordinate (P5);

\draw[thick] (P1) node[scale=0.7,blue] {$\bullet$} --++(18:0.75);
\draw[thick] (P2) node[scale=0.7,blue] {$\bullet$} --++(90:0.75);
\draw[thick] (P3) node[scale=0.7,blue] {$\bullet$} --++(162:0.75);
\draw[thick] (P4) node[scale=0.7,blue] {$\bullet$} --++(234:0.75);
\draw[thick] (P5) node[scale=0.7,blue] {$\bullet$} --++(306:0.75);

\draw[thick,blue] (P1) -- (P2) node[pos=0.25, inner sep=0pt, outer sep=0pt](C1) {}
node [pos=0.75, inner sep=0pt, outer sep=0pt] (D1) {}
--  (P3) node [pos=0.25, inner sep=0pt, outer sep=0pt](C2) {} 
node [pos=0.75, inner sep=0pt, outer sep=0pt](D2) {}
--  (P4) node [pos=0.25, inner sep=0pt, outer sep=0pt](C3) {}
node [pos=0.75, inner sep=0pt, outer sep=0pt](D3) {}
 -- (P5) node [pos=0.25, inner sep=0pt, outer sep=0pt](C4) {}
 node [pos=0.75, inner sep=0pt, outer sep=0pt](D4) {}
 -- (P1) node [pos=0.25, inner sep=0pt, outer sep=0pt](C5) {}
 node [pos=0.75, inner sep=0pt, outer sep=0pt](D5) {};

\draw[thick,red] (D5) node[scale=0.7] {$\bullet$} to [bend left=60] (C1) node[scale=0.7] {$\bullet$};
\draw[thick,red] (D1) node[scale=0.7] {$\bullet$} to [bend left=60] (C2) node[scale=0.7] {$\bullet$};
\draw[thick,red] (D2) node[scale=0.7] {$\bullet$} to [bend left=60] (C3) node[scale=0.7] {$\bullet$};
\draw[thick,red] (D3) node[scale=0.7] {$\bullet$} to [bend left=60] (C4) node[scale=0.7] {$\bullet$};
\draw[thick,red] (D4) node[scale=0.7] {$\bullet$} to [bend left=60] (C5) node[scale=0.7] {$\bullet$};

\end{tikzpicture}

\caption{The graphical illustration for loop holonomy operator. The $D^{\ell}$ amounts to generates bubbles around the corners.
}
\label{fig:LoopHolonomyOperator}
\end{subfigure}
\begin{subfigure}[t]{1.0\linewidth}
\begin{tikzpicture}[scale=0.7]

\coordinate (O) at (0,0);

\path (O) --++(180:1) coordinate (A);
\path (O) --++(30:1.5) coordinate (B);
\path (O) --++(330:1.5) coordinate (C);

\draw[thick] (O) -- (A) ++ (180:0.3) node {$J_1$};
\draw[thick,blue] (O)node[scale=0.7] {$\bullet$} --node [sloped,above,near start] {$k_2$} (B) node[pos=0.65, inner sep=0pt, outer sep=0pt](B1) {} ++ (30:0.35) node {$K_2$};
\draw[thick,blue] (O) -- node [sloped,below,near start] {$k_3$} (C) node[pos=0.65, inner sep=0pt, outer sep=0pt](C1) {} ++ (330:0.35) node {$K_3$};

\draw[thick,red] (B1) node[scale=0.7] {$\bullet$} -- node[right] {${\ell}$} (C1) node[scale=0.7] {$\bullet$};

\draw (O) ++ (3,0) node{$=$} ++(3,0) coordinate (P);

\path (P) --++(180:1) coordinate (D);
\path (P) --++(45:1.2) coordinate (E);
\path (P) --++(315:1.2) coordinate (F);

\draw[thick] (P) -- (D) ++ (180:0.3) node {$J_1$};
\draw[thick,blue] (P)node[scale=0.7] {$\bullet$} -- (E) ++ (30:0.35) node {$K_2$};
\draw[thick,blue] (P) -- (F) ++ (330:0.35) node {$K_3$};

\draw (P) ++ (2,0) node{$\times$} ++(3,0) coordinate (Q);

\draw[thick] (Q) -- node[right, near start]{$J_1$} ++(90:1.5) coordinate (D1);
\draw[thick,blue] (Q) -- node[below, near start]{$k_3$} ++(210:1.8) coordinate(E1);
\draw[thick,blue] (Q) -- node[below, near start]{$k_2$} ++(-30:1.8) coordinate (F1);

\draw[thick,red] (E1) -- node[below] {$\ell$} (F1);

\draw[thick,blue] (D1) --node[left,midway]{$K_3$}  (E1) ;
\draw[thick,blue] (D1) -- node[right,midway]{$K_2$} (F1);

\end{tikzpicture}

\caption{Each corner brings a $6j$-symbol to wave function.
}
\label{fig:LoopHolonomyOperator-6jsymbols}
\end{subfigure}

\caption{The graphical representations for holonomy and loop holonomy operator respectively.
}
\label{fig:LoopHolonomyOperator-Spinfoam}
\end{figure}

%
\begin{proof}
The proof is a straightforward spin recoupling computation.
%
%
We split the character ${\chi_{\ell} }(G_{W})$ into local Wigner D-matrices for each edge of the loop,
\be
\label{eqn:character}
{\chi_{\ell} }(G_{W})=
\overleftarrow{ \prod_{i=1}^{n} D^{\ell}_{m_{i} m_{i+1} } (g_{i}) }
\,,
\qquad \text{cycling} \quad n+1\equiv 1
\,,
\ee
then apply the formula (\ref{eq:HolonomyOperator-3j}) for the holonomy operators. As illustrated on fig.\ref{fig:LoopHolonomyOperator-Spinfoam}, the bulk spins $k_{i}$  undergo a spin shift  $k_i \to K_i$ with the output spins $K_i$ are bounded by triangular inequalities $| k_i -{\ell} | \leq K_i \leq k_i+{\ell}$, while the boundary spins $j_{i}$ and thus the recouped bouquet spins $J_{i}$ are left unchanged. Hence we  use spins the $\{ j_i,J_{i} \} \,, \{k_i\}$ to label the initial intertwiners $\{I_v\}$  and $\{ j_{i},J_{i} \}, \{ K_{i} \}$ to label the final intertwiners $\{I'_{v}\}$,
\beq
\tensor{   {\Big[ Z(\Gamma)^{ \{j_i,J_{i}\} }_{\tensor{\chi}{_{\ell}}\, \act_{W}   } \Big] }   } {^{ \{ K_i \} } _{ \{ k_i \} } }
\equiv
\tensor{   {\Big[ \tensor{Z(\Gamma) }{ _{\tensor{\chi}{_{\ell}}\, \act_{W}    }    } \Big] }   } {^{ \{ I'_{v} \} } _{ \{ I_{v} \} } }
\,. \nn
\eeq
As drawn on fig.\ref{fig:HolonomyOperator}, the action of a holonomy operator on an edge amounts to creating two open edges with weight factor $(2K_{i}+1)$.
Summing over the magnetic indices  of the character formula \eqref{eqn:character} amounts to linking those open edges and closing the corners all around the loop.
This leads to a $6j$-symbol factor for each corner/vertex as shown on fig.\ref{fig:LoopHolonomyOperator-6jsymbols}, leading to the overall  amplitude for the loop holonomy operator $\wh{\chi_{\ell} } \, \act_{W} \,$ given by
\beq
 \label{eq:WaveFunction-Amplitudes-6jsymbols}
 \tensor{   {\Big[ \cW(\Gamma)^{ \{j_i,J_{i} \} }_{\tensor{\chi}{_{\ell} }\, \act_{W}   } \Big] }   } {^{ \{ K_i \} } _{ \{ k_i \} } }
=
(-1)^{\sum_{i=1}^{n} (J_i+k_i +K_i +{\ell} ) }
\,
\overleftarrow{ \prod_{i=1}^{n} \,
\begin{Bmatrix}
{J}_i & K_{i} & K_{i+1} \\
\ell & k_{i+1} & k_{i}
\end{Bmatrix}
 }
\,
\prod_{i=1}^{n} (2K_i+1)
\,, \qquad \text{with} \quad n+1\equiv 1
\,,
\eeq
which leads to the desired $Z$'s representation (\ref{eq:Amplitudes-6jsymbols}).

\end{proof}
The proposition shows that the dynamics generated by the loop holonomy operator depends at each vertex on three spins: the two spins living on the edges attached to the vertex on the loop and the bouquet spin encoding the recoupling of the spin living on all the other edges attached to the considered vertex.
%

\medskip

We can provide the action of the loop holonomy operator with geometrical meaning in the semi-classical regime at the large spin limit.
This is provided by  the asymptotic behavior of $Z$, which can be expressed in terms of the angles of the triangles dual to the spin network graph, as shown on figure \ref{fig:LoopHolonomyOperator-DualTriangulation}. More precisely, considering that the spins $\{ j_i,k_i \}$ of spin network are much larger than the spin-$\ell$ of holonomy operator, we employ Edmonds's asymptotic formula and get the following corollary of the previous proposition:
\begin{coro}
If the bulk spins $\{ k_i \}$ and boundary spins $\{j_i\}$ much larger than $\ell$, i.e., $\{ j_i \,, k_i \} \gg \ell$, matrix $Z$ has following asymptotic formula in terms of Wigner d-matrices:
\beq
\tensor{   {\Big[ Z(\Gamma)^{ \{j_i\} }_{\tensor{\chi}{_{\ell} }\, \act_{W}   } \Big] }   } {^{ \{ K_i \} } _{ \{ k_i \} } }
\approx
\overleftarrow{ \prod_{i=1}^{n}   d^{\ell}_{\veps_{i} \veps_{i+1} }( \theta_{i} ) }
\,, \quad \text{with} \quad \veps_i=K_i-k_i
\,, \quad
\cos\theta_{i}
=
\f{ k_i(k_i+1) +k_{i+1}(k_{i+1}+1) - j_i(j_i+1)  }{ 2 \sqrt{ k_{i} (k_{i}+1) k_{i+1} (k_{i+1}+1) } }
\,.
\label{eq:Angle-Asymptotic-6j}
\eeq
\end{coro}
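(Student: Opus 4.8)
The plan is to take the exact transition matrix furnished by Proposition \ref{prop:Amplitudes-6jsymbols} and feed each $\{6j\}$ symbol into Edmonds's asymptotic formula, which is tailored precisely to the regime $\{j_i,k_i\}\gg\ell$: one column of the symbol carries a small spin $\ell$ while the remaining entries are large and constrained to form a triangle. In that limit the symbol degenerates to a single Wigner (small) $d$-matrix element, whose magnetic labels are the shifts the large spins undergo upon recoupling with $\ell$, and whose rotation angle is dictated by the classical geometry of the large-spin triangle. So the whole corollary is really the statement ``apply Edmonds term by term and watch the scalar factors collapse.''

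Concretely, I would read off the triangular structure of $\begin{Bmatrix} J_i & K_i & K_{i+1} \\ \ell & k_{i+1} & k_i \end{Bmatrix}$: the small spin $\ell$ couples $K_i$ to $k_i$ and $K_{i+1}$ to $k_{i+1}$, so the natural magnetic labels are exactly $\veps_i=K_i-k_i$ and $\veps_{i+1}=K_{i+1}-k_{i+1}$, while the large triangle $(J_i,k_i,k_{i+1})$ fixes the angle. Edmonds's formula then gives
\[
\begin{Bmatrix} J_i & K_i & K_{i+1} \\ \ell & k_{i+1} & k_i \end{Bmatrix}
\approx
\f{(-1)^{J_i+k_i+K_i+\ell}}{\sqrt{(2k_i+1)(2k_{i+1}+1)}}\,d^{\ell}_{\veps_i \veps_{i+1}}(\theta_i),
\]
with $\cos\theta_i$ supplied by the law of cosines on that triangle, i.e. the stated ratio of Casimirs $k_i(k_i+1)$, $k_{i+1}(k_{i+1}+1)$ and the bouquet spin. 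Identifying $\theta_i$ as the vertex angle dual to the triangulation (fig.\ref{fig:LoopHolonomyOperator-DualTriangulation}) supplies the geometric reading.

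The remainder is bookkeeping of the scalar factors, which I would bundle into three steps. First, the Edmonds denominators: taking the ordered product around the loop, each $\sqrt{2k_i+1}$ appears twice cyclically — once as the ``$k_i$'' factor of the $i$-th symbol and once as the ``$k_{i+1}$'' factor of the $(i-1)$-th — so $\prod_i[(2k_i+1)(2k_{i+1}+1)]^{-1/2}=\prod_i(2k_i+1)^{-1}$. Second, the explicit prefactor $\prod_i\sqrt{(2K_i+1)(2k_i+1)}$ of Proposition \ref{prop:Amplitudes-6jsymbols} reduces, using $K_i=k_i+\veps_i$ with $\veps_i=O(\ell)$ negligible against $k_i$, to $\prod_i(2k_i+1)$, cancelling the denominators exactly. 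Third, the global sign $(-1)^{\sum_i(J_i+k_i+K_i+\ell)}$ must annihilate against the product of Edmonds phases; choosing the per-term phase above so that it matches $(-1)^{J_i+k_i+K_i+\ell}$ factor by factor, the two sign sources combine to $+1$, leaving only the ordered product $\overleftarrow{\prod_i}\,d^{\ell}_{\veps_i \veps_{i+1}}(\theta_i)$, whose ordering is inherited directly from the ordered product of symbols (a cyclic matrix product of $d$-matrices).

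The main obstacle is \emph{not} the cancellation of magnitudes — that is forced by the double-counting argument above — but pinning down the exact phase convention and the exact magnetic-index assignment in Edmonds's formula, since sign and ordering conventions for $\{6j\}$ symbols and for $d^{\ell}_{mm'}$ differ across references and must be made consistent with those used in Proposition \ref{prop:Amplitudes-6jsymbols}. I would therefore fix one convention at the outset, verify the phase cancellation and the $\veps_i\to\veps_{i+1}$ index flow on the smallest nontrivial loop $n=2$ (where the ordered product is a single matrix square), and only then assert the general cyclic product. A secondary subtlety is that Edmonds's relation holds only to leading order in $\ell/k_i$, so the ``$\approx$'' is the honest statement, with corrections suppressed by inverse powers of the large spins.
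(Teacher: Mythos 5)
Your proposal is correct and follows essentially the same route as the paper: apply Edmonds's asymptotic formula to each $\{6j\}$ symbol in the exact expression of Proposition \ref{prop:Amplitudes-6jsymbols}, identify the magnetic labels with the spin shifts $\veps_i=K_i-k_i$ and the angle with the law of cosines on the large triangle $(J_i,k_i,k_{i+1})$, and observe that the Edmonds denominators combine cyclically with the prefactor $\prod_i\sqrt{(2K_i+1)(2k_i+1)}$ to leave $\prod_i\sqrt{(2K_i+1)/(2k_i+1)}\to 1$ while the phases cancel against $(-1)^{\sum_i(J_i+k_i+K_i+\ell)}$ (your per-term phase differs from the exact one by a cyclic relabeling $k_i\leftrightarrow k_{i+1}$, but the product around the loop agrees, as you anticipated in flagging the convention check).
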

\begin{proof}
Using Edmonds's asymptotic formula (see equations (3.6) and (3.7) in \cite{osti_4824659}) for $6j$-symbols,
\be \label{eq:EdmondsAsymptoticFormula}
\begin{Bmatrix}
c & a & b \\
{\ell} & b +\veps_2 & a+\veps_1
\end{Bmatrix}
\approx
\f{(-1)^{a+b+c+{\ell}+\veps_2} }{ \sqrt{ (2a+1)(2b+1) } } d^{\ell}_{ \veps_2 \veps_1 } (\theta)
\,, \quad \text{with} \quad
\cos\theta=\f{ a(a+1)+b(b+1)-c(c+1) }{ 2\sqrt{ a(a+1)b(b+1) } }
\,,
\ee
where $d^{\ell}$ is the reduced Wigner d-matrix for spin-${\ell}$,
\be
d^{\ell}_{ m n}(\theta)
=
\sqrt{ ({\ell}+m)!({\ell}-m)!({\ell}+n)!({\ell}-n)! }
\sum_{s} \f{ (-1)^{m-n+s} \big( \cos\f{\theta}{2} \big)^{2{\ell}-m+n-2s} \big( \sin\f{\theta}{2} \big)^{m-n-2s} }{ ({\ell}+n-s)!s!(m-n+s)!({\ell}-m-s)! }
\,.
\ee
Inserting this asymptotic approximation of the Wigner d-matrices in the expression (\ref{eq:Amplitudes-6jsymbols}) of the $6j$-symbols, we get the wanted formula (\ref{eq:Angle-Asymptotic-6j}) up to a pre-factor $\sqrt{ \f{ 2K_{i}+1 }{ 2k_{i}+1 }  }$, which actually goes to 1 when $k_i,K_{i}$ are  large compared to $\ell$.

\end{proof}
\begin{figure}[t]
\begin{subfigure}[t]{0.6\linewidth}
\begin{tikzpicture}[scale=0.7]

\coordinate (O) at (0,0);

\path (O) ++(18:1.5) coordinate (O1);
\path (O) ++(90:1.5) coordinate (O2);
\path (O) ++(162:1.5) coordinate (O3);
\path (O) ++(234:1.5) coordinate (O4);
\path (O) ++(306:1.5) coordinate (O5);

\draw[thick] (O1) node[scale=0.7,blue] {$\bullet$} --++(18:0.75);
\draw[thick] (O2) node[scale=0.7,blue] {$\bullet$} --++(90:0.75);
\draw[thick] (O3) node[scale=0.7,blue] {$\bullet$} --++(162:0.75);
\draw[thick] (O4) node[scale=0.7,blue] {$\bullet$} --++(234:0.75);
\draw[thick] (O5) node[scale=0.7,blue] {$\bullet$} --++(306:0.75);

\draw[thick,blue] (O1) -- (O2) node[pos=0.25, inner sep=0pt, outer sep=0pt](A1) {}
node [pos=0.75, inner sep=0pt, outer sep=0pt] (B1) {}
--  (O3) node [pos=0.25, inner sep=0pt, outer sep=0pt](A2) {} 
node [pos=0.75, inner sep=0pt, outer sep=0pt](B2) {}
--  (O4) node [pos=0.25, inner sep=0pt, outer sep=0pt](A3) {}
node [pos=0.75, inner sep=0pt, outer sep=0pt](B3) {}
 -- (O5) node [pos=0.25, inner sep=0pt, outer sep=0pt](A4) {}
 node [pos=0.75, inner sep=0pt, outer sep=0pt](B4) {}
 -- (O1) node [pos=0.25, inner sep=0pt, outer sep=0pt](A5) {}
 node [pos=0.75, inner sep=0pt, outer sep=0pt](B5) {};

\path (O) ++(-18:2.3) coordinate (C1);
\path (O) ++(54:2.3) coordinate (C2);
\path (O) ++(126:2.3) coordinate (C3);
\path (O) ++(198:2.3) coordinate (C4);
\path (O) ++(270:2.3) coordinate (C5);
\path (O) ++(342:2.3) coordinate (C6);

\draw[thick,brown] (C1) --(C2) -- (C3) --(C4) -- (C5) --(C6);
\draw[thick,brown] (O) --(C1);
\draw[thick,brown] (O) --(C2);
\draw[thick,brown] (O) --(C3);
\draw[thick,brown] (O) --(C4);
\draw[thick,brown] (O) --(C5);
\draw[thick,brown] (O) --(C6);

\draw[thick] (O) node[scale=0.7,brown] {$\bullet$};

\pic [draw, ->, "$\theta_1$", angle radius=0.3cm, angle eccentricity=1.75] {angle = C1--O--C2};
\pic [draw, ->, "$\theta_2$", angle radius=0.3cm, angle eccentricity=1.75] {angle = C2--O--C3};
\pic [draw, ->, "$\theta_3$", angle radius=0.3cm, angle eccentricity=1.75] {angle = C3--O--C4};
\pic [draw, ->, "$\theta_4$", angle radius=0.3cm, angle eccentricity=1.75] {angle = C4--O--C5};
\pic [draw, ->, "$\theta_5$", angle radius=0.3cm, angle eccentricity=1.75] {angle = C5--O--C6};

\end{tikzpicture}

\caption{The {\color{brown}{brown}} lines portray the dual triangulation of spin network. The angles $\theta_i$ are arguments of Wigner d-matrices $d^{\ell}$ in asymptotic formula (\ref{eq:Angle-Asymptotic-6j}).
}
\label{fig:LoopHolonomyOperator-DualTriangulation}
\end{subfigure}
\begin{subfigure}[t]{0.35\linewidth}
\begin{tikzpicture}[scale=0.7]

\coordinate (P) at (11,0);
\coordinate (A) at (11,2);
\coordinate (B) at (10,-0.5);
\coordinate (C) at (12,-0.5);

\draw[brown] (A) -- (B) -- (C) -- cycle;

\draw (P) -- ++ (160:1.5) ++ (160:0.35) node {$a$};

\draw (P) -- ++ (25:1.6) ++ (25:0.35) node {$b$};

\draw (P) -- ++ (270:1.0) ++ (270:0.35) node {$c$} ;

\draw (P) node[scale=0.7] {$\bullet$};

\pic [draw, "$\theta$", angle radius=0.3cm, angle eccentricity=1.75] {angle = B--A--C};

\end{tikzpicture}

\caption{The angle variable in Edmonds’s asymptotic formula (\ref{eq:EdmondsAsymptoticFormula}), where $a,b,c$ are spins.
}
\label{fig:EdmondsAsymptoticFormula}
\end{subfigure}
\caption{The illustration for asymptotic formula.}
\end{figure}
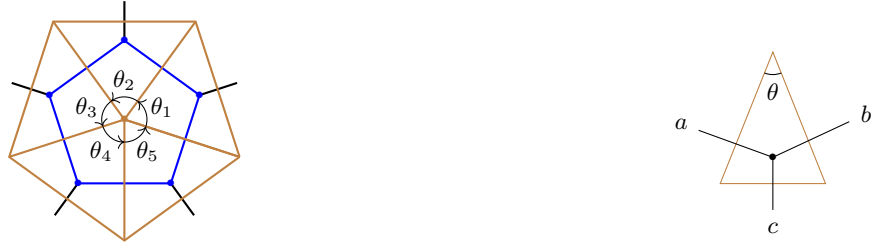
This approximation expresses the arguments $\{j_i\,k_i,K_i \}$ in terms of corner angles $\theta_i$ and spin shiftings $\veps_i=K_i-k_i$, as illustrated in {{fig. \ref{fig:EdmondsAsymptoticFormula}}}.
%
In addition, one can check the large spin approximation (\ref{eq:Angle-Asymptotic-6j})  inherits the same exact composition rule as the exact amplitude (\ref{eq:Composition-Amplitudes}).

For instance, in the case $n=2$, to ease the notations,  the composition rule \eqref{eq:Composition-Amplitudes}  for the $Z$-matrix elements reads
\beq
&&\sum_{ K_1,K_2}
(-1)^{k_1+k_2-\tk_1-\tk_2+2\ell_1-2\ell_2}
  \begin{Bmatrix}
   j_1 & k_1 & k_2 \\
   \ell_1 &  K_2 & K_1
  \end{Bmatrix}
\begin{Bmatrix}
   j_{2} & k_1 & k_2 \\
   \ell_1 &  K_2 & K_1
  \end{Bmatrix}
  \begin{Bmatrix}
   j_1 & \tk_1 & \tk_2 \\
   \ell_2 &  K_2 & K_1
  \end{Bmatrix}
\begin{Bmatrix}
   j_{2} & \tk_1 & \tk_2 \\
   \ell_2 &  K_2 & K_1
  \end{Bmatrix}
  \nn \\
&& \times \sqrt{(2k_1+1)(2k_2+1)(2\tk_1+1)(2\tk_2+1)}(2K_1+1)(2K_2+1)
\nn \\
&=&
  \sum_{ s=|\ell_1-\ell_2|}^{\ell_1+\ell_2}
(-1)^{j_{1}+j_{2}+\tk_1+\tk_2+k_1+k_2+2s}
\begin{Bmatrix}
   j_1 & k_1 & k_2 \\
   s &  \tk_2 & \tk_1
  \end{Bmatrix}
\begin{Bmatrix}
   j_{2} & k_1 & k_2 \\
   s &  \tk_2 & \tk_1
  \end{Bmatrix}
  \sqrt{(2k_1+1)(2k_2+1)(2\tk_1+1)(2\tk_2+1)}
\,,  \label{eq:CompositeRelation-candygraph-6js}
\eeq
which is equivalent to Biedenharn–Elliott identity (see e.g. \cite{Bonzom:2009zd}).
For large spins $\{ j_i \,, k_i \} \gg \ell$, one can apply Edmonds’s asymptotic formula for $6j$-symbols to get:
\be
\sum_{\veps_1,\veps_2}
d^{\ell_1}_{\veps_1 \veps_2}(\theta_1) \, d^{\ell_1}_{\veps_2 \veps_1}(\theta_2) \, d^{\ell_2}_{\veps_1+\Delta_1, \veps_2+\Delta_2}(\theta_1) \, d^{\ell_2}_{\veps_2+\Delta_2, \veps_1+\Delta_1}(\theta_2)
\approx
\sum_{s=| \ell_1-\ell_2| }^{\ell_1+\ell_2}
d^{s}_{\Delta_1 , \Delta_2}(\theta_1) \, d^{s}_{\Delta_2, \Delta_1}(\theta_2)
\,.
\ee
Below we show that this approximative composition rule actually holds exctaly.
%
\begin{prop}
Suppose there is $n$ vertices along loop $W$, then the composite rule for eqn.(\ref{eq:Angle-Asymptotic-6j}) is
\be
\sum_{ \{ \veps_i \} }
\overleftarrow{ \prod_{i=1}^{n} d^{\ell_1}_{\veps_i , \veps_{i+1} } (\theta_{i}) \, d^{\ell_2}_{\veps_i + \Delta_i, \veps_{i+1} + \Delta_{i+1} } (\theta_{i}) }
=
\sum_{s=| \ell_1-\ell_2| }^{\ell_1+\ell_2}
\overleftarrow{ \prod_{i=1}^{n} d^{s}_{\Delta_i , \Delta_{i+1} }(\theta_i) }
\,,
\qquad \text{cycling} \quad n+1\equiv 1
\,.
\label{eq:CompositeRule-Wigner-small-d}
\ee
\end{prop}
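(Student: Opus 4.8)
The plan is to recast both sides of \eqref{eq:CompositeRule-Wigner-small-d} as traces of \emph{the same} ordered product of rotations on the tensor-product space $\cV_{\ell_1}\ot\cV_{\ell_2}$, the two sides differing only by which family of magnetic-number projectors is inserted, and then to relate the two families by a single similarity transformation that leaves the rotations untouched. Throughout I would use $d^{\ell}_{mn}(\theta)=\la\ell m|e^{-\ri\theta J_y}|\ell n\ra$ and write $T_i:=D^{\ell_1}(e^{-\ri\theta_i J_y})\ot D^{\ell_2}(e^{-\ri\theta_i J_y})$ for the spin-$(\ell_1\ot\ell_2)$ rotation by $\theta_i$ about the $y$-axis.

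First I would handle the left-hand side. Setting $|\veps;\Delta\ra:=|\ell_1\veps\ra\ot|\ell_2,\veps+\Delta\ra$, each pair of small-$d$'s at a corner is the matrix element $\la\veps_i;\Delta_i|T_i|\veps_{i+1};\Delta_{i+1}\ra$, and summing the intermediate $\veps_i$ inserts the resolution $\sum_\veps|\veps;\Delta\ra\la\veps;\Delta|=P_{-\Delta}$, the orthogonal projector onto the eigenspace $J_z^{(1)}-J_z^{(2)}=-\Delta$. The closed loop then telescopes (with the cyclic closure supplying the last projector) into
\[
\mathrm{LHS}=\tr_{\cV_{\ell_1}\ot\cV_{\ell_2}}\big[P_{-\Delta_1}T_1\,P_{-\Delta_2}T_2\cdots P_{-\Delta_n}T_n\big].
\]

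For the right-hand side I would read $\prod_i d^{s}_{\Delta_i\Delta_{i+1}}(\theta_i)=\tr_{\cV_s}[Q^{(s)}_{\Delta_1}D^s(e^{-\ri\theta_1 J_y})\cdots Q^{(s)}_{\Delta_n}D^s(e^{-\ri\theta_n J_y})]$ with the rank-one projectors $Q^{(s)}_{\Delta}=|s,\Delta\ra\la s,\Delta|$. Summing over $s$ and invoking the Clebsch--Gordan decomposition $\cV_{\ell_1}\ot\cV_{\ell_2}=\bigoplus_s\cV_s$, under which both $T_i=\bigoplus_s D^s(e^{-\ri\theta_i J_y})$ and the total-$J_z$ projector $\Pi_\Delta:=\sum_s|s,\Delta\ra\la s,\Delta|=\sum_{m_1+m_2=\Delta}|\ell_1 m_1,\ell_2 m_2\ra\la\ell_1 m_1,\ell_2 m_2|$ are block-diagonal, reassembles the sum of block traces into a single tensor-product trace
\[
\mathrm{RHS}=\tr_{\cV_{\ell_1}\ot\cV_{\ell_2}}\big[\Pi_{\Delta_1}T_1\,\Pi_{\Delta_2}T_2\cdots \Pi_{\Delta_n}T_n\big].
\]
The two traces carry identical rotation insertions and differ only in that the left projects onto fixed $m_1-m_2$ while the right projects onto fixed $m_1+m_2$. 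The final step is to conjugate by $V_1:=e^{-\ri\pi J_y}\ot\id$, the $\pi$-rotation about $y$ acting on the first factor alone: it sends $m_1\mapsto-m_1$, hence maps the subspace $\{m_1+m_2=\Delta\}$ bijectively onto $\{m_1-m_2=-\Delta\}$, giving $V_1\Pi_\Delta V_1^{-1}=P_{-\Delta}$, while $V_1 T_i V_1^{-1}=T_i$ because $e^{-\ri\pi J_y}$ and $e^{-\ri\theta_i J_y}$ are rotations about the \emph{same} axis and therefore commute. Conjugation-invariance of the trace then yields $\mathrm{RHS}=\mathrm{LHS}$ exactly, for arbitrary $n$.

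I expect the main obstacle to be resisting the temptation to expand each corner in Clebsch--Gordan coefficients site by site: doing so introduces an \emph{independent} recoupled spin at every corner and one gets stuck trying to force these to coincide. The correct move is to keep the whole loop as one trace and let the block-diagonality of $\Pi_\Delta$ and $T_i$ in the total-spin decomposition do the collapsing, after which the identity reduces to the elementary observation that a $\pi$-flip about the rotation axis in one tensor factor interchanges the ``difference'' and ``sum'' magnetic sectors without disturbing the dynamics. That flip is precisely the representation-theoretic content of the Wigner symmetry $d^{\ell_1}_{-m_1,-n_1}(\theta)=(-1)^{m_1-n_1}d^{\ell_1}_{m_1 n_1}(\theta)$, and a small amount of care is needed only to confirm that it indeed carries $\Pi_\Delta$ to $P_{-\Delta}$.
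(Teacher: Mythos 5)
Your proof is correct, and it takes a genuinely different route from the paper's. The paper works at the level of components: it first flips the magnetic indices of $d^{\ell_1}$ via $d^{j}_{mn}(\theta)=(-1)^{m-n}d^{j}_{-m,-n}(\theta)$, then applies the Clebsch--Gordan series to the product $d^{\ell_1}\otimes d^{\ell_2}$ separately at each corner -- introducing an independent recoupled spin $s_i$ and a pair of $3j$-symbols per corner -- and finally uses the orthogonality relation of the $3j$-symbols under the sums over the $\veps_i$ to force all the corner labels to coincide and to strip the $3j$-symbols away; this is carried out explicitly for $n=2$ and asserted to generalize. Amusingly, this is exactly the ``site-by-site'' expansion you warn against: it is not a dead end, because the $3j$-orthogonality does the gluing, but it does require the phase bookkeeping and an induction (or at least a claim) to reach general $n$. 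Your operator-level argument -- writing both sides as traces on $\cV_{\ell_1}\otimes\cV_{\ell_2}$ of the same ordered product of $y$-rotations, with the two sides distinguished only by inserting projectors onto fixed $m_1-m_2$ versus fixed $m_1+m_2$, and then conjugating by $e^{-\ri\pi J_y}\otimes\id$, which swaps the two families of projectors while commuting with all the rotations -- dispenses with the $3j$-symbols and the phases entirely and proves the identity uniformly for all $n$ in one stroke. The trade-off is that the paper's derivation stays within the spin-recoupling calculus used everywhere else in the text, whereas yours isolates the underlying representation-theoretic mechanism (block-diagonality of rotations in the total-spin decomposition plus a $\pi$-flip about the rotation axis) more transparently.
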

\begin{proof}
Let us consider the simplest case $n=2$. The proof is straightforward to generalize to arbitrary $n$. We would like to prove the following composition rule:
\be
\sum_{\veps_1,\veps_2}
d^{\ell_1}_{\veps_1 \veps_2}(\theta_1) \, d^{\ell_1}_{\veps_2 \veps_1}(\theta_2) \, d^{\ell_2}_{\veps_1+\Delta_1, \veps_2+\Delta_2}(\theta_1) \, d^{\ell_2}_{\veps_2+\Delta_2, \veps_1+\Delta_1}(\theta_2)
=
\sum_{s=| \ell_1-\ell_2| }^{\ell_1+\ell_2}
d^{s}_{\Delta_1 , \Delta_2}(\theta_1) \, d^{s}_{\Delta_2, \Delta_1}(\theta_2)
\,.
\label{eq:CompositeRule-Wigner-small-d-n=2}
\ee
This equation can be proven by means of recoupling Wigner d-matrices. Firstly, we flip the sign of magnetic indices in $d^{\ell_1}$ by equation $d^{j}_{mn}(\theta)=(-1)^{m-n}d^{j}_{-m,-n}(\theta)$. The overall phase has to be $1$ because every $\veps_i$ appears twice. We then recouple Wigner d-matrices $d^{\ell_1}$ and $d^{\ell_2}$ with argument $\theta_1$
\beq
&&
d^{\ell_1}_{-\veps_1,-\veps_2}(\theta_1) \, d^{\ell_2}_{\veps_1+\Delta_1, \veps_2+\Delta_2}(\theta_1)
\nn
\\
&=&
\sum_{s=|\ell_1-\ell_2|}^{\ell_1+\ell_2}
(-1)^{\Delta_1-\Delta_2} (2s+1)
\begin{pmatrix}
\ell_1 & \ell_2 & s \\
-\veps_1 & \veps_1 + \Delta_1 & -\Delta_1
\end{pmatrix}
\begin{pmatrix}
\ell_1 & \ell_2 & s \\
-\veps_2 & \veps_2 + \Delta_2 & -\Delta_2
\end{pmatrix}
d^{s}_{\Delta_1,\Delta_2}(\theta_1)
\,,
\eeq
where tensor product $d^{\ell_1} \otimes d^{\ell_2}$ are decomposed into Wigner d-matrices labeled by $s$. Repeat the step $d^{\ell_1} \otimes d^{\ell_2}$ for argument $\theta_2$, and label the Wigner d-matrices by $d^{s'}$.
Now fixing $s$ and $s'$, we take all $3j$-symbols into account, and deal with them by orthogonality of $3j$-symbols so
\be
\sum_{\veps_1,\veps_2}
\begin{pmatrix}
\ell_1 & \ell_2 & s \\
-\veps_1 & \veps_1 + \Delta_1 & -\Delta_1
\end{pmatrix}
\begin{pmatrix}
\ell_1 & \ell_2 & s \\
-\veps_2 & \veps_2 + \Delta_2 & -\Delta_2
\end{pmatrix}
\begin{pmatrix}
\ell_1 & \ell_2 & s' \\
-\veps_2 & \veps_2 + \Delta_2 & -\Delta_2
\end{pmatrix}
\begin{pmatrix}
\ell_1 & \ell_2 & s' \\
-\veps_1 & \veps_1 + \Delta_1 & -\Delta_1
\end{pmatrix}
=
\f{ \delta_{ss'} }{(2s+1)(2s'+1)}
\,.
\ee
The orthogonality eliminates all the $3j$-symbols appearing in the spin recoupling, and we finally we recover composite rule (\ref{eq:CompositeRule-Wigner-small-d-n=2}).
\end{proof}


The sum of corner angles defines a deficit angle around a dual vertex in triangulation by $\delta=2\pi-\sum_{i}\theta_i$, which is interpreted as the discretization of spatial curvature \cite{Rovelli:2014ssa}.
At the end of the day, we have shown how to define and represent the action of loop holonomy operator on spin networks. The dynamics of the operator is encoded in the spins along loop and bouquet spins around. In the semi-classical regime defined in the large spin limit, the operator relates to the corner angles along the loop.

Now that we have clarified the geometrical interpretation of the loop holonomy operator, we are interested in the entanglement it creates, in the purpose of exploring the relation between quantum geometry and quantum information in the framework of loop quantum gravity.

\section{Multipartite entanglement and geometric measure of entanglement}
\label{Section:MultipartiteEntanglement}

In order to investigate and quantify the entanglement structure in LQG, the notion of multipartite entanglement is required. Indeed, bipartite entanglement, such as the entanglement between two vertices, is understood to reflect the distance between parts of the quantum state of geometry. General operators, such the loop holonomy, will inevitably create multipartite entanglement between the vertices it acts upon. Since the loop holonomy operator creates curvature excitations, we wish to shed light on the relation between geometric curvature and multipartite entanglement, in order to open the door to the possibility of defining curvature at the quantum level directly in quantum information terms.

At the technical level now, due to the failure of generalized Schmidt decomposition in most of cases of multipartite system, the von Neumann entropy of reduced density matrix is not a suitable entanglement measure anymore. Measures for multipartite entanglement are needed.
Many bipartite entanglement measures, such as relative entropy of entanglement, are generalized for studying multipartite entanglement. More subsystems and higher dimension of individual Hilbert space, leads to more parameters to describe the entanglement, thus many  entanglement measures can be constructed. They are not a priori equivalent. The partial objective of the present work is to suggest a suitable multipartite entanglement for LQG.

\subsection{Separable and entangled spin network states}

The quantum entanglement between spin sub-networks, is quantified as the \textit{intertwiner entanglement} \cite{Livine:2017fgq} where spin networks are understood as a many-body quantum system. In order to study multipartite entanglement, we employ the geometric measure of entanglement, which requires us to classify the set of states and distinguish fully separable states \cite{GUHNE20091}, i.e., product states as for instance $\rho_{ABC}=\rho_{A} \otimes \rho_{B} \otimes \rho_{C}$. In a fully separable state, subsystems are unentangled.


The present work's purpose is to investigate bulk entanglement on spin network and not to focus on the boundary structures. The difference between bulk and boundary entanglement is described in \cite{Livine:2017fgq}. We thus look at the Hilbert space of bulk spin network states $\cH_{\Gamma^{o}}$ as  the tensor product of the intertwiner Hilbert spaces at every vertex,
%
%
%
\be
\cH_{ \Gamma^{o} }
=
\bigoplus_{\{j_{e}\}_{e\in\Gamma}}
\cH_{v}^{\{j_{e}\}_{e\ni v}}
\subset
\bigotimes_{v\in\Gamma } \, \cH_{v}\,,
\ee
where the vertex Hilbert spaces are defined as
\be
\cH_{v}^{\{j_{e}\}_{e\ni v}}
=
\textrm{Inv}_{\SU(2)}\Big{[}
\bigotimes_{e|\,v=s(e)} \cV_{j_{e}}
\otimes
\bigotimes_{e|\,v=t(e)} \cV_{j_{e}}^{*}
\Big{]}
\quad\textrm{and}\quad
\cH_{v}
=
\bigoplus_{ \{j_{e} \}_{e\ni v} }
\cH_{v}^{\{j_{e}\}_{e\ni v}}
\,.
\ee
We consider spin networks as states in the larger Hilbert space $\bigotimes_{v\in\Gamma } \, \cH_{v}$ of tensor products of intertwiners without imposing the spin matching constraints along the bulk edges $e\in \Gamma^{o}$. The advantage with this starting point is that we are directly looking at correlations and entanglement between $\SU(2)$-gauge invariant excitations -the intertwiners- and that we do not have to worry about gauge breaking and correlations between non-gauge invariant observables (see e.g. \cite{Donnelly:2008vx,Donnelly:2016auv,Livine:2017fgq} for a discussion on this issue).

%
%
%
A general spin network state can be decomposed as a superposition over spin network basis states:
\beq  \label{eq:SPN-decomposition}
| \psi_{\Gamma} \ra
&=&
\sum_{\{I_v\}}
C_{ \Gamma } ( \{ I_{v} \} )
 \,
\bigotimes_{v\in\Gamma}
| \Psi_{ v, I_v } \ra
\,, \qquad \text{where} \quad
| \Psi_{ \Gamma, \{ I_v \} } \ra
=
\bigotimes_{v\in\Gamma}
| \Psi_{ v, I_v } \ra
\,.
\eeq
Here the intertwiner basis state $| \Psi_{ v, I_v } \ra \in \cH_{v}$ have definite spins and intertwiner, with spins and internal intertwiner indices packaged in the labels $I_{v}$. Then the coefficients $C_{ \Gamma } ( \{ I_{v} \} )$ for a general state allows for superpositions of both spins and intertwiners, thus leading to correlation between intertwiner states located at different vertices.

\medskip

To define multipartite entanglement and understand how spin sub-networks are entangled, we need to identify the set of fully separable spin network state. Then we will define the geometric entanglement  carried by a state as its distance to the set of separable states. 
%
%
Let us thus describe the hierarchy of potential ways to entangle a spin network state.
%
%
To start with, fully separable (pure) states are states with definite values of spins and intertwiners, that is spin network basis states (up to the choice of a local basis of intertwiner at each vertex). Then entanglement amounts to non-locally factorizable coefficients $C_{ \Gamma } ( \{ I_{v} \} )$. We distinguish three sources of entanglement:
\begin{enumerate}
\item Entanglement resulting from the correlation between intertwiner states at different vertices for fixed spins on the edges;

\item Entanglement resulting from spin superpositions on edges and thereby creating entanglement between the intertwiners living at the vertices linked by edges carried such spin superpositions;

\item Entanglement resulting from spin correlations between different edges.

\end{enumerate}

Below, we characterize these three levels in more details.

\medskip

Let us start by considering two adjacent vertices linked by a certain number of edges. We assume no spin-superposition over these edges. As long as the intertwiners at the two vertices remain uncorrelated (i.e. the coefficient $C(I_{v_1},I_{v_2})=C(I_{v_1})C(I_{v_2})$ factorizes), no matter the intertwine superposition at each vertex, the state remains unentangled.
Explicitly, this type of uncorrelated but superposed $v_{1},v_{2}$ state (with no spin-superposition) reads as:
\be
| \psi \ra
=
\cdots
\left( \sum_{ I_{v_{1} }^{ ( \{ j_e\}_{ e\ni v_{1} } ) } }
C( I_{v_{1} }^{ ( \{ j_e\}_{ e\ni v_{1} } ) }  )
\vert \{ j_e\}_{e\ni v_{1} } \,, I_{v_{1} }^{ ( \{ j_e\}_{ e\ni v_{1} } ) } \ra
\right)
\otimes
\left( \sum_{ I_{v_{2} }^{ ( \{ j_e\}_{ e\ni v_{2} } ) } }
C( I_{v_{2} }^{ ( \{ j_e\}_{ e\ni v_{2} } ) }  )
\vert \{ j_e\}_{e\ni v_{2} } \,, I_{v_{2} }^{ ( \{ j_e\}_{ e\ni v_{2} } ) } \ra
\right)
\cdots
\,. \label{eq:SeparableState-IntertwinerSuperposition}
\ee
In other words, the uncorrelated intertwiner superposition is simply a change of local intertwiner basis at each vertex.
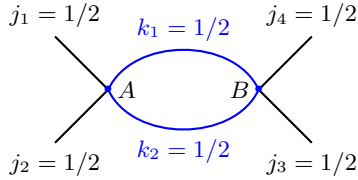
\begin{figure}[htb!]
\begin{tikzpicture}

\coordinate (O1) at (3,0);
\coordinate (O2) at (5,0);

\draw[thick] (O1) node[right] {$A$}-- ++ (135:1) node[above] {$j_1=1/2$};
\draw[thick] (O1) -- ++ (225:1) node[below] {$j_2=1/2$};
\draw[thick] (O2) node[left] {$B$} -- ++ (-45:1) node[below] {$j_3=1/2$};
\draw[thick] (O2) -- ++ (45:1) node[above] {$j_4=1/2$};

\draw[blue,thick,in=115,out=65,rotate=0] (O1) to node[above,midway] {$k_1=1/2$} (O2) node[scale=0.7] {$\bullet$} to [out=245,in=-65] node[below] {$k_2=1/2$} (O1) node[scale=0.7] {$\bullet$};

\end{tikzpicture}
\caption{Labeling by internal indices, spin network state is written as $\sum_{I_A,I_B} C(I_A, I_B) | I_A, I_B \ra$. It has intertwiner-correlation via internal indices through nontrivial correlation coefficient $C(I_A, I_B)$, for example, $C(j_{12}=0, j_{34}=0)=C(j_{12}=1, j_{34}=1)=1/\sqrt{2}$ and $C(j_{12}=0, j_{34}=1)=C(j_{12}=1, j_{34}=0)=0$.
}
\label{fig:eg-candygraph-4valent}
\end{figure}

On the other hand, if we still keep spins fixed but consider non-factorizable coefficients $C(I_{v_1},I_{v_2})\neq C(I_{v_1})C(I_{v_2})$, then $v_1$ and $v_2$ are automatically entangled.
A simple example is shown in fig. \ref{fig:eg-candygraph-4valent}. Note that the cases only happen when $v_1$ and $v_2$ both have higher valency than three. This is encapsulated by the following definition:

\begin{defi}[Entangled states of intertwiner-correlation via internal indices]
A spin network state is said to carry intertwiner correlation between vertices $v_{1}$ and $v_{2}$ via internal indices if the correlation coefficient of internal indices is nontrivial (i.e., unfactorizable). That is, 
\be
| \psi \ra
=
\cdots
\left( \sum_{ I_{v_{1} }^{ ( \{ j_e\}_{ e\ni v_{1} } ) }, I_{v_{2} }^{ ( \{ j_e\}_{ e\ni v_{2} } ) } }
C( I_{v_{1} }^{ ( \{ j_e\}_{ e\ni v_{1} } ) }, I_{v_{2} }^{ ( \{ j_e\}_{ e\ni v_{2} } ) } )
\vert \{ j_e\}_{e\ni v_{1} } \,, I_{v_{1} }^{ ( \{ j_e\}_{ e\ni v_{1} } ) } \ra
\otimes
\vert \{ j_e\}_{e\ni v_{2} } \,, I_{v_{2} }^{ ( \{ j_e\}_{ e\ni v_{2} } ) } \ra
\right)
\cdots
\,. \label{eq:EntangledState-IntertwinerCorrelation}
\ee
The set of such states is denoted by $\cS_{C_{I}(\bGamma) }$.
\end{defi}

\smallskip

This type of intertwiner  correlation is only possible due to the non-trivial structure of the intertwiner space for vertices with valence strictly larger than 3. Nevertheless, it turns out possible to correlate two adjacent vertices, even 3-valent ones, and create entanglement by unfreezing the spins and allowing for spin superpositions on the edges linking the two vertices~:
%
\begin{defi}[Entangled states of bulk spin-superposition]
A spin network state has intertwiner-correlation between vertices $v_{1}$ and $v_{2}$ via bulk spin-superposition if there exists at least one edge $e \in \bGamma$ that links two distinct vertices $v_1$ and $v_2$ ($v_1 \neq v_2$), the associated spin $k_{e}$ has spin-superposition. Let $e' \in \{ e \}_{ e \ni v_{1} } \cap \{ \tl{e} \}_{ \tl{e} \ni v_{2} }$ be a common edge linked $v_1$ and $v_2$, and $C( k_{e'} )$ be the spin-superposition coefficient, then any state in the form of
\be
| \psi \ra
=
\cdots
\left( \sum_{ k_{e'} }
C( k_{e'} ) \,
\vert \{ j_e\}_{e\ni v_{1} }, I_{v_{1} }^{ ( \{ j_e\}_{ e\ni v_{1} } ) } \ra
\otimes
\vert \{ j_{ \tl{e} } \}_{e\ni v_{2} }, I_{v_{2} }^{ ( \{ j_{ \tl{e} } \}_{ \tl{e} \ni v_{2} } ) } \ra
\right)
\cdots
\label{eq:EntangledState-BulkSpinSuperposition}
\ee
is entangled. Then we denote the set of such states by $\cS_{S_{j}(\bGamma) }$.
\end{defi}

A simple example is shown in fig. \ref{fig:eg-treegraph-3valent}. Indeed, as long as two vertices have one common edge with spin-superposition, their intertwiners are entangled, since the spin is a common index for intertwiners at $v_1$ and $v_2$.
%
%
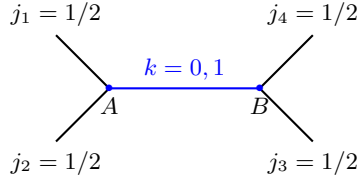
\begin{figure}[htb]
\begin{tikzpicture}

\coordinate (O1) at (3,0);
\coordinate (O2) at (5,0);

\draw[thick] (O1) node[below] {$A$}-- ++ (135:1) node[above] {$j_1=1/2$};
\draw[thick] (O1) -- ++ (225:1) node[below] {$j_2=1/2$};
\draw[thick] (O2) node[below] {$B$} -- ++ (-45:1) node[below] {$j_3=1/2$};
\draw[thick] (O2) -- ++ (45:1) node[above] {$j_4=1/2$};

\draw[blue,thick] (O1) node[scale=0.7] {$\bullet$} -- node[above,midway] {$k=0,1$} (O2) node[scale=0.7] {$\bullet$};

\end{tikzpicture}
\caption{Spin network state is written as $\sum_{k} C(k) | j_1, j_2, k \ra \otimes | j_3,j_4,k \ra$. The intertwiners $| j_1, j_2, k \ra$ and $| j_3,j_4,k \ra$ are entangled if $k$ has superposition, e.g., $C(k=0)=C(k=1)=1/\sqrt{2}$.
}
\label{fig:eg-treegraph-3valent}
\end{figure}

\smallskip

Finally, the vertices $v_1$ and $v_2$ can be entangled if their have spin-correlation, i.e., there exists nontrivial correlation coefficient $C(j_1,j_2,\cdots)$ for spins where $j_1$ and $j_2$ are two spins attached to two different vertices.
\begin{defi}[Entangled states of spin-correlation]
A spin network state has intertwiner-correlation between vertices $v_{1}$ and $v_{2}$ via spin-correlation if there exists spin-correlation between two edges $e_1,e_2$ that $e_1 \ni v_1$, $e_2 \ni v_2$ and $v_1 \neq v_2$. Let $C( j_{e_1}, j_{e_2} )$ be the correlation coefficient, then any state in the form of
\be
| \psi \ra
=
\cdots
\left( \sum_{ j_{e_1}, j_{e_2} }
C( j_{e_1}, j_{e_2} ) \,
\vert \{ j_e\}_{e\ni v_{1} }, I_{v_{1} }^{ ( \{ j_e\}_{ e\ni v_{1} } ) } \ra
\otimes
\vert \{ j_{ \tl{e} } \}_{e\ni v_{2} }, I_{v_{2} }^{ ( \{ j_{ \tl{e} } \}_{ \tl{e} \ni v_{2} } ) } \ra
\right)
\cdots
\label{eq:EntangledState-BoundarySpinCorrelation}
\ee
is entangled. The set of such states is denoted by $\cS_{C_{j}(\Gamma) }$.
\end{defi}

We look at an example shown in fig. \ref{fig:eg-candygraph-3valent-BoundarySpinCorrelation}. Consider below two spin network states
\be
| \phi \ra
=
\left( \sum_{ j_1 }
C( j_1 ) \,
\vert j_1,k_1,k_2 \ra
\right)
\otimes
\left( \sum_{ j_2 }
C( j_2 ) \,
\vert j_2,k_1,k_2 \ra
\right)
\,, \qquad
| \psi \ra
=
\sum_{ j_1, j_2 }
C( j_1, j_2 ) \,
\vert j_1,k_1,k_2 \ra
\otimes
\vert j_2,k_1,k_2 \ra
\,. \label{eq:Examples-BoundarySpinSuperposition}
\ee
Both states have boundary spin-superposition. The distinction is that $| \psi \ra$ has spin-correlation while $| \phi \ra$ has not, thus $| \psi \ra$ is entangled (e.g., fig.\ref{fig:eg-candygraph-3valent-BoundarySpinCorrelation}) while $| \phi \ra$ is unentangled.
\begin{figure}[htb]
\begin{tikzpicture}

\coordinate (O1) at (3,0);
\coordinate (O2) at (5,0);

\draw[thick] (O1) node[right] {$A$}-- ++ (180:1) node[above] {$j_1$};
\draw[thick] (O2) node[left] {$B$} -- ++ (0:1) node[above] {$j_2$};

\draw[blue,thick,in=115,out=65,rotate=0] (O1) to node[above,midway] {$k_1=1$} (O2) node[scale=0.7] {$\bullet$} to [out=245,in=-65] node[below] {$k_2=1$} (O1) node[scale=0.7] {$\bullet$};

\end{tikzpicture}
\caption{Spin network state is written as $\sum_{j_1,j_2} C(j_1, j_2) | j_1,k_1,k_2 \ra \otimes | j_2,k_1,k_2 \ra$ which has boundary spin correlation via unfactorizable correlation coefficient $C(j_1, j_2)$, for example, $C(j_1=1, j_2=1)=C(j_{1}=2, j_{2}=2)=1/\sqrt{2}$ and $C(j_{1}=1, j_{}=2)=C(j_{1}=2, j_{1}=1)=0$.
}
\label{fig:eg-candygraph-3valent-BoundarySpinCorrelation}
\end{figure}
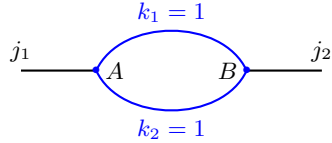

\medskip

Let us summarize the structures described above in the following statement: 

\begin{res}
\label{Prop:ProductStates-SpinNetworks}
In the spin network Hilbert space $\cH_{\Gamma}$ for a graph, possibly with a boundary, the set of fully separable states is defined as the set of states which do not carry intertwiner correlation, bulk spin superposition or spin correlations:
\be
\cS_{separable(\Gamma) }
=
\cH_{\Gamma} \setminus
( \cS_{C_{I}(\bGamma) } \cup \cS_{S_{j}(\bGamma) } \cup \cS_{C_{j}(\Gamma) } )
\,.
\ee
Separable states are thus states which might have boundary spin superpositions, as long as the boundary spins remain uncorrelated, and that for each set of boundary spins, the bulk spins and intertwiners are fixed.

In particular, spin network basis states are unentangled,
\be
\cS_{basis(\Gamma) } \subset \cS_{separable(\Gamma) }
\,.
\ee
\end{res}

It is interesting to characterize the separable states which are not spin network basis states. First, we put aside the possibility of uncorrelated intertwiner superpositions, since these are still spin network basis states up to a mere change of intertwiner basis locally at each vertex. We are then left with two possibilities:
\begin{itemize}
\item (uncorrelated) superpositions of boundary spins;
\item (uncorrelated) superpositions of spins on a self-loop, i.e. an edge linking a vertex to itself.
\end{itemize}

\begin{res}[Unentangled states on trivalent spin network] \label{Prop:ProductStates-TrivalentSpinNetworks}
For spin network states on trivalent graphs without any self-loop, if we restrict to study a subspace of spin network with fixed boundary spins, then the set of fully separable states is the set of spin network basis states (with fixed spins on every bulk edges).
\end{res}

In the present work,
for the sake of simplicity, we focus on the entanglement carried by trivalent spin networks, in which the above result   \ref{Prop:ProductStates-TrivalentSpinNetworks} applies.
%
%
In fact, we could easily extend the analysis to higher valent vertices since the  loop holonomy operator does not affect the internal space of intertwiner and can be understood as acting as on 3-valent vertices obtained by unfolding the vertices in terms of bouquet spins, as explained earlier. This would simply complicate notations.
%

\medskip

Knowing what fully separable states are in LQG, we are able to adopt the geometric measure of entanglement \cite{PhysRevA.68.042307,Amico:2007ag} as a witness of multipartite entanglement. It quantifies entanglement of a pure state through the minimal distance of the state from the set of pure fully separable states
\be \label{eq:GeometricMeasureEntanglement}
S_{g}(\Psi)=-\ln \max_{\Phi} | \la \Phi | \Psi \ra |^2
\,,
\ee
where the maximum is on the set of fully separable states $\Phi$. Via variational method, it turns out that $\la \Phi | \Psi \ra$ is a real number when it reaches the extremal value \cite{PhysRevA.68.042307}. The maximal value of $| \la \Phi | \Psi \ra |^2$ is called \textit{entanglement eigenvalue}. In particular for bipartite system, the entanglement eigenvalue is the maximal Schmidt eigenvalue of reduced density matrix, and moreover, each Schmidt eigenvalue is an extremal value for $| \la \Phi | \Psi \ra |^2$.

Usually, the maximal projection is probably not easy to find. For the situation the Proposition \ref{Prop:ProductStates-TrivalentSpinNetworks} concerns, to obtain the maximum, what we need to do is to project the considered possibly entangled state onto the spin network basis states, then identify the most probable basis state. The projection determines the entanglement eigenvalue, so determines the geometric measure of entanglement.

\smallskip

So far we have discussed the classification of unentangled spin networks and entangled spin networks, and definition of geometrical measure of entanglement. The next question is how the entanglement evolves under dynamics, for instance generated by a loop holonomy operator.


\subsection{The leading order evolution of geometric entanglement}
 \label{Section:GME-LQG}
In this part, we investigate how geometrical measure of entanglement evolves under dynamics driven by a given hamiltonian. Assume that $| \Psi_0 \ra$ is the initial state and that the evolution is generated by the exponential map $e^{-\ri \wh{H} t}$ with respect to a hermitian operator $\wh{H}$. We will show that at least the 1st- and 2nd- order derivative of geometric measure of entanglement with respect to time parameter $t$ can be expressed in a simple fashion in terms of the hermitian operator.

\smallskip

The state $| \Psi_0 \ra$ evolves as $ | \Psi (t) \ra=e^{-\ri \wh{H} t} | \Psi_0 \ra$ with $ | \Psi (t=0) \ra=| \Psi_0 \ra$. Expanding it up to the 2nd order around the initial time reads
\be
| \Psi (t) \ra
=
e^{-\ri \wh{H} t} | \Psi_0 \ra
=
| \Psi_0 \ra - \ri t \wh{H} | \Psi_0 \ra - \f{t^2}{2} \wh{H}^2 | \Psi_0 \ra + \cO(t^3)
\,.
\ee
The definition of the geometric measure of entanglement involves the fully separable state $| \Phi(t) \ra$ corresponding to $| \Psi(t) \ra$, which maximizes the probability $\vert \la \Phi(t) | \Psi (t) \ra |^2$ at every instant $t$, i.e., $\vert \la \Phi(t) | \Psi (t) \ra |^2=\max_{\Phi'} | \la \Phi' | \Psi(t) \ra |^2$ for any $t$ where $| \Phi' \ra$ runs over the set of fully separable states. As we are not dealing with a rigged Hilbert space which may allow diverging distribution, the scalar product reamins bounded, $0 \leq \vert \la \Phi' | \Psi (t) \ra |^2 \leq 1$.

\smallskip

However, notice that the definition of the optimal separable state $\{ | \Phi(t) \ra \}$ might be ambiguous. We provide two simple, hopefully helpful, examples \ref{eg:TDBell} and \ref{eg:BHtoy} in appendix.
Nevertheless, the geometric entanglement value is always continuous. This follows from the continuity of dynamics. For instance, thinking of a bipartite system, the dynamics of the Schmidt eigenvalues $\lambda(t)$ of the reduced density matrix can be described by a master equation, which relates the 1st-order derivative of the Schmidt eigenvalues with respect to $t$, to the commutator $[ \wh{H}, \rho]$. Since $\wh{H}$ and $\rho$ are assumed to be well-behaved operators, the $\rd \lambda(t) / \rd t$ is also well-behaved.
%
One possible  concern is  that entanglement eigenvalue $\lambda_{\max}$ might be discontinuous. Indeed starting with the maximal eigenvalue $\lambda_1(t)$ at time $t_1$,  it is possible that it is not anymore the  maximal eigenvalue at a later time $t_2$. Namely, such a discontinuity happens when the order of the eigenvalues switches, i.e. if $\lambda_1(t) > \lambda_2(t)$ when $t \leq t_2$ while $\lambda_1(t) < \lambda_2(t)$ when $t > t_2$: the maximal eigenvalue would jump from the branch $\lambda_1$ to the  branch $\lambda_2$. However, even in that case, the entanglement eigenvalue is still continuous: since $\lambda_1(t)$ and $\lambda_2(t)$ are both continuous, there exists a transition time $t$ such that $\lambda_1(t)=\lambda_2(t)$. The example \ref{eg:TDBell} is an example for the case where the entanglement eigenvalue switches the branches at $t=\pi/4$ while remaining continuous.



Let us look more closely at the evolution of the entanglement close to the initial time (keeping in mind that one can arbitrarily swift the initial choice). Let us expand the scalar product $| \la \Phi(t) | \Psi (t) \ra |^2$ in a Taylor series in $t$:
\beq
| \la \Phi(t) | \Psi (t) \ra |^2&=&| \la \Phi(t) | e^{-it\hH}|\Psi_{0} \ra |^2
\\
&=&
\big{|} \la \Phi(t) | \Psi_{0} \ra\big{|}^{2}
+it\Big{[}
\la \Phi(t) | \Psi_{0} \ra\la \Psi_{0}|\hH|\Phi(t)  \ra-\la \Phi(t) |\hH| \Psi_{0} \ra\la \Psi_{0}|\Phi(t)  \ra
\Big{]}
\nn\\
&&
+\f{t^{2}}2\Big{[}
2\la \Phi(t)|\hH | \Psi_{0} \ra\la \Psi_{0}|\hH|\Phi(t)  \ra
-\la \Phi(t) |\hH^{2}| \Psi_{0} \ra\la \Psi_{0}|\Phi(t)  \ra
-\la \Phi(t) | \Psi_{0} \ra\la \Psi_{0}|\hH^{2}|\Phi(t)  \ra
\Big{]}
+
{\cal O}(t^{3})
\,.\nn
\eeq
This is not exactly a full Taylor expansion since $\Phi(t)$ still depends on time. The first term $\big{|} \la \Phi(t) | \Psi_{0} \ra\big{|}^{2}$  actually reaches its maximal value at $t=0$, by definition of the state $\Phi(t)$, and thus has vanishing first derivative:
\be
\big{|} \la \Phi(t) | \Psi_{0} \ra\big{|}^{2}=\big{|} \la \Phi_{0} | \Psi_{0} \ra\big{|}^{2}+\cO(t^{2})
\,.
\ee
But there is a priori  no obvious further simplification.

\medskip

Let us make a first assumption:
\begin{itemize}
\item The initial state is separable, thus $\Phi_{0}=\Psi_{0}$.
\end{itemize}
We can then prove that the first derivative of $| \la \Phi(t) | \Psi (t) \ra |^2$ vanishes and that the leading order of the geometric entanglement $\ln\,| \la \Phi(t) | \Psi (t) \ra |^2$ is in $\cO(t^{2})$. Let us assume that the separable projection is smooth in a neighbourhood of the initial time and expand it in a Taylor series up to second order for $t>0$:
\be
\Phi(t)=\Psi_{0}+t\Phi^{(1)}+t^{2}\Phi^{(2)}+\dots
\ee
The normalization condition on that state reads:
\be
1=\la \Phi(t)|\Phi(t) \ra
=
1
+t\underbrace{\Big{[}
\la \Psi_{0}|\Phi^{(1)} \ra+\la \Phi^{(1)}|\Psi_{0} \ra
\Big{]}}_{=0}
+t^{2}\underbrace{\Big{[}
\la \Phi^{(1)}|\Phi^{(1)} \ra
+\la \Psi_{0}|\Phi^{(2)} \ra+\la \Phi^{(2)}|\Psi_{0} \ra
\Big{]}}_{=0}
+\dots
\ee

This allows us to expand the terms in the scalar product $| \la \Phi(t) | \Psi (t) \ra |^2$:
\be
\big{|} \la \Phi(t) | \Psi_{0} \ra\big{|}^{2}
=
1
+t\underbrace{\Big{[}
\la \Psi_{0}|\Phi^{(1)} \ra+\la \Phi^{(1)}|\Psi_{0} \ra
\Big{]}}_{=0}
+t^{2}
\underbrace{\Big{[}
\la \Psi_{0}|\Phi^{(1)} \ra\la \Phi^{(1)}|\Psi_{0} \ra
+\la \Psi_{0}|\Phi^{(2)} \ra+\la \Phi^{(2)}|\Psi_{0} \ra
\Big{]}}_{=\la \Psi_{0}|\Phi^{(1)} \ra\la \Phi^{(1)}|\Psi_{0} \ra-\la \Phi^{(1)}|\Phi^{(1)} \ra}
+\cO(t^{3})
\,.
\ee
Similarly, the second term, $\Big{[}
\la \Phi(t) | \Psi_{0} \ra\la \Psi_{0}|\hH|\Phi(t)  \ra-\la \Phi(t) |\hH| \Psi_{0} \ra\la \Psi_{0}|\Phi(t)  \ra
\Big{]}$, vanishes at $t=0$ and its first order depends on $\Psi_{0}$ and $\Phi^{(1)}$.
%
As a consequence, the scalar product has a vanishing first derivative and is  trivial up to second order, $| \la \Phi(t) | \Psi (t) \ra |^2=1+\cO(t^{2})$, thus
\beq
S_{g}=-\ln| \la \Phi(t) | \Psi (t) \ra |^2
&=&
t^{2}\Big{[}
\la \Psi_{0}|\hH^{2} | \Psi_{0}\ra
-\la \Psi_{0}|\hH|\Psi_{0}\ra^{2}
+
\la \Phi^{(1)}|\Phi^{(1)} \ra - \la \Psi_{0}|\Phi^{(1)} \ra\la \Phi^{(1)}|\Psi_{0} \ra
\\
&&+i(\la \Psi_{0}|\Phi^{(1)}  \ra - \la \Phi^{(1)} | \Psi_{0} \ra)\la \Psi_{0}|\hH|\Psi_{0}  \ra
+
i(\la \Phi^{(1)} |\hH| \Psi_{0} \ra - \la \Psi_{0}|\hH|\Phi^{(1)}  \ra)
\Big{]}
+\cO(t^{3})
\,,
\nn
\eeq
with the $t^{2}$-coefficient depending explicitly on the linear deviation $\Phi^{(1)}$ of the separable projection.

\medskip

Let us then make a further assumption, which is tailor-suited to the present case of study and allows us to determine exactly $\Phi^{(1)}$ :
\begin{itemize}
\item the set of separable states is discrete, i.e. separable states are isolated points in the Hilbert space.
\end{itemize}
This happens for trivalent spin networks, since the spin network basis states are entirely determined by the spin labels on the edges and there is local degrees of freedom at the vertices once the spins are fixed. This leads to a countable set of isolated separable states. The deep consequence is that $\Phi(t)$ is a step function, jumping from separable state to separable state. Let us keep in mind that, although $\Phi(t)$ is discontinuous, the scalar product $| \la \Phi(t) | \Psi (t) \ra |^2$ and resulting entanglement remain continuous functions of the time $t$. Therefore, $\Phi(t)$ is constant in a neighbourhood of the initial time, it is equal to the initial state $\Psi_{0}$ and its first derivative $\Phi^{(1)}$ vanishes. The scalar product,
\be
| \la \Phi(t) | \Psi (t) \ra |^2\underset{t\sim 0}{=}| \la \Psi_{0} | \Psi (t) \ra |^2\,,
\ee
reduces to the projection of the evolving state $\Psi(t)$ onto the initial set and geometric entanglement's leading order is simply given by the dispersion of the Hamiltonian operator: 
\be
\label{eq:Prop-Limit-2ndDerivative}
S_{g}(t)=t^{2}\Big{[}
\la \Psi_{0}|\hH^{2} | \Psi_{0}\ra
-\la \Psi_{0}|\hH|\Psi_{0}\ra^{2}
\Big{]}+\cO(t^{3})
\ee
This will simplify all the entanglement calculations, as we will see in explicit examples in the next sections. It will be validated to the comparison to the linear entanglement entropy (for bipartitions), which will give exactly the same leading order in $t^{2}$. One should nevertheless remember that, if we consider spin networks with four-valent or higher-valent vertices, the leading order will remain in $t^{2}$ but the precise factor will probably acquire corrections to the $\hH$-dispersion depending on the precise dynamics of the separable projection and its linear deviation $\Phi^{(1)}$.

\subsection{Entanglement excitation and closure defect distribution}


Let us apply the results from the previous section to the action of the loop holonomy operator. We consider an unentangled initial state $| \Psi_0 \ra$, given by a spin network basis state. Its separable projection is  itself,  $| \Phi(t=0) \ra=| \Psi_0 \ra$. We would like to know the entanglement excitation created by the  loop holonomy operator. Applying the formula (\ref{eq:Prop-Limit-2ndDerivative}) derived above to the holonomy operator leads to the following result:

\begin{res} \label{Prop:GME-LoopHolonomy-Trivalent-2ndDerivative}
Let $| \Psi_0 \ra$ be any trivalent spin network basis state, and $\wh{\chi_{\ell} } \, \act_{W}$ be a loop holonomy operator where $W$ is loop through more than one vertex (i.e., $W$ is not a self-loop).
The $| \Psi(t) \ra$ is the state driven by $\wh{\chi_{\ell} } \, \act_{W}$ from initial state $| \Psi_0 \ra$.
 Then the 1st-order and 2nd-order derivative of geometric measure of entanglement at $t=0$ are given by
\beq
\f{ \rd S_{g} [\Psi(t)] }{ \rd t} \Big\vert_{t=0}
=0
\,, \qquad
\f12\f{ \rd^2 S_{g} [\Psi(t)] }{ \rd t^2} \Big\vert_{t=0}
=
\la \Psi_0 | \left( \wh{\chi_{\ell} } \, \act_{W} \right)^2 | \Psi_0 \ra
-\la \Psi_0 | \wh{\chi_{\ell} } \, \act_{W} | \Psi_0 \ra^2
\,.
\label{eq:GME-LoopHolonomy-TrivalentBasisState-1st,2ndDerivative}
\eeq
\end{res}

Let us apply this to a spin network made of a single loop with boundary edge insertions, as drawn in fig.\ref{fig:LoopySpinNetwork}. This case illustrates an interesting relation between the leading order entanglement evolution, as given by equation (\ref{eq:GME-LoopHolonomy-TrivalentBasisState-1st,2ndDerivative}), and the closure defect, defined as the recoupled spin of the boundary spins . This relation is realized by relating the dispersion of loop holonomy operator $\wh{\chi_{\ell} } \, \act_{W}$, which gives the 2nd-order derivative of the entanglement, to the probability distribution of the closure defect.
\begin{figure}[hbt!]
\centering
\begin{tikzpicture}[scale=0.7]

\coordinate (A) at (-7,0);

\draw [domain=60:360,thick] plot ({-7+1.5 * cos(\x)}, {1.5 * sin(\x)});
\draw [domain=0:60,blue,thick] plot ({-7+1.5 * cos(\x)}, {1.5 * sin(\x)});
\draw[thick] (A) ++(-0.6,0);
\draw[thick] (A) ++(0:1.5) --++ (0:1);
\draw[thick] (A) ++(60:1.5) --++ (60:1) ++(60:0.35) node {$j_1$};
\draw[thick] (A) ++(120:1.5) --++ (120:1) ++(120:0.35) node {$j_2$};
\draw[thick] (A) ++(180:1.5) --++ (180:1) ++(180:0.35) node {$j_3$};
\draw[thick] (A) ++(240:1.5) --++ (240:1);
\draw[thick] (A) ++(300:1.5) --++ (300:1);

\draw[blue] (A) ++(30:1.8) node {$k$};

\draw [thick, loosely dotted,domain=195:230] plot ({-7+2.6 * cos(\x)}, {2.6 * sin(\x)});

\coordinate (O) at (3.5,0);

\draw[->,>=stealth,very thick] (-3.25,0) -- node[above] {gauge-fixing} (-0.75,0);

\draw[thick,red] (O) -- ++ (315:1.5) node[very near end,above=2] {$J$} coordinate (B) node[blue,scale=0.7] {$\bullet$};
\draw[blue,thick,in=-90,out=0,scale=4.5,rotate=0] (B)  to[loop] node[near start,sloped] {$>$} node[near end,left=2] {$k$} (B) ++(315:0.35) node {$G$};

\draw[thick] (O) -- ++ (0:1.5);
\draw[thick] (O) -- ++ (45:1.5) ++(45:0.35) node {$j_1$};
\draw[thick] (O) -- ++ (90:1.5) ++(90:0.35) node {$j_2$};
\draw[thick] (O) -- ++ (135:1.5) ++(135:0.35) node {$j_3$};
\draw[thick] (O) -- ++ (180:1.5)  ++ (180:1.2) node {$\displaystyle{\sum_{J} }$}; 
\draw[thick] (O) -- ++ (225:1.5); 

\draw (O) node[scale=0.7] {$\bullet$};

\draw [thick, loosely dotted,domain=160:200] plot ({3.5+1.8 * cos(\x)}, {1.8 * sin(\x)});

\end{tikzpicture}
%
%
\caption{
The left hand side is a trivalent spin network where all bulk edges lie along a circle. We can choose a maximal tree such that all of bulk edges except the blue one form a path where all the holonomies can be set into identity element of $\SU(2)$. By means of contracting the maximal tree, the blue edge with spin-$k$ becomes a loopy edge with same spin-$k$ and holonomy $G$. The spin-$J$ is recoupled from boundary spins
}
\label{fig:LoopySpinNetwork}
\end{figure}
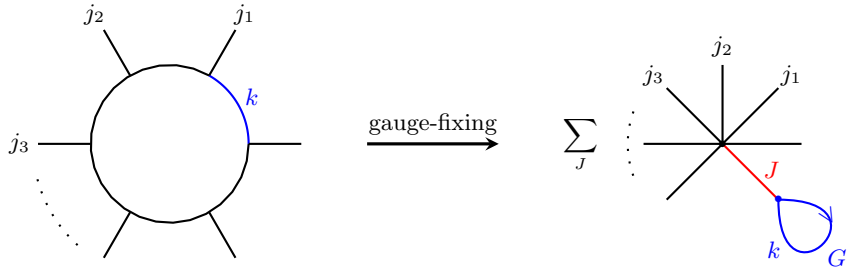
%

\smallskip


We use the techniques introduced in \cite{Chen:2021vrc}. The wave-function of spin network (on left hand side of fig.\ref{fig:LoopySpinNetwork}) can be thought as a boundary map mapping bulk holonomies (on the edges along the loop) to vectors in the boundary Hilbert space $\bigotimes_{e\in\pp\Gamma} \cV_{j_e}$. 
Using the gauge-invariance property of spin networks and proceeding to a gauge-fixing of all but one edges around the loop (see \cite{Chen:2021vrc} and refs therein), we can reduce the the one-loop spin network to its gauge-fixed counterpart, drawn on the right hand side of fig.\ref{fig:LoopySpinNetwork}). This allows to write the spin network functional as a function of a single group element $G$, representing the holonomy around the loop:
%
\be
| \psi_{\pp\Gamma} (G) \ra
=
\sum_{J} e^{\ri \vphi_{k}(J) }\sqrt{ p_{k} (J) } \sum_{a,b=-k}^{k} \sqrt{2k+1} (-1)^{k-a} D^{k}_{ab}(G)
\begin{pmatrix}
J & k & k \\
M & b & -a
\end{pmatrix} | J,M \ra
\qquad\in \bigotimes_{e\in\pp\Gamma} \cV_{j_e}\,.
\ee
The spin-$J$ is  the recoupled spin of all the boundary spins and is called the closure defect. The  probability amplitude  $\sqrt{p_{k}(J)}\,e^{\ri \vphi_{k}(J) }$ is a function of the spin $k$ living on the gauge-fixed loop (the blue edge on fig.\ref{fig:LoopySpinNetwork}). It fully characterizes the gauge-fixed spin network state and reflects the spins dressing the edges around the loop before gauge-fixing. In particular, if one were to choose another edge as the loopy edge, the resulting probability amplitude would be a priori different, though gauge equivalent. The modulus square of the probability amplitude gives the probability distribution $p_{k}(J)$ for the closure defect, which satisfies the normalization $\sum_{J} p_{k}(J)=1$.
%

\smallskip

Now let us look at the  loop holonomy operator. It acts on the spin network state. But since the operator and state are both gauge invariant, one can legitimately look at the holonomy operator acting on gauge-fixed spin network states.
%
%
One should be careful:  the entanglement structures are totally different on the two graphs, because the gauge-fixing procedure involves the procedure of contracting vertices, thus changes the number of vertices on graph.
Nevertheless, if one is interested in the dispersion of the loop holonomy operator, then the gauge-fixing does not change anything.
So let us apply the  loop holonomy operator on the self-loop of the gauge-fixed state, depicted on the right hand side of fig.\ref{fig:LoopySpinNetwork}. Its action involves both spins $k$ and $J$. Applying the general formula  (\ref{eq:Amplitudes-6jsymbols}) to this simple setting yields the following wave-function:
\beq
&&
\wh{\chi_{\ell} } \, \act_{W} | \psi_{\pp\Gamma} (G) \ra
=
\chi_{\ell} (G) \, | \psi_{\pp\Gamma} (G) \ra
\\
&=&
\sum_{J} e^{\ri \vphi_{k}(J) }\sqrt{ p_{k}(J) } \sum_{K=|k-\ell|}^{k+\ell} (2K+1) (-1)^{J+\ell + k+K }
\begin{Bmatrix}
J & K & K \\
\ell & k & k
\end{Bmatrix}
\sum_{a,b=-K}^{K}
\sqrt{2k+1}
(-1)^{K-a} D^{K}_{ab}(G)
\begin{pmatrix}
J & K & K \\
M & b & -a
\end{pmatrix} | J,M \ra
\,.
\nn
\eeq
We compute the mean value and deviation of the operator on the quantum state:
\beq
\la \, ( \wh{\chi_{\ell} } \, \act_{W} )^2 \, \ra
&=&
\int_{ \SU(2)^E } \prod_{e \in E} \rd g_{e} \, \la \psi_{\pp\Gamma} (G) | ( \wh{\chi_{\ell} } \, \act_{W} )^2 | \psi_{\pp\Gamma} (G) \ra
=
\sum_{J} p_{k}(J) \sum_{s=0(1)}^{2\ell} (-1)^{J+s+2k}
\begin{Bmatrix}
J & k & k \\
s & k & k
\end{Bmatrix}
(2k+1)
\,, \\
\la \, \wh{\chi_{\ell} } \, \act_{W} \, \ra
&=&
\int_{ \SU(2)^E } \prod_{e \in E} \rd g_{e} \, \la \psi_{\pp\Gamma} (G) | \wh{\chi_{\ell} } \, \act_{W} | \psi_{\pp\Gamma} (G) \ra
=
\sum_{J} p_{k}(J) (-1)^{J+\ell+2k}
\begin{Bmatrix}
J & k & k \\
\ell & k & k
\end{Bmatrix}
(2k+1)
\,.
\eeq
The expectation $\la \, \wh{\chi_{\ell} } \, \act_{W} \, \ra$ automatically vanishes when $\ell \in \N+\f12$, due to the triangle condition on $6j$-symbol.
Here we have employed the recoupling formula $\wh{\chi_{\ell} } \, \act_{W} \circ \wh{\chi_{\ell} } \, \act_{W}=\sum_{s=0}^{2\ell} \wh{\chi_{s} } \, \act_{W}$ to compute the expectation $\la \, ( \wh{\chi_{\ell} } \, \act_{W} )^2 \, \ra$.
This gives the 2nd-order derivative of the geometric entanglement at the initial time:
\be
\f12\f{ \rd^2 S_{g} }{ \rd t^2} \Big\vert_{t=0}
=
\sum_{J} p_{k}(J) \sum_{s=0(1)}^{2\ell} (-1)^{J+s+2k}
\begin{Bmatrix}
J & k & k \\
s & k & k
\end{Bmatrix}
(2k+1)
-
\left(
\sum_{J} p_{k}(J) (-1)^{J+\ell+2k}
\begin{Bmatrix}
J & k & k \\
\ell & k & k
\end{Bmatrix}
(2k+1)
\right)^2
\,. \label{eq:1LoopTrivalentSpinNetwork-2ndDerivative}
\ee
This gives the excitation of entanglement created by the loop holonomy operator.
We should emphasize that eventhough the probability the $p_{k}(J)$ might depend on the choice of gauge-fixing (through the choice of the loopy edge), these averages are gauge invariant and do not depend on the gauge-fixing.

\medskip

Recalling the triangle condition on $6j$-symbols, there are two points observed from above expression: (i) if $\ell \in \N + \f12$, the second term vanishes. (ii) if $s>2k$, then the contribution from $\begin{Bmatrix} J & k & k \\ s & k & k \end{Bmatrix}$ vanishes. So there is a critical value $\ell_{c}=2k+\f12$ such that once  $\ell$ grows  beyond this critical value $\ell_{c}$, the 2nd-order derivative is a constant with respect to $\ell$.
%
%
The plateau value is easily computed using a standard identity on the $\{6j\}$-symbols\footnotemark{} and is a simple averaging of the probability distribution of the closure defect:
%
\footnotetext{We employ $\sum_{s=0(1)}^{2k} (-1)^{J+s+2k}
\begin{Bmatrix} J & k & k \\ s & k & k \end{Bmatrix}=\f{1}{2J+1}$.}
\be
\ell \geq 2k + \f12\,: \qquad
\f12\f{ \rd^2 S_{g} }{ \rd t^2} \Big\vert_{t=0}
=
\sum_{J} \f{ p_{k}(J) }{ 2J+1 }(2k+1)
\,.
\label{eq:2ndDGME-ClosureDefect}
\ee
This quantifies the amount of multibody entanglement created by the action of loop holonomy operator when acting on a pure spin network basis state. The holonomy operator entangles all the vertices around the loop with an entanglement growing in $t^{2}$ and its acceleration is directly related to the distribution of the closure defect -or, in other words, the recoupled boundary spin.

\section{Candy graph: bipartite entanglement} \label{section:CandyGraph}

In this section we look at the example of entanglement excitation on candy graph as fig.\ref{fig:candygraph}. This is a graph with a single loop and a pair of boundary spin insertions. The very simple structure of the graph allows us to study in full details the entanglement between the two vertices of the graph.

\subsection{Entanglement entropy excitation on candy graph with truncated dynamics}


We consider the holonomy operator acting on the loop of the candy graph. We compute explicitly the bipartite entanglement between the two vertices, defined as the  entropy of the reduced density matrix after tracing over one of the two vertices.
Actually, we compute both the von Neumann entropy and the linear entropy, but we prefer to use the linear entropy as measure of entanglement due to the  non-differentiability of the von Neumann entropy at initial time for an initial separable state.
Then we show that this measure of bipartite entanglement fits exactly with the geometric entanglement formula derived in the previous section up to 2nd order, thereby providing a relevant consistency check of that previous analysis.
\begin{figure}[htb]
\begin{tikzpicture}

\coordinate (O1) at (3,0);
\coordinate (O2) at (5,0);

\draw[thick] (O1) node[right] {$A$}-- ++ (180:1) node[above] {$j_1$};
\draw[thick] (O2) node[left] {$B$} -- ++ (1,0) node[above,midway] {$j_{2}$};

\draw[blue,thick,in=115,out=65,rotate=0] (O1) to node[above,midway] {$k_1$} (O2) node[scale=0.7] {$\bullet$} to [out=245,in=-65] node[below] {$k_2$} (O1) node[scale=0.7] {$\bullet$};

\draw [domain=0:360,dashed] plot ({4+0.5 * cos(\x)}, {0.35 * sin(\x)});

\draw[->,>=stealth,very thick] (6.5,0) -- (7.5,0);

\coordinate (A1) at (10,0);
\coordinate (A2) at (12,0);

\draw[thick] (A1) node[right] {$A$}-- ++ (180:1) node[above] {$j_1$} ++ (180:0.75) node {$\displaystyle{\sum_{K_1,K_2} }$};
\draw[thick] (A2) node[left] {$B$} -- ++ (1,0) node[above,midway] {$j_{2}$};

\draw[blue,thick,in=115,out=65,rotate=0] (A1) to node[above,midway] {$K_1$} (A2) node[scale=0.7] {$\bullet$} to [out=245,in=-65] node[below] {$K_2$} (A1) node[scale=0.7] {$\bullet$};

\end{tikzpicture}
\caption{Loop holonomy operator acts on candy graph spin network, which leads to the spin-superposition over bulk spins.
}
\label{fig:candygraph}
\end{figure}
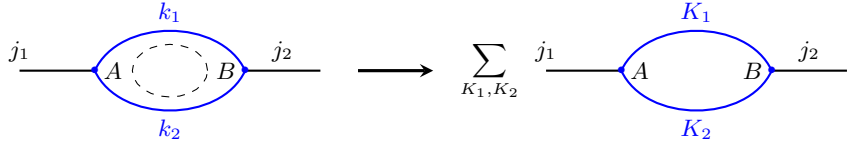

As illustrated on fig.{fig:candygraph}, the Hilbert space of spin networks on the  candy graph for fixed boundary spin s$j_{1}$ and $j_{2}$ is the tensor poroduct of the two intertwiner spaces sitting at the graph vertices:
\be
\cH_{A}
=
\bigoplus_{K_1,K_2}
\textrm{Inv}_{\SU(2)} \Big( \cV_{j_1} \otimes \cV_{K_1} \otimes \cV_{K_2} \Big)
\,, \qquad
\cH_{B}
=
\bigoplus_{K_1,K_2}
\textrm{Inv}_{\SU(2)} \Big( \cV_{j_2} \otimes \cV_{K_1} \otimes \cV_{K_2} \Big)
\,.
\ee
Spin network states can involve superpositions of the bulk spins $K_1$ and $K_2$ (while keeping the boundary spins $j_1$ and $j_2$ fixed). Such bulk spin superposition naturally induces a  superposition of intertwiners. If this superposition carries correlations between the two vertices, this will be reflected in the entanglement between the two vertices.

\smallskip

Starting with an initial spin network basis state $| \Psi_{can,\{j_i,k_i\}} \ra=| j_i,k_1,k_2 \ra_{A} \otimes | j_2,k_1,k_2 \ra_{B}$, we consider the evolution generated by the loop holonomy operator $\wh{\chi_{\ell} }$,
\be
\wh{\chi_{\ell} }
:
\textrm{Inv}_{\SU(2)} \Big( \cV_{j_1} \otimes \cV_{k_1} \otimes \cV_{k_2} \Big)
\otimes
\textrm{Inv}_{\SU(2)} \Big( \cV_{j_2} \otimes \cV_{k_1} \otimes \cV_{k_2} \Big)
\to
\bigoplus_{K_1,K_2}
\textrm{Inv}_{\SU(2)} \Big( \cV_{j_1} \otimes \cV_{K_1} \otimes \cV_{K_2} \Big)
\otimes
\textrm{Inv}_{\SU(2)} \Big( \cV_{j_2} \otimes \cV_{K_1} \otimes \cV_{K_2} \Big)
\,,
\nn
\ee
 For  infinitesimal time $t \to 0$, the unitarity evolution operator is $e^{ - \ri t \, \wh{\chi_{\ell} } } = \id - \ri  t \, \wh{\chi_{\ell} } + O(t^2)$, which acts as
 \beq
&&
e^{ - \ri t \, \widehat{\chi_{\ell} } } | \Psi_{can,\{j_i,k_i\}} \ra
=
e^{ - \ri t \, \widehat{\chi_{\ell} } } | j_1,k_1,k_2 \ra_{A} \otimes | j_2,k_1,k_2 \ra_{B}
\nn \\
&=&
 \sum_{K_i=| k_i-{\ell} |}^{k_i+{\ell} }
 \left( \delta^{K_1}_{k_1}\delta^{K_2}_{k_2}  - \ri t {[} Z(can)_{\ell}^{j_{1},j_{2}} {]}^{ K_1,K_2 }_{ \phantom{ K_1,K_2 } k_1, k_2 } \, \right)
\underbrace{ | j_1,K_1,K_2 \ra_{A} \otimes | j_2,K_1,K_2 \ra_{B} }_{ | \Psi_{can,\{j_i,K_i\}} \ra }
+ O(t^2)
\,, \label{eq:ShortTime-BasisState-candygraph}
\eeq
where $| j_i,K_1,K_2 \ra_{A} \in \cH_{A}$ and $| j_2,K_1,K_2 \ra_{B} \in \cH_{B}$ denote the intertwiners living at respect trivalent vertex.
According to equation (\ref{eq:Amplitudes-6jsymbols}), the transition matrix $Z(can)_{\ell}$ is given by
\beq
{[} Z(can)_{\ell}^{j_{1},j_{2}} {]}^{ K_1,K_2 }_{ \phantom{ K_1,K_2 } k_1, k_2 }
=
(-1)^{j_1+j_2+k_1+k_2+K_1+K_2+2\ell}
\begin{Bmatrix}
   j_1 & k_1 & k_2 \\
   \ell &  K_2 & K_1
  \end{Bmatrix}
\begin{Bmatrix}
   j_{2} & k_1 & k_2 \\
   \ell &  K_2 & K_1
  \end{Bmatrix}
  \prod_{i=1}^{2}\sqrt{(2k_i+1)(2K_i+1)}
\,.
\eeq
The matrix elements are all real numbers.
%
We take special care in properly normalizing the truncated state,
\be
| \Psi_{can,\{j_i,k_i\}} (\ell,t) \ra
=
\f{ | j_1,k_1,k_2 \ra_{A} \otimes | j_2,k_1,k_2 \ra_{B} - \ri t
\sum_{ \{K_i\} }
{[} Z(can)_{\ell}^{j_{1},j_{2}} {]}^{ K_1,K_2 }_{ \phantom{ K_1,K_2 } k_1, k_2 }
| j_1,K_1,K_2 \ra_{A} \otimes | j_2,K_1,K_2 \ra_{B} }
{ \sqrt{ 1+t^2 \sum_{s=0}^{2\ell} {[} Z(can)_{s}^{j_{1},j_{2}} {]}^{ k_1,k_2 }_{ \phantom{ k_1,k_2 } k_1, k_2 } }   }
\,. \label{eq:NormalizedShortTimeState-candygraph}
\ee
The normalization factor, at the denominator, can be computed explicitly using the composition rule of $\{6j\}$-symbols\footnotemark{}:
%
\footnotetext{We use the following identity on sums of $\{6j\}$-symbols, for which we choose $k_1=\tk_1,k_2=\tk_2$:
\beq
&&\sum_{ K_1,K_2}
(-1)^{\sum_{i}(\tk_i-k_i)}
\begin{Bmatrix}
   j_1 & \tk_1 & \tk_2 \\
   \ell_1 &  K_2 & K_1
  \end{Bmatrix}
\begin{Bmatrix}
   j_{2} & \tk_1 & \tk_2 \\
   \ell_1 &  K_2 & K_1
  \end{Bmatrix}
  \begin{Bmatrix}
   j_1 & k_1 & k_2 \\
   \ell_2 &  K_2 & K_1
  \end{Bmatrix}
\begin{Bmatrix}
   j_{2} & k_1 & k_2 \\
   \ell_2 &  K_2 & K_1
  \end{Bmatrix}
  \prod_{i=1}^{2}\sqrt{(2\tk_i+1)(2k_i+1)}(2K_i+1)
  \,, \nn \\
  &=&
  \sum_{ s=|\ell_1-\ell_2|}^{\ell_1+\ell_2}
(-1)^{j_{1}+j_{2}+\tk_1+\tk_2+k_1+k_2+2s}
\begin{Bmatrix}
   j_1 & \tk_1 & \tk_2 \\
   s &  k_2 & k_1
  \end{Bmatrix}
\begin{Bmatrix}
   j_{2} & \tk_1 & \tk_2 \\
   s &  k_2 & k_1
  \end{Bmatrix}
  \prod_{i=1}^{2}\sqrt{(2\tk_i+1)(2k_i+1)}  
\,. 
\nn
\eeq
}
\beq \label{eq:NormalizationFactor-candygraph}
N_{can,\{j_i,k_i\}} (\ell,t)
&=&
1+ t^2 \sum_{ s=0}^{2\ell}
(-1)^{j_{1}+j_{2}+2k_1+2k_2+2s}
\begin{Bmatrix}
   j_1 & k_1 & k_2 \\
   s &  k_2 & k_1
  \end{Bmatrix}
\begin{Bmatrix}
   j_{2} & k_1 & k_2 \\
   s &  k_2 & k_1
  \end{Bmatrix}
  \prod_{i=1}^{2}(2k_i+1)
\\
&=&
1+t^2
\sum_{ K_1,K_2}
  \begin{Bmatrix}
   j_1 & k_1 & k_2 \\
   \ell &  K_2 & K_1
  \end{Bmatrix}^2
\begin{Bmatrix}
   j_{2} & k_1 & k_2 \\
   \ell &  K_2 & K_1
  \end{Bmatrix}^2
  \prod_{i=1}^{2}(2k_i+1)(2K_i+1)
  \,.
  \label{eq:NormalizationFactor-candygraph-spinshift}
\eeq
It depends on the holonomy operator spin $\ell$, on the boundary spins $j_1$, $j_2$ and on the bulk spins $k_1$, $k_2$. It does not contain a first order term in $t$.

\medskip

We now compute the entanglement entropy from the truncated state (\ref{eq:NormalizedShortTimeState-candygraph}). Since the initial state is unentangled, this entanglement entropy is entirely created by the process.
The density matrix is written explicitly:
\beq
&&
\rho_{can_{AB} }(t)
=| \Psi_{can,\{j_i,k_i\}} (\ell,t) \ra\la \Psi_{can,\{j_i,k_i\}} (\ell,t) |
\\
&=&
\f{1}{ 1+t^2 \sum_{s=0}^{2\ell} {[} Z(can)_{s}^{j_{1},j_{2}} {]}^{ k_1,k_2 }_{ \phantom{ k_1,k_2 } k_1, k_2 } }
\bigg(
| j_1,k_1,k_2 \ra \la j_1,k_1,k_2 |_{A} \otimes | j_2,k_1,k_2 \ra\la j_2,k_1,k_2 |_{B}
\nn \\
&&
+ t^2 \sum_{ \{K_i, K'_i \} }
{[} Z(can)_{\ell}^{j_{1},j_{2}} {]}^{ K_1,K_2 }_{ \phantom{ K_1,K_2 } k_1, k_2 }
{[} Z(can)_{\ell}^{j_{1},j_{2}} {]}^{ K'_1,K'_2 }_{ \phantom{ K'_1,K'_2 } k_1, k_2 }
| j_1,K_1,K_2 \ra \la j_1,K'_1,K'_2 |_{A} \otimes | j_2,K_1,K_2 \ra\la j_2,K'_1,K'_2 |_{B}
\nn \\
&&
- \ri t
\sum_{ \{K_i\} }
{[} Z(can)_{\ell}^{j_{1},j_{2}} {]}^{ K_1,K_2 }_{ \phantom{ K_1,K_2 } k_1, k_2 }
| j_1,K_1,K_2 \ra\la j_1,k_1,k_2 |_{A} \otimes | j_2,K_1,K_2 \ra\la j_2,k_1,k_2 |_{B}
\nn \\
&&
+ \ri t
\sum_{ \{K_i\} }
{[} Z(can)_{\ell}^{j_{1},j_{2}} {]}^{ K_1,K_2 }_{ \phantom{ K_1,K_2 } k_1, k_2 }
| j_1,k_1,k_2 \ra\la j_1,K_1,K_2 |_{A} \otimes | j_2,k_1,k_2 \ra\la j_2,K_1,K_2 |_{B}
\bigg)
\quad \in \textrm{End} (\cH_{A} \otimes \cH_{B} )
\,.
\nn
\eeq
The reduced density matrix $\rho_{can_A } \in \textrm{End} (\cH_{A})$ is obtained via partial tracing over $\cH_{B}$, which is done via choosing orthonormal basis $| j_2,K_1,K_2 \ra_{B}$ to implement $\sum_{K_1,K_2} \la j_2,K_1,K_2 | \rho_{can_{AB} }(t) | j_2,K_1,K_2 \ra_{B}$, so the reduced density matrix $\rho_{can_A }(t)$ reads:
\be
\rho_{can_A }(t)
=
\f{ | j_1,k_1,k_2 \ra \la j_1,k_1,k_2 |_{A}
+ t^2 \sum_{ \{K_i \} }
\Big( {[} Z(can)_{\ell}^{j_{1},j_{2}} {]}^{ K_1,K_2 }_{ \phantom{ K_1,K_2 } k_1, k_2 } \Big)^2
| j_1,K_1,K_2 \ra \la j_1,K_1,K_2 |_{A} }
{ 1+t^2 \sum_{s=0}^{2\ell} {[} Z(can)_{s}^{j_{1},j_{2}} {]}^{ k_1,k_2 }_{ \phantom{ k_1,k_2 } k_1, k_2 } }
\,.
\ee
%
%
The eigenvalues of $\rho_{can_A }(t)$ can be read off directly from this formula since the reduced density matrix is diagonal in the  $| j_1,K_1,K_2 \ra_{A}$ basis,
\be
\lambda_{ \rho_{can_A } }[K_1,K_2]
=
\f{   \delta^{K_1}_{k_1} \, \delta^{K_2}_{k_2}
+ t^2 \,
\begin{Bmatrix}
   j_1 & k_1 & k_2 \\
   \ell &  K_2 & K_1
  \end{Bmatrix}^2
\begin{Bmatrix}
   j_{2} & k_1 & k_2 \\
   \ell &  K_2 & K_1
  \end{Bmatrix}^2
  \prod_{i=1}^{2}(2k_i+1)(2K_i+1)   }
{ 1+t^2 \sum_{s=0}^{2\ell} \, (-1)^{j_{1}+j_{2}+2k_1+2k_2+2s}
\begin{Bmatrix}
   j_1 & k_1 & k_2 \\
   s &  k_2 & k_1
  \end{Bmatrix}
\begin{Bmatrix}
   j_{2} & k_1 & k_2 \\
   s &  k_2 & k_1
  \end{Bmatrix}
  \prod_{i=1}^{2}(2k_i+1) }
\,. \label{eq:eigenvalues-candygraph}
\ee
%
%
%
%
%
%
%

We have a diagonal reduced density matrix of the type:
\be
\rho_{A}\approx \textrm{diag}\big{[}
(1-\Lambda\eps),\,a_{1}\eps,\,a_{2}\eps,\,\dots
{]} + \cO(\eps^{2})
\,,\qquad
\tr \rho_{A}=1 \Rightarrow \Lambda= \sum_{m\ge 1} a_{m}\,,
\ee
at linear order in the infinitesimal parameter $\eps$, which is to be identified to the squared time, $\eps=t^{2}$.
If one considers the von Neumann entropy as the measure of entanglement, one gets:
\be
S_{vN}[\rho_{A}]=-\tr(\rho_{A}\ln\rho_{A})
\approx
-\Lambda \eps\ln\eps +\eps (\Lambda-\sum_{m\ge 1}a_{m}\ln a_{m})\,,
\ee
which looks regular  at first glance but actually has a divergent derivative at $\eps=0$ due to the $\eps\ln\eps$ term. This is simply traced back to  the vanishing eigenvalues at initial time, i.e. our choice of initial separable state. Although we could go on working with the von Neumann entropy, it appears simpler to turn to the linear entropy (or quadratic Tsallis entropy), which is one minus the fidelity:
\be
S_{lin}[\rho_{A}]
=
1-\tr( \rho_{A}^{2})
\approx
2\Lambda \eps\,.
\ee
The leading order coefficient is the same as the coefficient in front of the divergent derivative term $ \eps\ln\eps $ of the von Neumann entropy, so they are understood to reflect the same growth rate of entanglement. 
Furthermore, one should realize that the largest eigenvalue is actually the projection of the the density matrix onto the initial separable state, which gives directly, in our case, the geometric entanglement at leading order:
\be
S_{g}\approx -\ln(1-\Lambda \eps) \approx \Lambda \eps,
\ee
which once again gives the coefficient $\Lambda$ as the growth factor of the entanglement at leading order in $\eps=t^{2}$.

\medskip

Coming back to the expression of the eigenvalues $\lambda_{ \rho_{can_A } }[K_1,K_2]$ in terms of $6j$-symbols, one extract  the growth factor $\Lambda$ from the Taylor expansion of the largest eigenvalue, obtained for $(K_{1},K_{2})=(k_{1},k_{2})$,
\be
\lambda_{ \rho_{can_A } }[k_1,k_2]\approx 1-\Lambda t^{2}+\cO(t^{3})\,,
\ee
which gives the leading order linear entropy::
%
\beq
{{\f12
}}
S(\rho_{can_{A} },t)
&=&
 t^2 \prod_{i=1}^{2}(2k_i+1) \sum_{s=0}^{2\ell} \, (-1)^{j_{1}+j_{2}+2k_1+2k_2+2s}
\begin{Bmatrix}
   j_1 & k_1 & k_2 \\
   s &  k_2 & k_1
  \end{Bmatrix}
\begin{Bmatrix}
   j_{2} & k_1 & k_2 \\
   s &  k_2 & k_1
  \end{Bmatrix}
\label{eq:IntertwinerEE-candygraph-6j}  \\
&&-t^2
\begin{Bmatrix}
   j_1 & k_1 & k_2 \\
   \ell &  k_2 & k_1
  \end{Bmatrix}^2
\begin{Bmatrix}
   j_{2} & k_1 & k_2 \\
   \ell &  k_2 & k_1
  \end{Bmatrix}^2
  \prod_{i=1}^{2}(2k_i+1)^2
 + O(t^4)
\,. \nn
\eeq
Let us first point out that the second term vanishes automatically when the holonomy operator spin is odd, $\ell \in \N+\f12$.
Then this bipartite entanglement exhibits the same plateau behavior as the geometric entanglement studied in the previous section: beyond the critical value $\ell_{c}=\min\{ 2k_1,2k_2\}+\f12$, the entropy $S(\rho_{can_{A} },t)$ (at second order) does not depend on $\ell\ge\ell_{c}$.
The origin of this plateau is simply the triangle condition of the spins: the $\{6j\}$-symbols do not vanish  only if  $s\leq 2k_1$ and $s\leq 2k_2$. A consequence is that the factor of the first term becomes constant (with respect to $\ell$) as soon as  $\ell \leq \min\{ k_1,k_2\}$ while the second term similarly does not depend on $\ell$ as soon as $\ell > \min\{ 2k_1,2k_2\}$.
%
%
Hence this confirms the critical value analysis for the geometric entanglement as given by equation (\ref{eq:2ndDGME-ClosureDefect}) in the previous section.

\smallskip

To conclude this section, we remark that, on top of this similar plateau behavior of the linear entropy for large spins $\ell$, this entanglement entropy  looks also very close to the geometric entanglement (\ref{eq:1LoopTrivalentSpinNetwork-2ndDerivative}) computed previously. Indeed, comparing the formulas, it appears that the geometric entanglement at 2nd order in $t$ corresponds to the eigenvalue $\lambda_{ \rho_{can_A } }[K_1,K_2]$ with no spin shift, $(K_{1},K_{2})=(k_{1},k_{2})$. We look into the relation between these two measures of entanglement  in more details below and show that they are indeed equal at 2nd order.


\subsection{Geometric entanglement and holonomy operator dispersion}

In this subsection we compare the holonomy operator dispersion (\ref{eq:1LoopTrivalentSpinNetwork-2ndDerivative}), which gives the 2nd term coefficient of the geometric entanglement, with the reduced density matrix  entropy computed above. It turns out that the geometric entanglement and the bipartite entanglement entropy are equal at 2nd-order of $t$, which is a neat consistency check of our approach in the simple example of the candy graph.

\smallskip

In order to compute the geometric entanglement via the  holonomy operator dispersion formula (\ref{eq:1LoopTrivalentSpinNetwork-2ndDerivative}), we first need to gauge-fix the bulk spin network and derive the closure defect distribution. Here, in the case of the candy graph, there are two possible gauge-fixing choices: either we gauge-fix the holonomy along the second egde $k_2$ to the identity as in fig.\ref{fig:gaugefixing-candygraph-1}, or we gauge-fix the first edge  $k_1$ to a trivial holonomy as in fig.\ref{fig:gaugefixing-candygraph-2}.
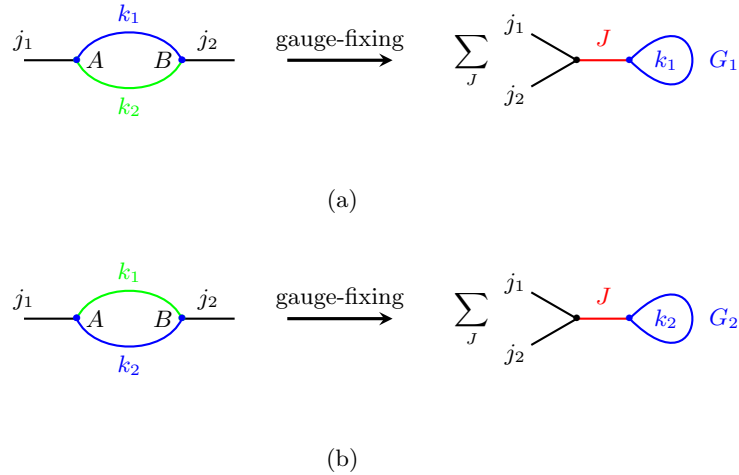
\begin{figure}[htb]
\begin{subfigure}[t]{0.5\linewidth}
\begin{tikzpicture}[scale=0.7]

\coordinate (A1) at (0,0);
\coordinate (A2) at (2,0);

\draw[thick] (A1) node[right] {$A$}-- ++ (180:1) node[above] {$j_1$} ;
\draw[thick] (A2) node[left] {$B$} -- ++ (1,0) node[above,midway] {$j_{2}$};

\draw[blue,thick,in=115,out=65,rotate=0] (A1) to node[above,midway] {$k_1$} (A2) node[scale=0.7] {$\bullet$};
\draw[green, thick,out=245,in=-65] (A2) to node[below] {$k_2$} (A1) node[scale=0.7] {$\bullet$};

\draw (A1) node[blue,scale=0.7] {$\bullet$};
\draw (A2) node[blue,scale=0.7] {$\bullet$};

\draw[->,>=stealth,very thick] (4,0) -- node[above] {gauge-fixing} (6,0);

\coordinate (O) at (9.5,0);

\draw[thick,red] (O) -- ++ (0:1) node[midway,above=2] {$J$} coordinate (B) node[blue,scale=0.7] {$\bullet$};
\draw[blue,thick,in=-45,out=45,scale=4.5,rotate=0] (B)  to[loop] node[midway,left=2] {$k_1$} (B) ++(0:0.4) node {$G_1$};

\draw[thick] (O) -- ++ (150:1) ++(150:0.35) node {$j_1$};
\draw[thick] (O) ++ (180:2) node {$\displaystyle{\sum_{J} }$};
\draw[thick] (O) -- ++ (210:1) ++(210:0.35) node {$j_2$};

\draw (O) node[scale=0.7] {$\bullet$};

\end{tikzpicture}
\caption{}
\label{fig:gaugefixing-candygraph-1}
\end{subfigure}
\begin{subfigure}[t]{0.5\linewidth}
\begin{tikzpicture}[scale=0.7]
\coordinate (A1) at (0,0);
\coordinate (A2) at (2,0);

\draw[thick] (A1) node[right] {$A$}-- ++ (180:1) node[above] {$j_1$} ;
\draw[thick] (A2) node[left] {$B$} -- ++ (1,0) node[above,midway] {$j_{2}$};

\draw[green,thick,in=115,out=65,rotate=0] (A1) to node[above,midway] {$k_1$} (A2) node[scale=0.7] {$\bullet$};
\draw[blue, thick,out=245,in=-65] (A2) to node[below] {$k_2$} (A1) node[scale=0.7] {$\bullet$};

\draw (A1) node[blue,scale=0.7] {$\bullet$};
\draw (A2) node[blue,scale=0.7] {$\bullet$};

\draw[->,>=stealth,very thick] (4,0) -- node[above] {gauge-fixing} (6,0);

\coordinate (O) at (9.5,0);

\draw[thick,red] (O) -- ++ (0:1) node[midway,above=2] {$J$} coordinate (B) node[blue,scale=0.7] {$\bullet$};
\draw[blue,thick,in=-45,out=45,scale=4.5,rotate=0] (B)  to[loop] node[midway,left=2] {$k_2$} (B) ++(0:0.4) node {$G_2$};

\draw[thick] (O) -- ++ (150:1) ++(150:0.35) node {$j_1$};
\draw[thick] (O) ++ (180:2) node {$\displaystyle{\sum_{J} }$};
\draw[thick] (O) -- ++ (210:1) ++(210:0.35) node {$j_2$};

\draw (O) node[scale=0.7] {$\bullet$};
\end{tikzpicture}
\caption{}
\label{fig:gaugefixing-candygraph-2}
\end{subfigure}
\caption{The gauge-fixings on candy graph. The {\color{green}{green}} labels the maximal tree.
}
\end{figure}
These two gauge-fixings lead to different closure defect probability distributions $p_{k_1}(J)$ and $p_{k_2}(J)$, explicitly given by
\be
p_{k_1}(J)=(2k_2+1)(2J+1)\begin{Bmatrix}
J & k_1 & k_1 \\
k_2 & j_1 & j_2
\end{Bmatrix}^2
\,, \qquad
p_{k_2}(J)=(2k_1+1)(2J+1)\begin{Bmatrix}
J & k_2 & k_2 \\
k_1 & j_1 & j_2
\end{Bmatrix}^2
\,.
\ee
Both distributions are normalized, $\sum_{J} p_{k_1}(J)=\sum_{J} p_{k_2}(J)=1$. 
The inequality $p_{k_1}(J) \neq p_{k_2}(J)$ reflects the fact that different choices of gauge-fixing path translate into different boundary maps from bulk holonomies onto boundary states.
Numerically, the difference can be striking. For instance, if set $k_1=6,k_2=8$, $j_1=5,j_2=4$, the closure defect distributions are, at three decimals,

\vspace*{2mm}
\begin{tabular}{ |p{1.3cm} | p{1.3cm}| p{1.3cm}|p{1.3cm} | p{1.3cm}| p{1.3cm}|p{1.3cm} | p{1.3cm}| p{1.3cm}| p{1.3cm} |}
 \hline
 & $ J=1 $ & $J=2$ & $J=3$ & $J=4$ & $J=5$ & $J=6$ & $J=7$ & $J=8$ & $J=9$ \\
 \hline
 $p_{k_1}(J)$ & $0.198$ & $0.009$ & $0.166$ & $0.068$ & $0.083$ & $0.297$ & $0.153$ & $0.024$ & $0.001$ \\
 \hline
  $p_{k_2}(J)$ & $0.068$ & $0.144$ & $0.070$ & $0.002$ & $0.112$ & $0.146$ & $0.008$ & $0.131$ & $0.320$ \\
 \hline
\end{tabular}

\vspace*{2mm}
Although the closure defect probability distribution depends on the gauge-fixing, the loop holonomy operator is gauge-invariant, and thus its dispersion computed from either gauge-fixing choice turns out to be the same. Using $p_{k_1}(J)$ we have explicitly:
\beq
\f12\f{ \rd^2 S_{g} }{ \rd t^2} \Big\vert_{t=0}
&=&
\sum_{J} (2k_2+1)(2J+1)\begin{Bmatrix}
J & k_1 & k_1 \\
k_2 & j_1 & j_2
\end{Bmatrix}^2 \sum_{s=0(1)}^{2\ell} (-1)^{J+s+2k_1}
\begin{Bmatrix}
J & k_1 & k_1 \\
s & k_1 & k_1
\end{Bmatrix}
(2k_1+1)
\nn
\\
&&
-
\left(
\sum_{J} (2k_2+1)(2J+1)\begin{Bmatrix}
J & k_1 & k_1 \\
k_2 & j_1 & j_2
\end{Bmatrix}^2 (-1)^{J+\ell+2k_1}
\begin{Bmatrix}
J & k_1 & k_1 \\
\ell & k_1 & k_1
\end{Bmatrix}
(2k_1+1)
\right)^2
\,. \label{eq:2ndDerivative-candygraph}
\eeq
We can prove that the  formula above is actually equal to the geometric entropy (\ref{eq:IntertwinerEE-candygraph-6j}). More precisely, what we need to prove is:
\be
\begin{Bmatrix}
j_1 & k_1 & k_2 \\
\ell & k_2 & k_1
\end{Bmatrix}
\begin{Bmatrix}
j_2 & k_1 & k_2 \\
\ell & k_2 & k_1
\end{Bmatrix}
=
(-1)^{j_1+j_2+2k_1+2k_2+2\ell}
\sum_{J} (2J+1)\begin{Bmatrix}
J & k_1 & k_1 \\
k_2 & j_1 & j_2
\end{Bmatrix}^2 (-1)^{J+\ell+2k_1}
\begin{Bmatrix}
J & k_1 & k_1 \\
\ell & k_1 & k_1
\end{Bmatrix}
\,,
\ee
which is simply a particular case of the general Biedenharn-Elliot identity (see e.g. \cite{Bonzom:2009zd})
\be
\begin{Bmatrix}
j & h & g \\
k & a & b
\end{Bmatrix}
\begin{Bmatrix}
j & h & g \\
f & d & c
\end{Bmatrix}
=
\sum_{l} (-1)^{a+b+c+d+f+k+h+g+j+l}(2l+1)
\begin{Bmatrix}
k & f & l \\
d & a & g
\end{Bmatrix}
\begin{Bmatrix}
a & d & l \\
c & b & j
\end{Bmatrix}
\begin{Bmatrix}
b & c & l \\
f & k & h
\end{Bmatrix}
\,.
\label{eq:B-E-identity}
\ee

Summing over the spin $J$, one recovers exactly the formula for the 2nd order coefficient of the reduced density matrix linear entropy (\ref{eq:IntertwinerEE-candygraph-6j}). This is not only a check that the two measures of entanglements -the bipartite entanglement between the two candy graph vertices and the multipartite geometric entanglement- are equal at leading order in the time $t$, but it also confirms that the geometric entanglement does not depend on the  choice of gauge-fixing tree.

%


\subsection{Geometric interpretation of the entanglement in the semi-classical regime}

We would like to provide the entanglement calculations with a geometric interpretation, for instance understand the extrema of the entanglement in terms of the geometry represented by the spin network states. To this purpose, we work in the semi-classical regime of spin networks at large spins, $j_i,k_i\,\gg1$.
The spin network geometry can be interpreted in terms of the  dual triangulation as illustrated on fig.\ref{fig:candygraph-Triangulation}.
%
It turns out the entanglement excitation can be described in terms of the dual triangulation angles. Moreover, we find the maximal growth rates of the entanglement corresponds either to flat bulk geometry or maximally curved bulk geometry.
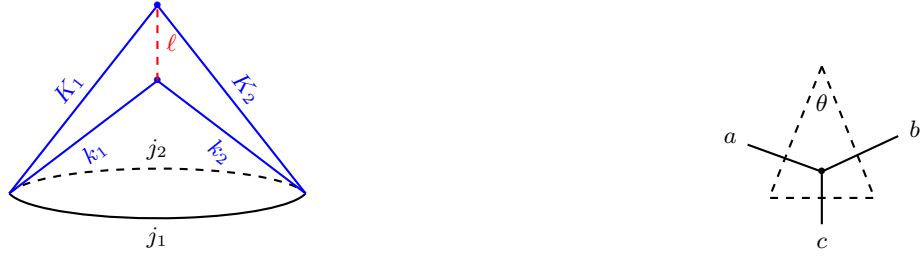
\begin{figure}
\begin{subfigure}[t]{0.55\linewidth}
\begin{tikzpicture}[thick,scale=1]
    \draw[dashed] (0,0) arc (170:10:2cm and 0.4cm) node[sloped, midway, above]{$j_2$} coordinate[pos=0] (a);
    \draw (0,0) arc (-170:-10:2cm and 0.4cm) node[sloped,midway,below]{$j_1$} coordinate (b);

    \draw[blue] (a) --node[sloped,midway,below] {$k_1$} ([yshift=1.5cm]$(a)!0.5!(b)$) node[scale=0.7] {$\bullet$} --node[sloped,midway,below] {$k_2$} (b);
        \draw[blue] (a) --node[sloped,midway,above] {$K_1$} ([yshift=2.5cm]$(a)!0.5!(b)$) node[scale=0.7] {$\bullet$} --node[sloped,midway,above] {$K_2$} (b);
  
\draw[dashed,red] ([yshift=1.5cm]$(a)!0.5!(b)$) --node[right,midway] {$\ell$} ([yshift=2.5cm]$(a)!0.5!(b)$);      
\end{tikzpicture}
\caption{The triangulation for the candy graph. The two triangles of respect length $j_1,k_1,k_2$ and $j_2,k_1,k_2$ are glued by matching bulk spins $k_1,k_2$. Loop holonomy operator changes the height of the cone. The bulk geometry can be viewed from the shape of the cone.
}
\label{fig:candygraph-Triangulation}
\end{subfigure}
\hspace*{10mm}
\begin{subfigure}[t]{0.3\linewidth}
\begin{tikzpicture}[thick,scale=0.7]

\coordinate (O) at (0,0);
\coordinate (A) at (0,2);
\coordinate (B) at (-1,-0.5);
\coordinate (C) at (1,-0.5);

\draw[dashed] (A) -- (B) -- (C) -- cycle;

\draw (O) -- ++ (160:1.5) ++ (160:0.35) node {$a$};

\draw (O) -- ++ (25:1.6) ++ (25:0.35) node {$b$};

\draw (O) -- ++ (270:1.0) ++ (270:0.35) node {$c$} ;

\draw (O) node[scale=0.7] {$\bullet$};

\draw (A) ++ (270:0.7) node {$\theta$};

\end{tikzpicture}
\caption{The triangle given by three spins. The spin-$c$ opposes to angle $\theta$.
}
\label{fig:triangle-3spins}
\end{subfigure}
\caption{The triangulation for candy graph spin network.
}
\end{figure}

\smallskip

We will look at two cases: small loop holonomy operator spin $\ell$ and large loop holonomy operator spin $\ell$.
Let us first look into the case of small loop holonomy operator spin, for which we have $\ell \ll \{j_i,k_i\}$. We use the Racah's approximation for $s \in \N$ and $s \ll a,b,c$ (cf. \cite{osti_4824659}):
\be
\begin{Bmatrix}
c & a & b \\
s & b & a
\end{Bmatrix}
\approx
\f{(-1)^{a+b+s+\ell} }{ \sqrt{ (2a+1)(2b+1) } } P_{s} (\cos\theta)
\,, \quad \text{with} \quad
\cos\theta=\f{ a(a+1)+b(b+1)-c(c+1) }{ 2\sqrt{ a(a+1)b(b+1) } }
\,. \label{eq:Angle-Asymptotic-6j-Racah}
\ee
The $P_{s}$'s are the  Legendre polynomials, while $\theta$ is the angle opposite of the edge of length $c$ in the triangle with edge lenghts $a,b,c$, as drawn in fig.\ref{fig:triangle-3spins}.
%
%
%
%
By plugging the Racah's approximation into the linear entropy formula (\ref{eq:IntertwinerEE-candygraph-6j}), we derive an approximation in terms of the triangle angles:
\beq
\ell\in \N+\f12: \qquad
\f12 S(\rho_{can_{A} },t)
&\approx&
\f12 S(\rho^{\theta}_{can_{A} },t)
=
t^2 \sum_{s=0}^{2\ell} \, P_s (\cos\theta_1) P_s (\cos\theta_2) + O(t^4)
\label{eq:IntertwinerEE-candygraph-hN}
\\
\ell\in\N: \qquad
\f12 S(\rho_{can_{A} },t)
&\approx&
\f12 S(\rho^{\theta}_{can_{A} },t)
=
t^2 \sum_{s=0}^{2\ell} \, P_s (\cos\theta_1) P_s (\cos\theta_2) - t^2 [ P_{\ell}(\cos\theta_1) P_{\ell}(\cos\theta_2) ]^2 + O(t^4)
\,.
\label{eq:IntertwinerEE-candygraph-N}
\eeq
The trivial case with vanishing spin $\ell=0$ excites no entanglement as expected. The two expressions for odd and even spins would be exactly the same if one assumed the convention that half-integer Legendre polynomials vanish. The plots fig.\ref{fig:IntertwinerEE-ShortTimes-candygraph} show the growth rate of intertwiner entanglement provided by the approximation. The fig.\ref{fig:NumericalComparisonLS-candygraph} compares above approximation with entropy (\ref{eq:IntertwinerEE-candygraph-6j}).

%
\begin{figure}[!]
\begin{subfigure}[t]{.45\linewidth} \includegraphics[width=1.0\textwidth]{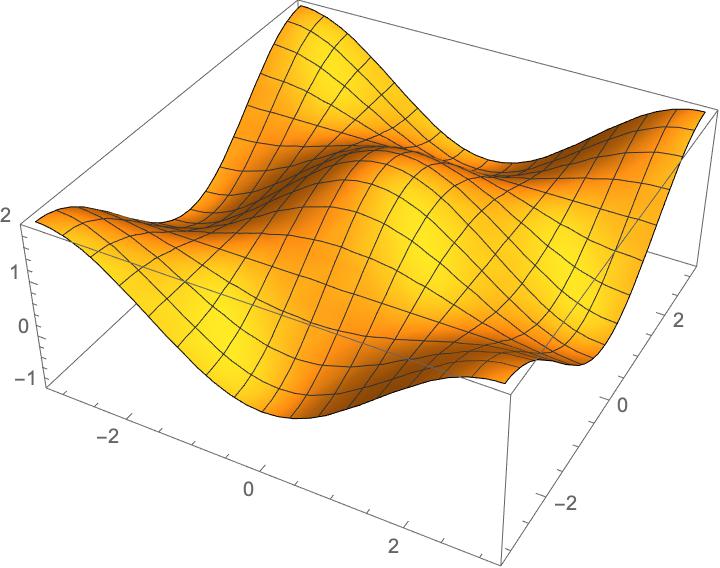}
\caption{$\ell=\f12$.}
\end{subfigure}
\begin{subfigure}[t]{.45\linewidth} \includegraphics[width=1.0\textwidth]{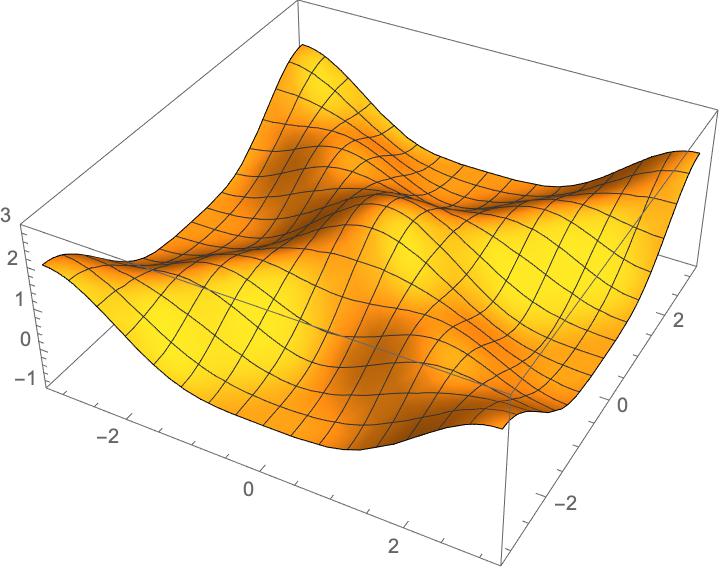}
\caption{$\ell=1$.}
\end{subfigure}
\begin{subfigure}[t]{.45\linewidth} \includegraphics[width=1.0\textwidth]{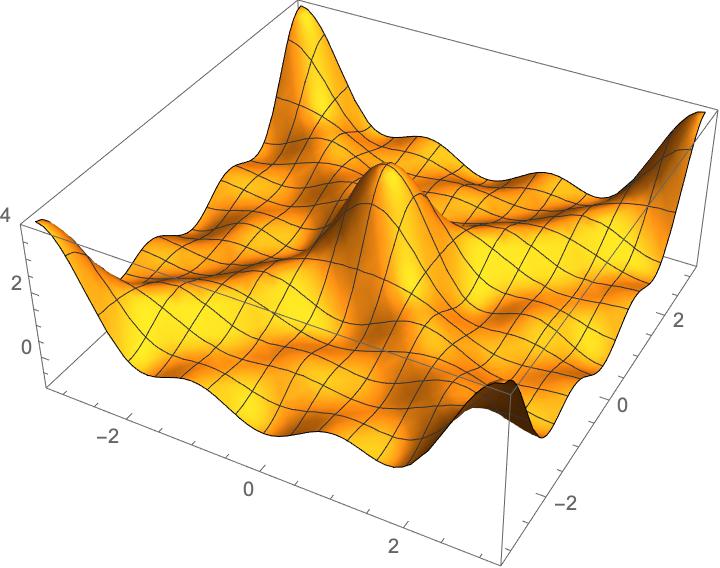}
\caption{$\ell=\f32$.}
\end{subfigure}
\begin{subfigure}[t]{.45\linewidth} \includegraphics[width=1.0\textwidth]{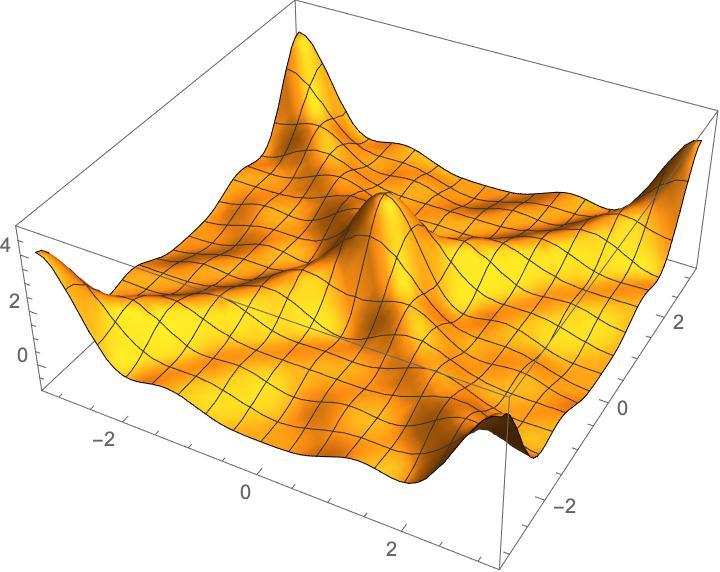}
\caption{$\ell=2$.}
\end{subfigure}
\begin{subfigure}[t]{.45\linewidth} \includegraphics[width=1.0\textwidth]{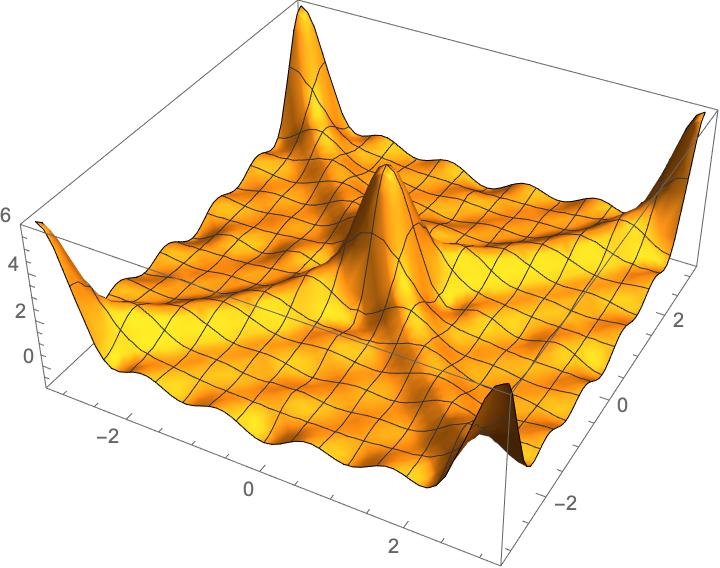}
\caption{$\ell=\f52$.}
\end{subfigure}
\begin{subfigure}[t]{.45\linewidth} \includegraphics[width=1.0\textwidth]{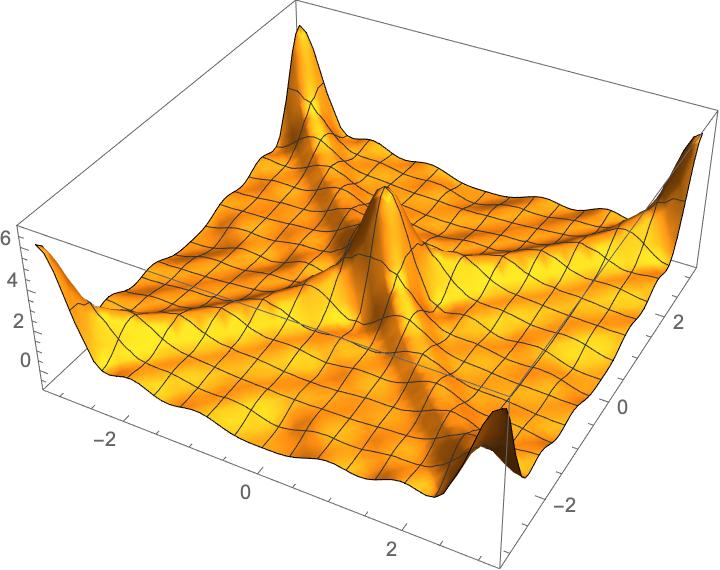}
\caption{$\ell=3$.}
\end{subfigure}
\caption{The plots of approximation (\ref{eq:IntertwinerEE-candygraph-hN},\ref{eq:IntertwinerEE-candygraph-N}) with low spin-$\ell$. The $x$ and $y$ axis presents corner angle $\theta_1$ and $\theta_{2}$ respectively}
\label{fig:IntertwinerEE-ShortTimes-candygraph}
\end{figure}
\begin{figure}
\begin{subfigure}[t]{.45\linewidth} \includegraphics[width=1.0\textwidth]{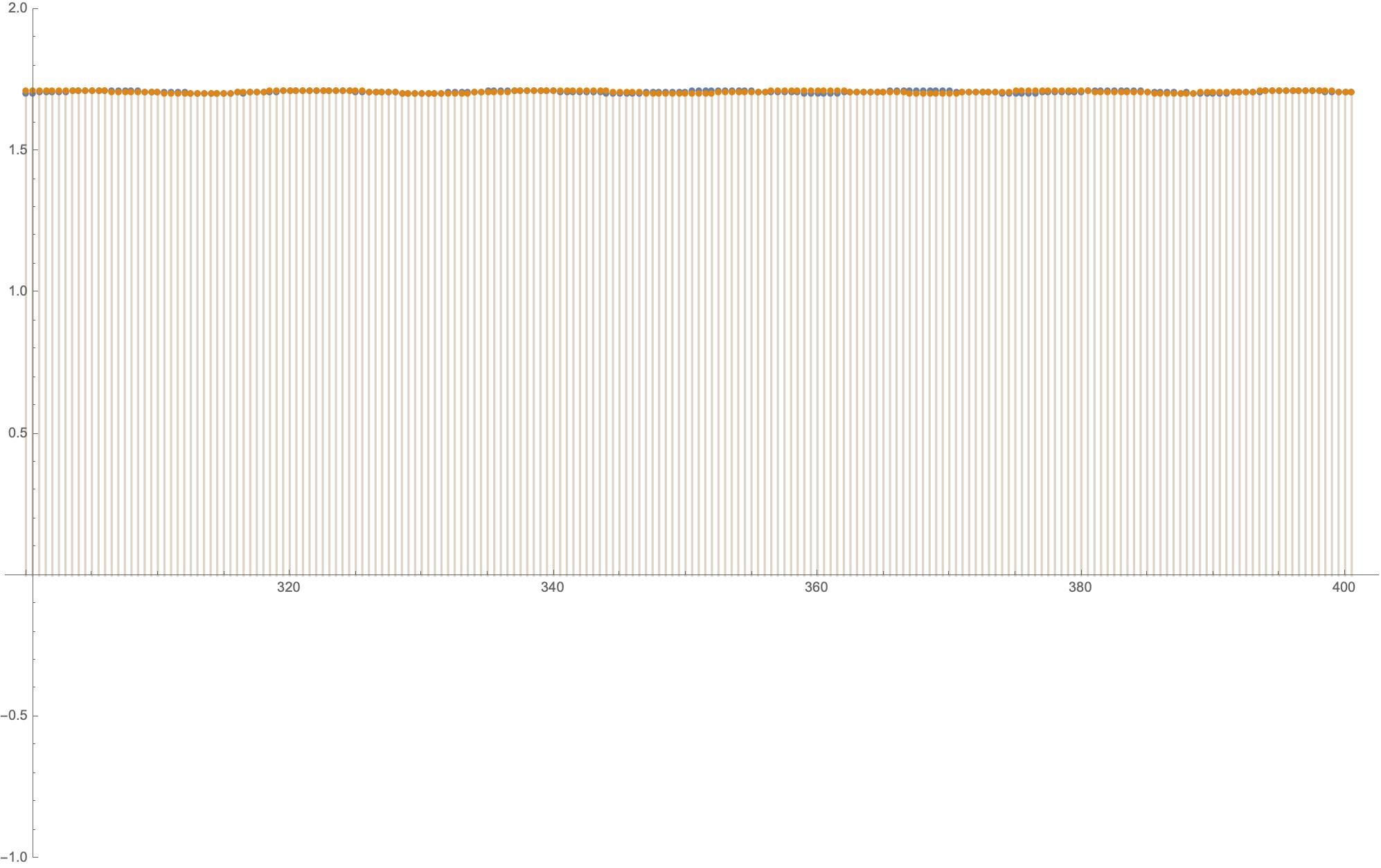}
\caption{Spins $j_1=500,j_2=400,k_1=700,k_2=600$, corner angles $\cos\theta_1=1502 \sqrt{\f{2}{8847321} }$ and $\cos\theta_2=\f{329}{2} \sqrt{\f{21}{842602} }$. Loop spin-$\ell$ ranges from $300$ to $400.5$.}
\end{subfigure}
\begin{subfigure}[t]{.45\linewidth} \includegraphics[width=1.0\textwidth]{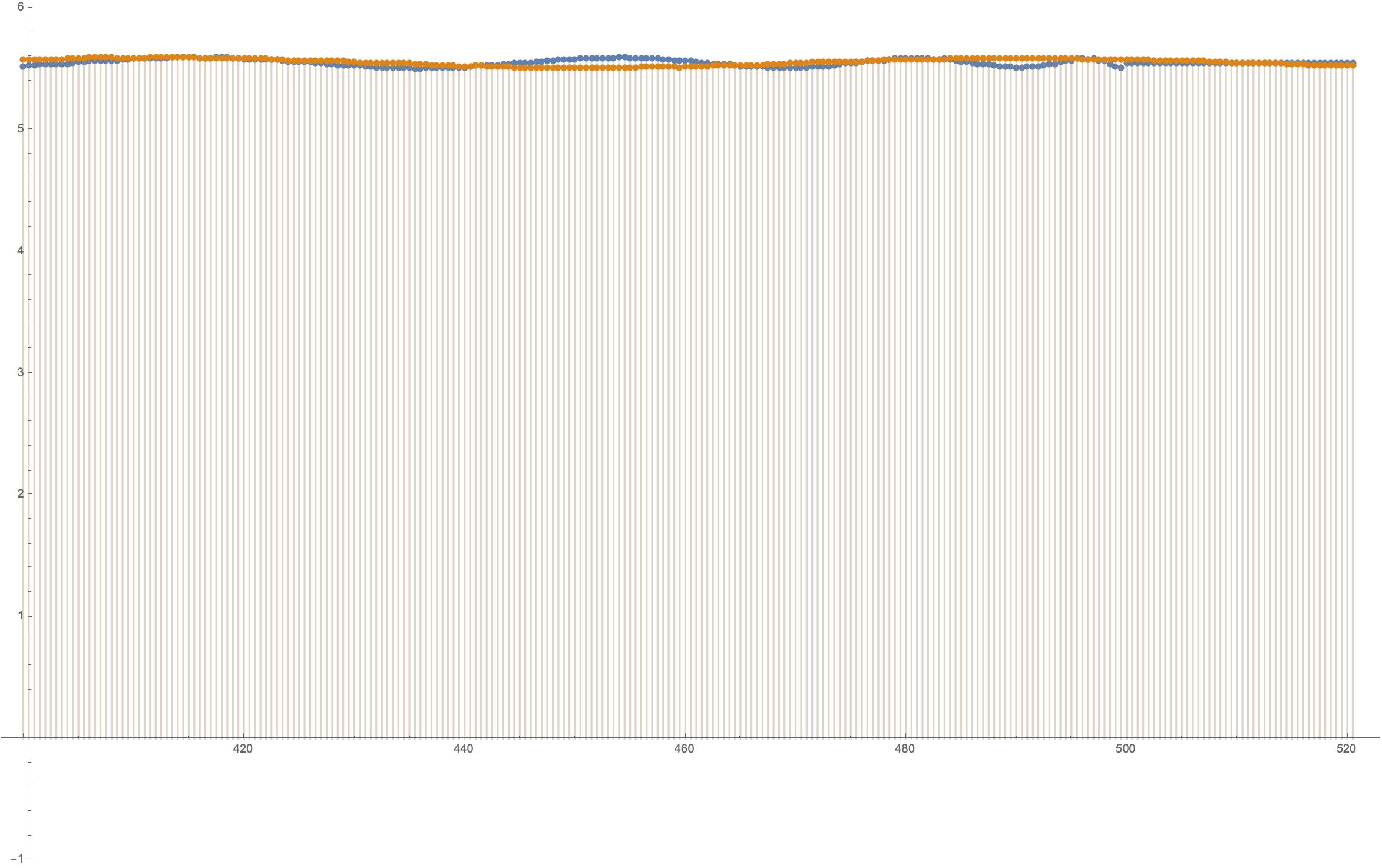}
\caption{Spins $j_1=100,j_2=120,k_1=500,k_2=510$, corner angles $\cos\theta_1=\f{50101}{30 \sqrt{2901458} }$ and $\cos\theta_2=\f{16553}{10 \sqrt{2901458} }$. Loop spin-$\ell$ ranges from $400$ to $520.5$.}
\end{subfigure}
\begin{subfigure}[t]{.45\linewidth} \includegraphics[width=1.0\textwidth]{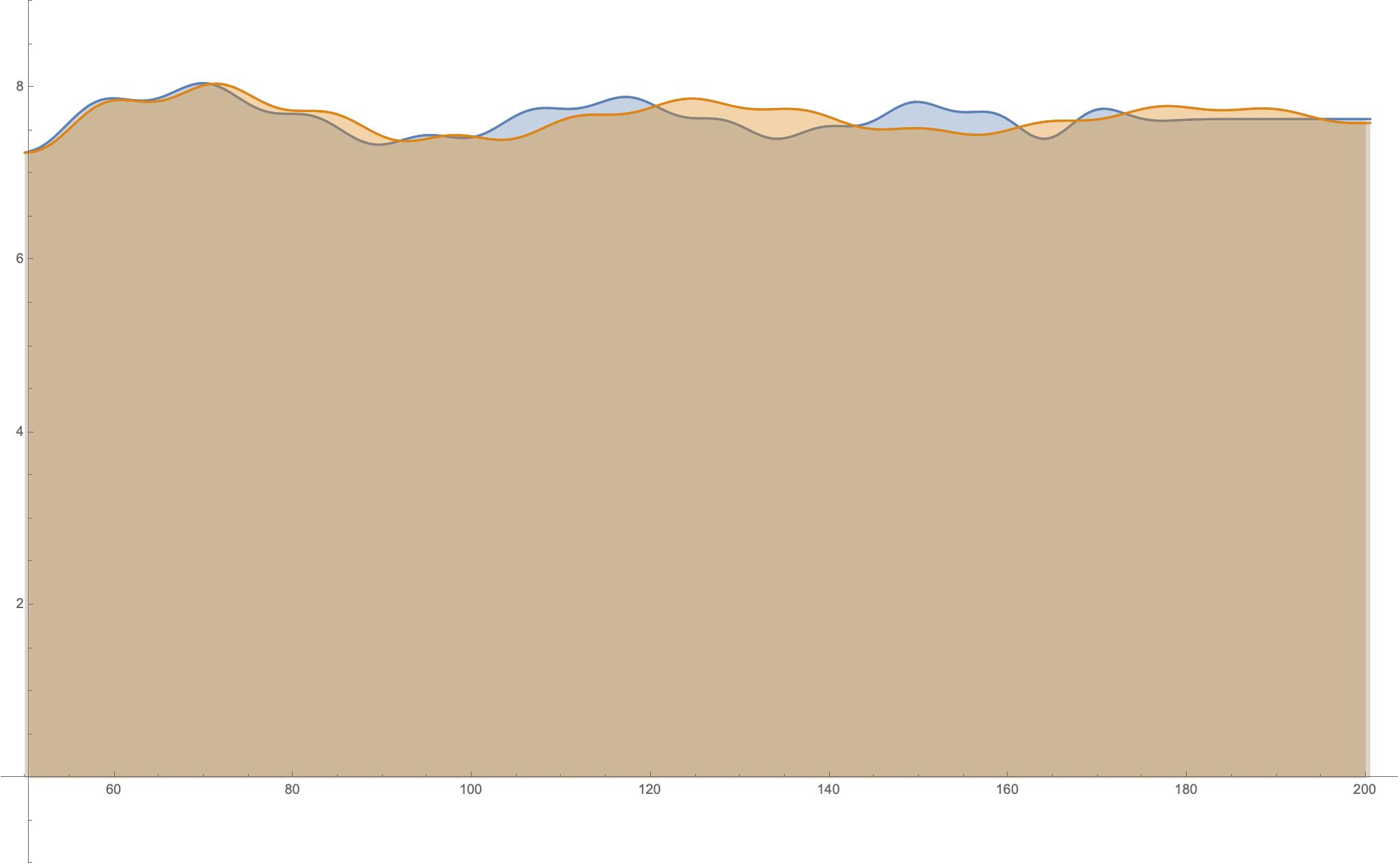}
\caption{Spins $j_1=20,j_2=30,k_1=200,k_2=190$, corner angles $\cos\theta_1=\f{7607}{4 \sqrt{3647145} }$ and $\cos\theta_2=\f{1889}{\sqrt{3647145} }$. Loop spin-$\ell$ ranges from $50$ to $200.5$.}
\end{subfigure}
\begin{subfigure}[t]{.45\linewidth} \includegraphics[width=1.0\textwidth]{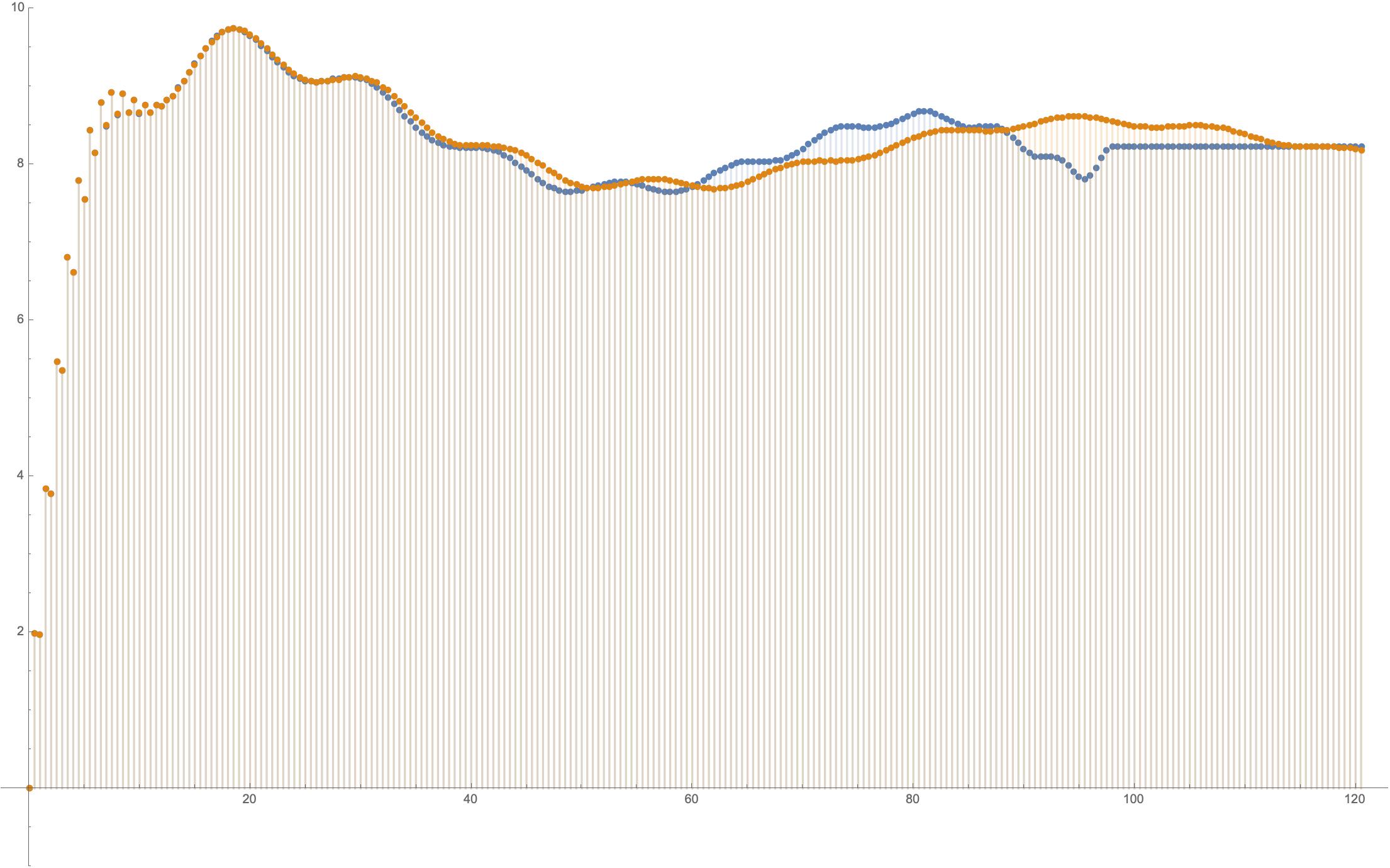}
\caption{Spins $j_1=10,j_2=14,k_1=100,k_2=98$, corner angles $\cos\theta_1=\f{1641}{35 \sqrt{2222}}$ and $\cos\theta_2=\f{2449\sqrt{2/1111}}{105}$. Loop spin-$\ell$ ranges from $0$ to $120.5$.}
\end{subfigure}
\begin{subfigure}[t]{.45\linewidth} \includegraphics[width=1.0\textwidth]{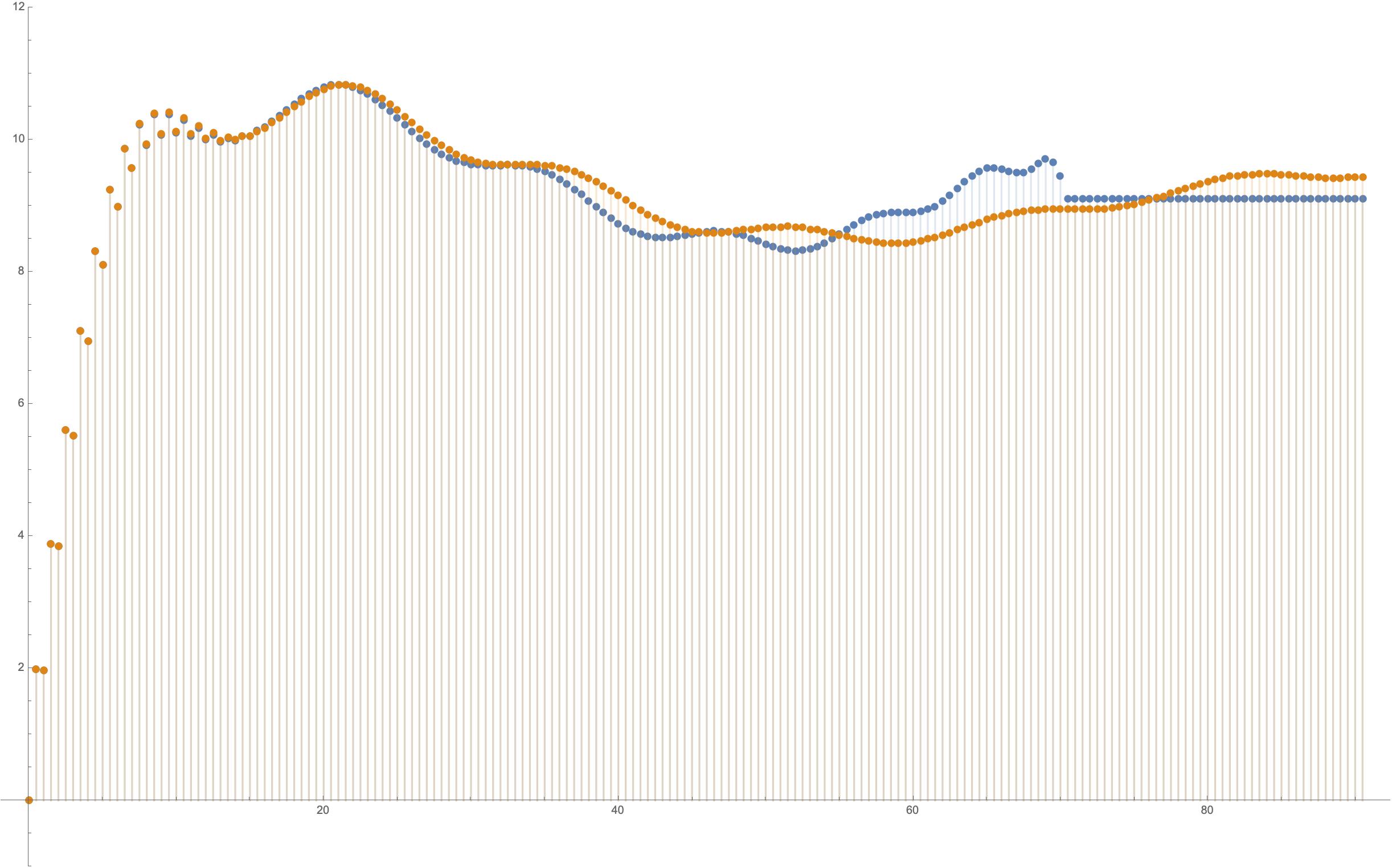}
\caption{Spins $j_1=5.5,j_2=8.5,k_1=70.5,k_2=71$, corner angles $\cos\theta_1=0.996503$ and $\cos\theta_2=0.992071$. Loop spin-$\ell$ ranges from $0$ to $90.5$.}
\end{subfigure}
\begin{subfigure}[t]{.45\linewidth} \includegraphics[width=1.0\textwidth]{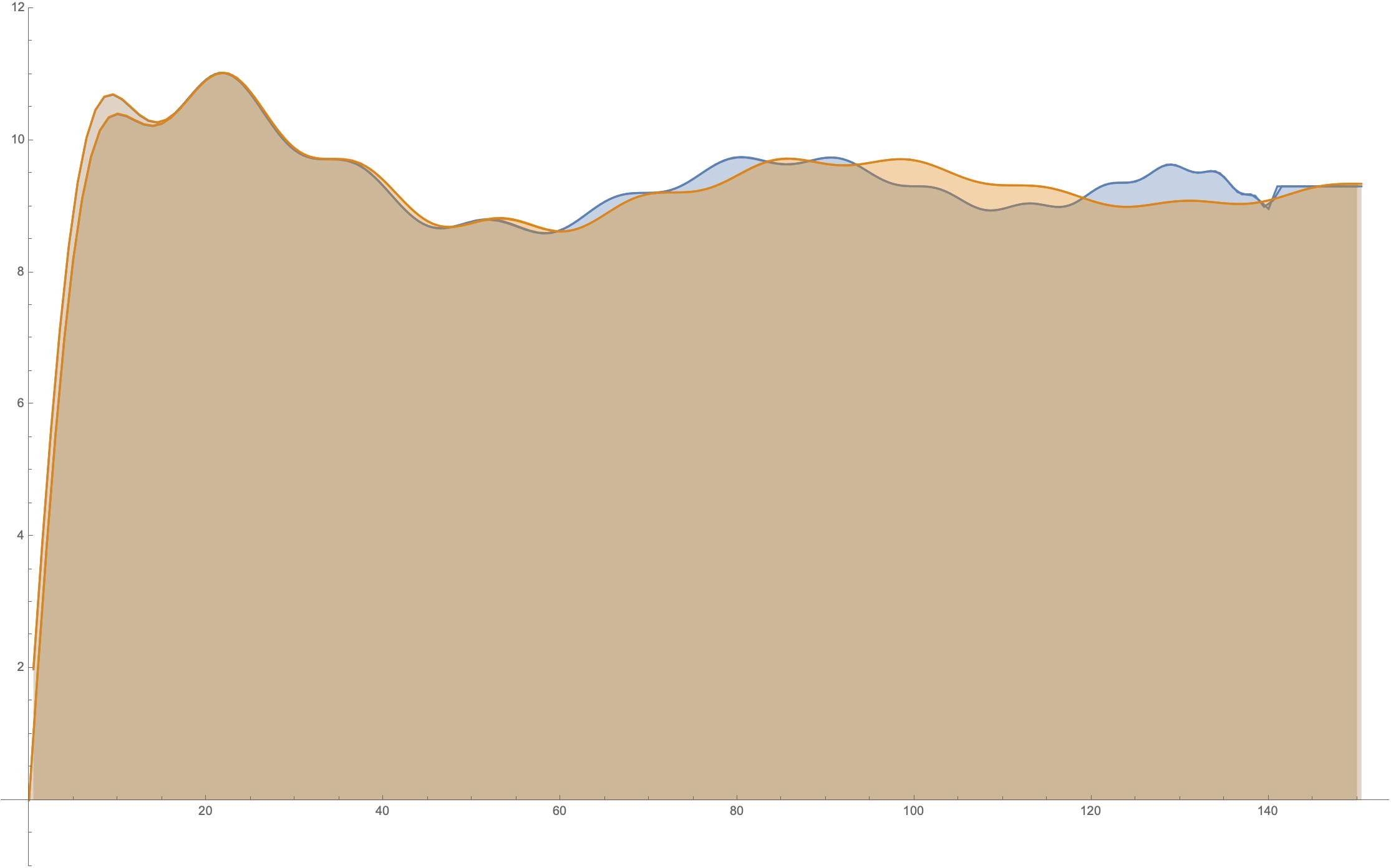}
\caption{Spins $j_1=11,j_2=17,k_1=141,k_2=142$, corner angles $\cos\theta_1=\f{773 \sqrt{13/1551}}{71}$ and $\cos\theta_2=\f{20011}{142 \sqrt{20163}}$. Loop spin-$\ell$ ranges from $0$ to $150.5$.}
\end{subfigure}
\caption{The numerical comparison between expression of $6j$-symbols (\ref{eq:IntertwinerEE-candygraph-6j}) {\color{blue}{blue}} and of Legendre polynomials (\ref{eq:IntertwinerEE-candygraph-hN},\ref{eq:IntertwinerEE-candygraph-N}) {\color{orange}{orange}} with different spins. The expression of Legendre polynomials still behaves well even $\ell$ grows.}
\label{fig:NumericalComparisonLS-candygraph}
\end{figure}

\smallskip

The approximation $S(\rho^{\theta}_{can_{A} },t)$ now allows us to study how the entanglement excitation depends on the initial state $|\{j_{i},k_{i}\}\ra$. More precisely, the angles $\theta_{1,2}$ encapsulates the relevant initial data and we would like to determine the extremal initial configurations, i.e. that maximize or minimize the leading order entanglement entropy given by the 2nd order coefficient.
It turns out that:
\begin{itemize}
\item (i) the balanced confugurations $\theta_1=\theta_2$ always stabilize the entanglement excitation;
\item (ii) both extremally curved and totally flat geometry maximize the entanglement excitation.
\end{itemize}

Let us start with (i). We introduce the total angle $\alpha=\theta_1+\theta_2$ and relative angle $\beta=\theta_1-\theta_2$. Let us keep $\alpha$ fixed and let the entanglement vary in terms of the difference $\beta$. Differentiating\footnote{
The derivative on Legendre polynomial is given by the recursion relation:
\be
\f{\rd P_{n} (x) }{ \rd x }
=\f{n}{x^2-1}[ x P_{n}(x)-P_{n-1}(x) ]
\,. \nn
\ee
}
$P_{s}(\cos\theta_1)P_{s}(\cos\theta_2)=P_{s}(\cos\f{\alpha+\beta}{2} )P_{s}(\cos\f{\alpha-\beta}{2})$ with respect to $\beta$ gives the stationarity equation:
\beq
0
&=&
\f{ \pp }{ \pp \beta }
\Big(  P_{s}(\cos\f{\alpha+\beta}{2} )P_{s}(\cos\f{\alpha-\beta}{2}) \Big)
\nn \\
&=&
\f{s}{2}\f{ -\sin\beta }{\sin\f{\alpha+\beta}{2}\sin\f{\alpha-\beta}{2} } P_{s}(\cos\f{\alpha+\beta}{2} )P_{s}(\cos\f{\alpha-\beta}{2})
+\f{s}{2\sin\f{\alpha-\beta}{2} } P_{s}(\cos\f{\alpha+\beta}{2} )P_{s-1}(\cos\f{\alpha-\beta}{2})
\nn \\
&&-\f{s}{2\sin\f{\alpha+\beta}{2} } P_{s-1}(\cos\f{\alpha+\beta}{2} )P_{s}(\cos\f{\alpha-\beta}{2})
\,, \qquad 0 \leq \alpha, \beta \leq 2\pi
\,.
\nn
\eeq
The extremal angles $\beta=0$ and $\beta=\pi$ are both clear solutions. Similarly differentiating with respect to $\alpha$, we get stationary points $\alpha=0$ and $\alpha=\pi$ when $\beta$ is kept fixed.

The configurations $\theta_1=\theta_2=0$ and $\theta_1=\theta_2=\pi$, in the $\beta=0$ branch, give the maximal entanglement excitation.
A vanishing relative angle $\beta=0$ corresponds to equal triangle angles $\theta_1=\theta_2$, which means that the boundary spins are equal $j_{1}=j_{2}$.
This is interpreted as the flat connection configuration. Indeed, the flat constraint operator $\delta(g)=\sum_{\ell}(2\ell+1)\chi_{\ell}(g)$, imposing that the loop holonomy be trivial, annihilates initial spin network state $| \Psi_{can,\{j_i,k_i\}} \ra=| j_1,k_1,k_2 \ra_{A} \otimes | j_2,k_1,k_2 \ra_{B}$ as soon as $j_{1}\neq j_{2}$. Thus imposing the flatness of the connection imposes that $j_{1}=j_{2}$.
For such $\beta=0$ configurations, the entanglement evaluates to:
\be
\label{eq:IntertwinerEE-candygraph-sc}
\beta=0\quad\Rightarrow\quad
\f12 S(\rho^{\theta}_{can_{A} },t)
=
t^2 \sum_{s=0}^{2\ell} \, \left[ P_s (\cos\f{\alpha}{2}) \right]^2 - t^2 \left[ P_{\ell} (\cos\f{\alpha}{2}) \right]^4 + O(t^4)
\quad\textrm{with}\quad\alpha=\theta_{1}+\theta_{2}\,.
\ee
As plotted on fig.\ref{fig:Oscillation-candygraph}, the entanglement first decreases then increases but it never vanishes.
\begin{figure} \includegraphics[width=0.7\textwidth]{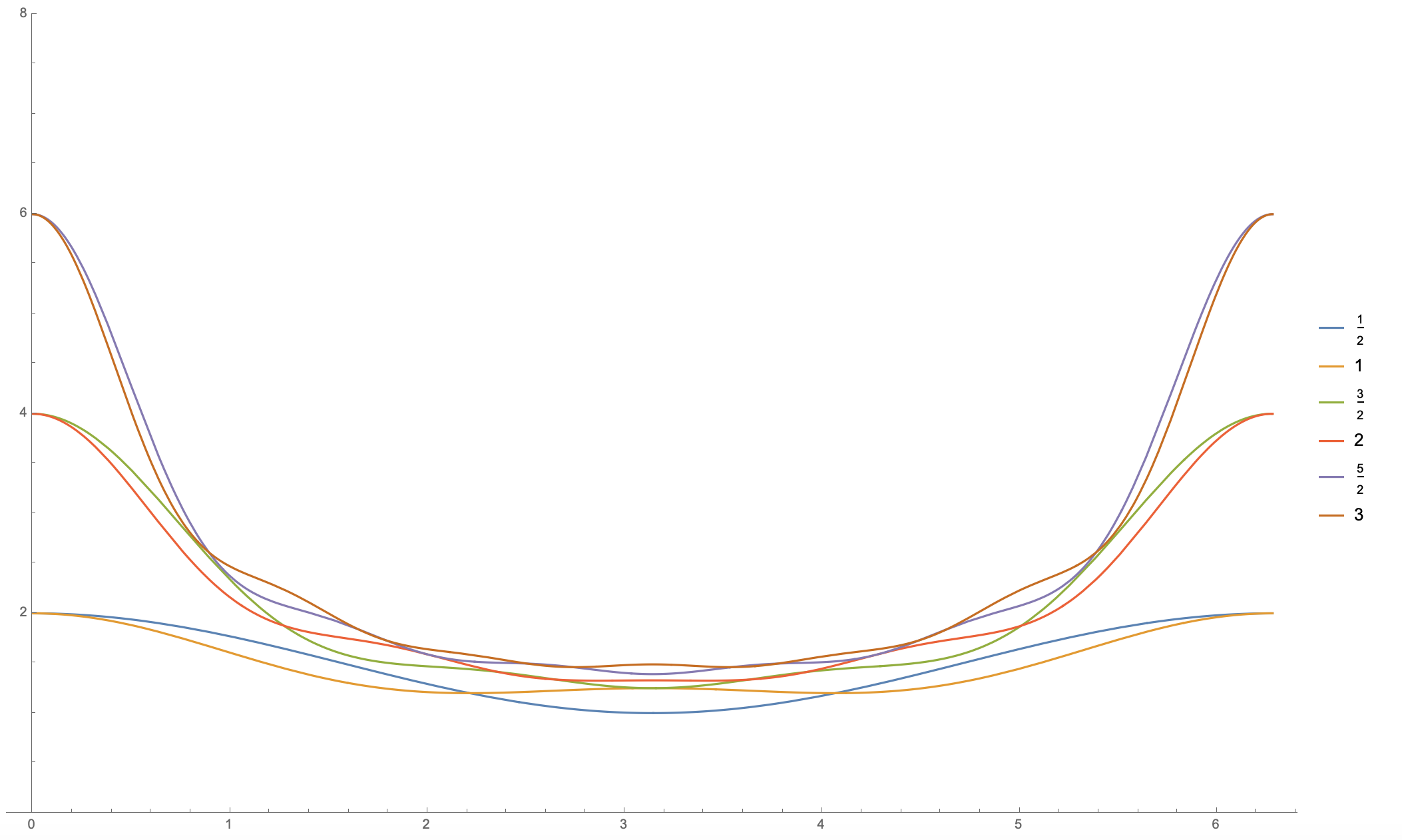}
\caption{When $\theta_1=\theta_2$, the entanglement (\ref{eq:IntertwinerEE-candygraph-sc}) for various spin values$\ell$ plotted in terms of the total angle $\alpha=\theta_1+\theta_2\,\in[0,2\pi]$.
}
\label{fig:Oscillation-candygraph}
\end{figure}
Recalling that the deficit angle given by $\delta=2\pi-\theta_1-\theta_2=2\pi-\alpha$ at the cone summit provides a measure of the bulk curvature  in fig.\ref{fig:candygraph-Triangulation}, we can interpret the optimal angle confugirations in terms of discrete bulk geometry.
The two maximal points correspond to dramatically different bulk geometries. Indeed, the angular-configuration $\theta_1=\theta_2=0$ has a maximal deficit angle; it corresponds to the spin configurations $j_1,j_2 \ll k_1,k_2$ and $k_1 \approx k_2$, with extremely elongated triangles, creating an extremely  spiky bulk with maximal bulk curvature. On the other hand, the angle configuration $\theta_1=\theta_2=\pi$ has a vanishing deficit angle; it corresponds to  spin configuratiosn $j_1 \approx k_1+k_2 \approx j_2$, with flatten triangles and flat bulk geometry.

The  $\beta=\pi$
branch lead to a minimal entanglement, with a vanishing leading order in $t^{2}$, $S(\rho^{\theta}_{can_{A} },t)={\cal O}(t^{4)}$. It corresponds to the maximal  difference between the two triangle angles $\theta_1$ and $\theta_2$.Geometrically,  this corresponds to unbalanced configuration, with $j_1\ll j_{2}$ or $j_{2}\ll j_{1}$, with one flatten triangle and one elongated triangle.

%
%

\medskip

Now that we have looked into the small holonomy operator spin $\ell \ll \{ j_i,k_i\}$, let us  investigate the  large spin  regime,  $\ell \sim \{ j_i,k_i\}$.  Since all the spins are assumed of the same order of magnitude, one can use the Ponzano-Regge asymptotic formula in terms of the Regge action for a  tetrahedron\cite{osti_4824659,Roberts:1998zka,Rovelli:2014ssa}:
\be
\begin{Bmatrix}
j_1 & k_1 & k_2 \\
\ell & K_2 & K_1
\end{Bmatrix}
\approx
\f{1}{ \sqrt{ 12\pi V} } \cos \left( \sum_{ab} ( j_{ab} + \f12) \theta_{ab} + \f{\pi}{4} \right)
=
\f{ 1 }{ 2\sqrt{-12 \ri \pi V} } e^{\ri S}
+
\f{ 1 }{ 2\sqrt{12 \ri \pi V} } e^{-\ri S}
\,. \label{eq:Regge-6js}
\ee
The tetrahedron's edge lengths are given by the spins $\ell,j_1,k_1,k_2,K_1,K_2$. Its volume is $V$ and $\theta_{ab}$ is the dihedral angle along the edge of length $j_{ab}$. 
The Regge action is given by $S=\sum_{ab} ( j_{ab} + \f12) \theta_{ab}$, which is understood to be the discretization of Einstein-Hilbert action (or more precisely of the Hartle-Hawking boundary term when the bulk curvature vanishes on-shell).
This allows to rewrite (\ref{eq:IntertwinerEE-candygraph-6j}) as:
\beq
\ell\in \N+\f12: \qquad
\f12 S(\rho_{can_{A} },t)
\approx
\f12 S(\rho^{(\textrm{Regge})}_{can_{A} },t)
&=&
t^2 (-1)^{\vphi} \sum_{s=0}^{2\ell} \,
\f{ \cos (S_{A}[s]-S_{B}[s]) - \sin( S_{A}[s]+S_{B}[s] ) }{ 24\pi\sqrt{V_A[s] V_B[s]} }
  \prod_{i=1}^{2}(2k_i+1)
  \nn
  \\
  &&+ O(t^4)
  \,,
\label{eq:IntertwinerEE-candygraph-hN-Regge}
\\
\ell\in\N: \qquad
\f12 S(\rho_{can_{A} },t)
\approx
\f12 S(\rho^{(\textrm{Regge})}_{can_{A} },t)
&=&
t^2 (-1)^{\vphi} \sum_{s=0}^{2\ell} \,
\f{ \cos (S_{A}[s]-S_{B}[s]) - \sin( S_{A}[s]+S_{B}[s] ) }{ 24\pi\sqrt{V_A[s] V_B[s]} }
  \prod_{i=1}^{2}(2k_i+1)
 \nn
 \\
&&-t^2
\f{ [ 1-\sin (2S_A[\ell]) ][ 1-\sin (2S_B[\ell]) ]  }{ 576 \pi^2 V_A[\ell] V_B[\ell]}
  \prod_{i=1}^{2}(2k_i+1)^2
 + O(t^4)
\,. \label{eq:IntertwinerEE-candygraph-N-Regge}
\eeq
The Regge actions $S_A[s]$ and $S_B[s]$ corresponds respectively to the two tetrahedra dual to the vertices $A$ and $B$, with edge lengths $(j_i,k_1,k_2,s,k_2,k_1)$ respectivelk with $i=1$ and $i=2$.
The  phase $\vphi=j_1+j_2+2k_1+2k_2$ and face amplitude factor $(2k_1+1)(2k_2+1)$ both depend on the bulk spins.

\medskip

Let us conclude this section with a brief summary of our results on the example of the candy graph.
The candy graph is the simplest graph with a non-trivial bulk. It is a special case of the 2-vertex graph \cite{Borja:2010gn,Borja:2010rc,Livine:2011up,Aranguren:2022nzn} where the two vertices are linked by a couple of edges forming a loop in the bulk. This configuration allows to define non-trivial dynamical operators acting on the bulk geometry, for instance the loop holonomy operator, and to study the entanglement between the two bulk vertices generated by such dynamics.

In this context, we have explicitly computed the entanglement excitation created by the loop holoonomy operator acting on a pure spin network basis state at leading order in time, that is in $t^2$. We have considered and compared two measures of entanglement: the geometric entanglement studied in the the previous section and the  bipartite  entanglement entropy  shared between the two vertices (given by the linear entropy of the reduced density matrix). We have shown that these two notions of entanglement match exactly a leading order, thus leading to a consistent picture of the entanglement excitation on the spin network state.
Furthermore, in the semi-classical regime at large spins where spin networks can be provided with a discrete geometry interpretation in terms of dual triangulations, we have identified the initial configurations that optimize the excitation of entanglement by the loop operator as the (discrete) geometries with either maximal curvature or vanishing curvature.
%


\section{Triangle graph: multipartite entanglement} \label{section:TriangleGraph}

In this section, we wish to provide the reader with an example beyond the bipartite entanglement. We introduce the triangle graph, i.e. a triangle made with three vertices and with three boundary edges poking out from those vertices. The bulk triangle forms a loop on which we can act with the holonomy operator. This configuration allows to study both bipartite and tripartite entanglements by partionning the triplet of vertices, and to compare them with the geometric entanglement.



\subsection{Bipartite entanglement on triangle graph}

The triangle graph, as drawn on fig.\ref{fig:trianglegraph}, is the direct extension of the candy graph to a bulk made of three vertices arranged around a single loop.
\begin{figure}[htb!]
\begin{tikzpicture}[scale=1]

\coordinate (A) at (0,0);

\draw[thick] (A) ++(60:1.2) coordinate(B3) node[scale=0.7,blue] {$\bullet$} node[right]{$C$} --++ (60:1) node[right] {$j_3$};
\draw[thick] (A) ++(180:1.2) coordinate(B1) node[scale=0.7,blue] {$\bullet$} node[above]{$A$}--++ (180:1) node[above] {$j_1$};
\draw[thick] (A) ++(300:1.2) coordinate(B2) node[right]{$B$} --++ (300:1) node[right] {$j_2$};

\draw[blue,thick] (B1) -- node[below] {$k_3$} (B2) node[scale=0.7,blue] {$\bullet$} -- node[right] {$k_1$} (B3) node[scale=0.7,blue] {$\bullet$} --node[above=2,midway] {$k_2$} (B1) node[scale=0.7,blue] {$\bullet$};

\path (B1) ++ (0:0.3) coordinate(C1);
\path (B2) ++ (120:0.3) coordinate(C2);
\path (B3) ++ (240:0.3) coordinate(C3);

\draw [line width=1pt,red,->,>=latex,rounded corners] (C1) -- (C2) -- (C3) -- (C1);

\draw[->,>=stealth,very thick] (1.75,0) -- (2.75,0);

\coordinate (D) at (7,0);

\draw[thick] (D) ++(60:1.2) coordinate(E3) node[scale=0.7,blue] {$\bullet$} node[right]{$C$} --++ (60:1) node[right] {$j_3$};
\draw[thick] (D) ++(180:1.2) coordinate(E1) node[scale=0.7,blue] {$\bullet$} node[above]{$A$}--++ (180:1) node[above] {$j_1$};
\draw[thick] (D) ++(300:1.2) coordinate(E2) node[right]{$B$} --++ (300:1) node[right] {$j_2$};

\draw[thick] (E1) -- ++ (180:1) node[above] {$j_1$} ++ (180:1) node {$\displaystyle{\sum_{K_1,K_2,K_3} }$};

\draw[blue,thick] (E1) -- node[below] {$K_3$} (E2) node[scale=0.7,blue] {$\bullet$} -- node[right] {$K_1$} (E3) node[scale=0.7,blue] {$\bullet$} --node[above=2,midway] {$K_2$} (E1) node[scale=0.7,blue] {$\bullet$};

\end{tikzpicture}
\caption{The action of loop holonomy operator on triangle graph: it endows spin-superposition along the bulk edges.
}
\label{fig:trianglegraph}
\end{figure}
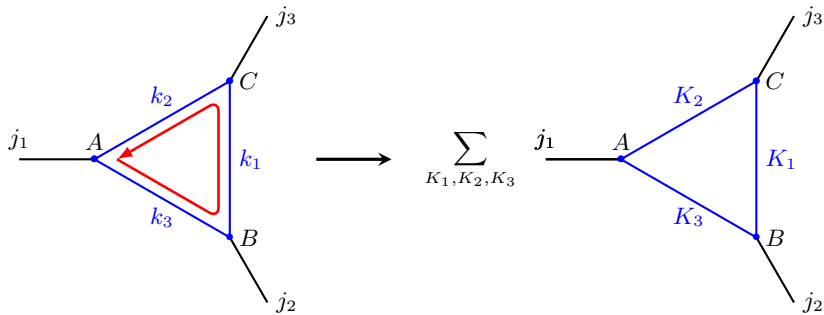
The spin network Hilbert space on this triangle graph consists in the tensor product of three 3-valent intertwiner spaces, similarly to the candy graph,
\be
\cH_{tri}=\bigotimes_{v\in tri} \cH_{v}
=
\cH_A \otimes \cH_B \otimes \cH_C
\,, \qquad
\cH_{A}
=
\bigoplus_{K_2,K_3}
\textrm{Inv}_{\SU(2)} \Big( \cV_{j_1} \otimes \cV_{K_2} \otimes \cV_{K_3} \Big)
\,,
\nn
\ee
and similarly for $\cH_B$ and $\cH_C$.

We wish to study the action of the loop holonomy operator  on an arbitrary spin network basis state chosen as our initial state,
\beq
| \Psi_{ tri, \{ j_i,k_i\} } \ra
=
| j_1, k_2, k_3 \ra_A \otimes | j_2, k_3, k_1 \ra_B \otimes | j_3, k_1, k_2 \ra_C
\quad \in
\cH_A \otimes \cH_B \otimes \cH_C
\,. \nn
\eeq
The loop holonomy operator affects the bulk spins $\{ K_i \}$, but not boundary spins $\{ j_i\}$,
\beq
\wh{\chi_{\ell} }
:&&
\textrm{Inv}_{\SU(2)} \Big( \cV_{j_1} \otimes \cV_{k_2} \otimes \cV_{k_3} \Big)
\otimes
\textrm{Inv}_{\SU(2)} \Big( \cV_{j_2} \otimes \cV_{k_3} \otimes \cV_{k_1} \Big)
\otimes
\textrm{Inv}_{\SU(2)} \Big( \cV_{j_3} \otimes \cV_{k_1} \otimes \cV_{k_2} \Big)
\nn \\
&&\to
\bigoplus_{K_1,K_2,K_3}
\textrm{Inv}_{\SU(2)} \Big( \cV_{j_1} \otimes \cV_{K_2} \otimes \cV_{K_3} \Big)
\otimes
\textrm{Inv}_{\SU(2)} \Big( \cV_{j_2} \otimes \cV_{K_3} \otimes \cV_{K_1} \Big)
\otimes
\textrm{Inv}_{\SU(2)} \Big( \cV_{j_3} \otimes \cV_{K_1} \otimes \cV_{K_2} \Big)
\,.
\nn
\eeq
Following the same logic as with the candy graph, we compute the evolution of the state, truncated to leading order and properly normalized,
\be
| \Psi_{tri, \{j_i,k_i\}} (\ell,t) \ra
=
\f{ | \Psi(tri)_{\{j_i,k_i\}} \ra - \ri t \, \sum_{ \{K_i\} }
\, [ Z(tri)_{\ell}^{ \{j_i\} } ]^{\{ K_i \} }_{ \phantom{ \{K_i\} }\{ k_i \} }
\, | \Psi(tri)_{\{j_i,K_i\}} \ra }
{ \sqrt{ 1+t^2 \sum_{s=0}^{2\ell} \, [ Z(tri)_{s}^{ \{j_i\} } ]^{\{ k_i \} }_{ \phantom{ \{k_i\} }\{ k_i \} } }   }
\,. \label{eq:NormalizedShortTimeState-TriangleGraph}
\ee
The transition matrix $Z$ is expressed in terms of $6j$-symbols according to the general formula  (\ref{eq:Composition-Amplitudes}) for the action of the loop operator  $\wh{\chi_{\ell} }$,
\be
\label{eq:State-shorttime-trianglegraph}
[ Z(tri)_{\ell}^{ \{j_i\} } ]^{ \{K_i\} }_{ \phantom{ \{K_i\} }\{ k_i \} }
=
(-1)^{\sum_{i=1}^{3}( j_i+k_i+K_i + \ell)} \begin{Bmatrix}
   j_1 & k_2 & k_3 \\
   \ell &  K_3 & K_2
  \end{Bmatrix}\begin{Bmatrix}
   j_2 & k_3 & k_1 \\
   \ell &  K_1 & K_3
  \end{Bmatrix} \begin{Bmatrix}
   j_3 & k_1 & k_2 \\
   \ell & K_2 & K_1
  \end{Bmatrix} \, \prod_{i=1}^{3} \sqrt{(2k_i+1)(2K_i+1)}
\,.
\ee
The normalization factor can then be evaluated to
\beq \label{eq:NormalizationFactor-trianglegraph}
N_{tri,\{j_i,k_i\}} (\ell,t)
&=&
1+ t^2 \sum_{ s=0}^{2\ell}
(-1)^{j_{1}+j_{2}+j_3+2k_1+2k_2+2k_3+3s}
\begin{Bmatrix}
   j_1 & k_2 & k_3 \\
   s &  k_3 & k_2
  \end{Bmatrix}
\begin{Bmatrix}
   j_{2} & k_3 & k_1 \\
   s &  k_1 & k_3
  \end{Bmatrix}
  \begin{Bmatrix}
   j_{3} & k_1 & k_2 \\
   s &  k_2 & k_1
  \end{Bmatrix}
  \prod_{i=1}^{3}(2k_i+1)
\\
&=&
1+
t^2\sum_{ K_1,K_2,K_3}
  \begin{Bmatrix}
   j_1 & k_2 & k_3 \\
   \ell &  K_3 & K_2
  \end{Bmatrix}^2
\begin{Bmatrix}
   j_{2} & k_3 & k_1 \\
   \ell &  K_1 & K_3
  \end{Bmatrix}^2
  \begin{Bmatrix}
   j_{3} & k_1 & k_2 \\
   \ell &  K_2 & K_1
  \end{Bmatrix}^2
  \prod_{i=1}^{3}(2k_i+1)(2K_i+1)
  \,.
  \label{eq:NormalizationFactor-trianglegraph-spinshift}
\eeq
As for candy graph, this normalization factor is to be interpreted as the density of spin network basis states after the action of loop holonomy operator.

The initial spin network state is a basis state, thus is fully separable with vanishing entanglement. The final state is given by the following density matrix:
\beq \label{eq:DensityMatrix-shorttime-trianglegraph}
&&
\rho_{tri_{ABC} }(t)
=| \Psi_{tri,\{j_i,k_i\}} (\ell,t) \ra\la \Psi_{can,\{j_i,k_i\}} (\ell,t) |
\quad \in \textrm{End} (\cH_{A} \otimes \cH_{B} \otimes \cH_{C} )
\\
&=&
\f{1}{ 1+t^2 \sum_{s=0}^{2\ell} [ Z(tri)_{s}^{ \{j_i\} } ]^{ \{k_i\} }_{ \phantom{ \{k_i\} }\{ k_i \} } }
\bigg(
| j_1,k_2,k_3 \ra \la j_1,k_2,k_3 |_{A} \otimes | j_2,k_3,k_1 \ra\la j_2,k_3,k_1 |_{B} \otimes | j_3,k_1,k_2 \ra\la j_3,k_1,k_2 |_{C}
\nn \\
&&
+ t^2 \sum_{ \{K_i, K'_i \} }
[ Z(tri)_{\ell}^{ \{j_i\} } ]^{ \{K_i\} }_{ \phantom{ \{K_i\} }\{ k_i \} }
[ Z(tri)_{\ell}^{ \{j_i\} } ]^{ \{K'_i\} }_{ \phantom{ \{K'_i\} }\{ k_i \} }
| j_1,K_2,K_3 \ra \la j_1,K'_2,K'_3 |_{A} \otimes | j_2,K_3,K_1 \ra\la j_2,K'_3,K'_1 |_{B}
\nn \\
&&\otimes | j_3,K_1,K_2 \ra\la j_3,K'_1,K'_2 |_{C}
\nn \\
&&
- \ri t
\sum_{ \{K_i\} }
[ Z(tri)_{\ell}^{ \{j_i\} } ]^{ \{K_i\} }_{ \phantom{ \{K_i\} }\{ k_i \} }
| j_1,K_2,K_3 \ra\la j_1,k_2,k_3 |_{A} \otimes | j_2,K_3,K_1 \ra\la j_2,k_3,k_1 |_{B} \otimes | j_3,K_1,K_2 \ra\la j_3,k_1,k_2 |_{C}
\nn \\
&&
+ \ri t
\sum_{ \{K_i\} }
[ Z(tri)_{\ell}^{ \{j_i\} } ]^{ \{K_i\} }_{ \phantom{ \{K_i\} }\{ k_i \} }
| j_1,k_2,k_3 \ra\la j_1,K_2,K_3 |_{A} \otimes | j_2,k_3,k_1 \ra\la j_2,K_3,K_1 |_{B} \otimes | j_3,k_1,k_2 \ra\la j_3,K_1,K_2 |_{C}
\bigg)
\,.
\nn
\eeq
Now there are three possible bipartition of the triple tensor product $\cH_{A} \otimes \cH_{B} \otimes \cH_{C}$. For the sake of convenience, let us look at the reduced density matrices for $\cH_{A}$, $\cH_{B}$ and $\cH_{C}$ respectively, moreover, we only need to write down $\rho_{tri_{A} }(t)$ and the other two cases are similar.
Via tracing over $\cH_{BC}$, the reduced density matrix is read
\beq
\rho_{tri_{A} }(t)
=
\tr_{BC} \rho_{tri_{ABC} }(t)
&=&
\f{1}{ 1+t^2 \sum_{s=0}^{2\ell} [ Z(tri)_{s}^{ \{j_i\} } ]^{ \{K_i\} }_{ \phantom{ \{K_i\} }\{ k_i \} } }
\bigg(
| j_1,k_2,k_3 \ra \la j_1,k_2,k_3 |_{A}
\\
&&+ t^2 \sum_{K_1, K_2,K_3}
\Big( [ Z(tri)_{\ell}^{ \{j_i\} } ]^{ K_1 K_2 K_3 }_{ \phantom{ K_1 K_2 K_3 }\{ k_i \} } \Big)^2
| j_1,K_2,K_3 \ra \la j_1,K_2,K_3 |_{A}
\bigg)
\quad \in \textrm{End} (\cH_{A} )
\,.
\nn
\eeq
This is a diagonal matrix with respect to basis $| j_1,K_2,K_3 \ra \in \cH_{A}$, thus the eigenvalues of $\rho_{tri_{A} }(t)$ can be directly read off from the expression above:
\be
\lambda_{ \rho_{tri_A } }[K_2,K_3]
=
\f{   \delta^{K_2}_{k_2} \, \delta^{K_3}_{k_3}
+ t^2 \sum_{K_1=| k_1-\ell |}^{k_1+\ell} 
\begin{Bmatrix}
   j_1 & k_2 & k_3 \\
   \ell &  K_3 & K_2
  \end{Bmatrix}^2
  \begin{Bmatrix}
   j_2 & k_3 & k_1 \\
   \ell &  K_1 & K_3
  \end{Bmatrix}^2
  \begin{Bmatrix}
   j_3 & k_1 & k_2 \\
   \ell & K_2 & K_1
  \end{Bmatrix}^2 \, \prod_{i=1}^{3} (2K_i+1)(2k_i+1)
 }
{ 1+t^2 \sum_{s=0}^{2\ell} (-1)^{\sum_{i=1}^{3}( j_i+k_i+K_i + s) } \begin{Bmatrix}
   j_1 & k_2 & k_3 \\
   s &  k_3 & k_2
  \end{Bmatrix}\begin{Bmatrix}
   j_2 & k_3 & k_1 \\
   s &  k_1 & k_3
  \end{Bmatrix} \begin{Bmatrix}
   j_3 & k_1 & k_2 \\
   s & k_2 & k_1
  \end{Bmatrix} \, \prod_{i=1}^{3} (2k_i+1)  }
\,. \label{eq:eigenvalues-shortime-trianglegraph}
\ee
The eigenvalues are labeled by the two spins $K_2$ and $K_3$. This gives us the bipartite entanglement entropy of $\cH_{A} | \cH_{B} \otimes \cH_{C}$ up to  second order in $t$:
\beq
&&
\f12 S(\rho_{tri_A })
=
{{\f12
\left(1
-\sum_{K_2,K_3}  \lambda_{ \rho_{tri_A } }[K_2,K_3] ^2 \right)
}}
\\
&=&
t^2 \bigg(
\sum_{ s=0}^{2\ell}
(-1)^{j_{1}+j_{2}+j_3+2k_1+2k_2+2k_3+3s}
\begin{Bmatrix}
   j_1 & k_2 & k_3 \\
   s &  k_3 & k_2
  \end{Bmatrix}
\begin{Bmatrix}
   j_{2} & k_3 & k_1 \\
   s &  k_1 & k_3
  \end{Bmatrix}
  \begin{Bmatrix}
   j_{3} & k_1 & k_2 \\
   s &  k_2 & k_1
  \end{Bmatrix}
  \prod_{i=1}^{3}(2k_i+1)
  \nn
  \\
&&
-
\sum_{K_1=| k_1-\ell |}^{k_1+\ell}
\begin{Bmatrix}
   j_1 & k_2 & k_3 \\
   \ell &  k_3 & k_2
  \end{Bmatrix}^2
  \begin{Bmatrix}
   j_2 & k_3 & k_1 \\
   \ell &  K_1 & k_3
  \end{Bmatrix}^2
  \begin{Bmatrix}
   j_3 & k_1 & k_2 \\
   \ell & k_2 & K_1
  \end{Bmatrix}^2 \, (2K_1+1)(2k_1+1)(2k_2+1)^2(2k_3+1)^2 \bigg)
+O(t^4)
\,.\nn
\eeq
As for the candy graph, no 1st-order of $t$ appears at all and the leading order is directly in $t^{2}$.

One gets the linear entropies for the vertices $B$ and $C$ by a straightforward cyclic permutation of the labels.
%
Generally, the three bipartite entanglement entropies have different values due to the a priori different values of the initial spins. Nonetheless, when $\ell \in \N +\f12$, all three bipartite entanglement entropies are equal to each other at the leading order in $t^2$. We trace this back  to the triangle condition on the $\{6j\}$-symbols, which eliminates the second term in the entropy formula above. Hence we conclude
\begin{res}
For an odd holonomy spin $\ell\in N+\f12$, the bipartite entanglement entropies on triangle graph satisfy $S(\rho_{tri_A })=S(\rho_{tri_B })=S(\rho_{tri_C })$ at leading order in $t^2$. This common leading order term is given
\be
{{\f12}} S(\rho_{tri_A })
\sim
t^2 \bigg(
\sum_{ s=0}^{2\ell}
(-1)^{j_{1}+j_{2}+j_3+2k_1+2k_2+2k_3+3s}
\begin{Bmatrix}
   j_1 & k_2 & k_3 \\
   s &  k_3 & k_2
  \end{Bmatrix}
\begin{Bmatrix}
   j_{2} & k_3 & k_1 \\
   s &  k_1 & k_3
  \end{Bmatrix}
  \begin{Bmatrix}
   j_{3} & k_1 & k_2 \\
   s &  k_2 & k_1
  \end{Bmatrix}
  \prod_{i=1}^{3}(2k_i+1)
  \bigg)
  \,.
\ee
\end{res}
Those triangle inequalities further implies that that $S(\rho_{tri_A })=S(\rho_{tri_B })=S(\rho_{tri_C })$ reach the same plateau value when the holonomy operator spin is larger than its critical value, $\ell > \min\{ 2k_1,2k_2,2k_3 \}$, no matter whenever $\ell$ is odd or even.


We now compare the bipartite entanglement to the tripartite entanglement, defined as the geometric entanglement.


\subsection{Tripartite entanglement}

In order to quantify the multipartite entanglement on the triangle graph, we use the geometric  entanglement. 
We can compute the geometric measure of entanglement as explained previously: the geometric entanglement is given by the maximal projection of the evolving state onto the spin network basis, which is, at very early times $t$, its projection onto the initial spin network basis state, that is by the contribution $K_{1,2,3}=k_{1,2,3}$:
\be
\lambda_{\max}
=
\big{|} \la \Psi_{tri, \{j_i,k_i\}} (\ell,0) | \Psi_{tri, \{j_i,k_i\}} (\ell,t) \ra \big{|}^2
=\f{ 1 + t^2 \Big( [ Z(tri)_{\ell}^{ \{j_i\} } ]^{\{ k_i \} }_{ \phantom{ \{K_i\} }\{ k_i \} } \Big)^2  }
{ 1+t^2 \sum_{s=0}^{2\ell} \, [ Z(tri)_{s}^{ \{j_i\} } ]^{\{ k_i \} }_{ \phantom{ \{k_i\} }\{ k_i \} }   }
\,.
\ee
Hence it leads to the geometric measure of entanglement on triangle graph:
\beq
S_{g} [ \Psi_{tri, \{j_i,k_i\}} (\ell,t) ]
&=&
 t^2 \prod_{i=1}^{3}(2k_i+1) \sum_{s=0}^{2\ell} \, (-1)^{\sum_{i=1}^{3}( j_i+2k_i + s)} \begin{Bmatrix}
   j_1 & k_2 & k_3 \\
   s &  k_3 & k_2
  \end{Bmatrix}\begin{Bmatrix}
   j_2 & k_3 & k_1 \\
   s &  k_1 & k_3
  \end{Bmatrix} \begin{Bmatrix}
   j_3 & k_1 & k_2 \\
   s & k_2 & k_1
  \end{Bmatrix}
\nn \\
&&-t^2
  \prod_{i=1}^{3}(2k_i+1)^2
  \begin{Bmatrix}
   j_1 & k_2 & k_3 \\
   \ell &  k_3 & k_2
  \end{Bmatrix}^2 \begin{Bmatrix}
   j_2 & k_3 & k_1 \\
   \ell &  k_1 & k_3
  \end{Bmatrix}^2 \begin{Bmatrix}
   j_3 & k_1 & k_2 \\
   \ell & k_2 & k_1
  \end{Bmatrix}^2
 + O(t^4)
\,.
\label{eq:IntertwinerEE-trianglegraph-6j}
\eeq
This is a straightforward extension of the formula (\ref{eq:IntertwinerEE-candygraph-6j}) derived for the candy graph, where we've added the relevant $\{6j\}$-symbols.
%
As earlier, the geometric entanglement excitation reaches a plateau, due to the triangle inequalities on the spins, when the holonomy operator spin grows beyond a critical value, $\ell > \min\{ 2k_1,2k_2,2k_3\}$.
Moreover, when the spin is odd, $\ell \in \N+\f12$, the second term in (\ref{eq:IntertwinerEE-trianglegraph-6j}) vanishes, as for the bipartite entanglement in last subsection.

The plots \ref{fig:EE-GME-trianglegraph} compare the bipartite entanglement entropies and geometric entanglement:
\begin{itemize}
\item (i) they are equal when the holonomy spin is odd, $\ell \in \N+\f12$;
\item (ii) for small spins $\ell$, those measures of entanglement  are clearly distinct, but they tend to converge as the spin $\ell$ increases, and they are eventually constant and equal  beyond the critical value $\ell > \min\{2k_1,2k_2,2k_3\}$.
\end{itemize}
\begin{figure}
\begin{subfigure}[t]{.45\linewidth} \includegraphics[width=1.0\textwidth]{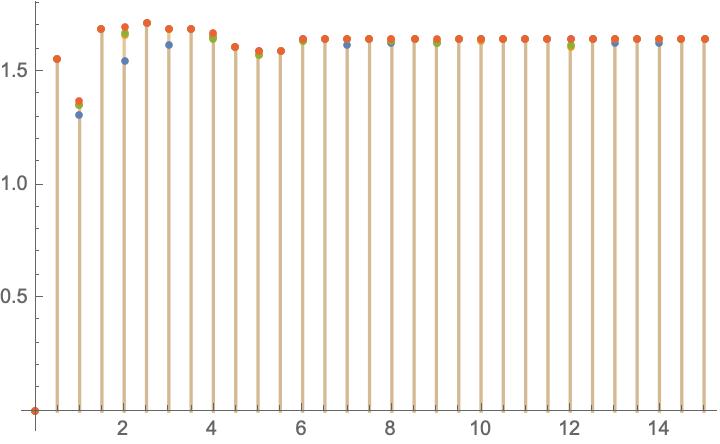}
\caption{Spins $j_1=3,j_2=4,j_3=5,k_1=6,k_2=8,k_3=7$.}
\end{subfigure}
\begin{subfigure}[t]{.45\linewidth} \includegraphics[width=1.0\textwidth]{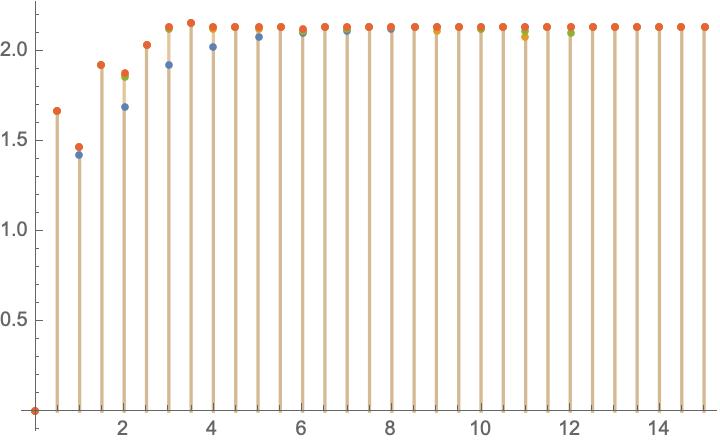}
\caption{Spins $j_1=5/2,j_2=7/2,j_3=4,k_1=6,k_2=8,k_3=11/2$.}
\end{subfigure}
\begin{subfigure}[t]{.45\linewidth} \includegraphics[width=1.0\textwidth]{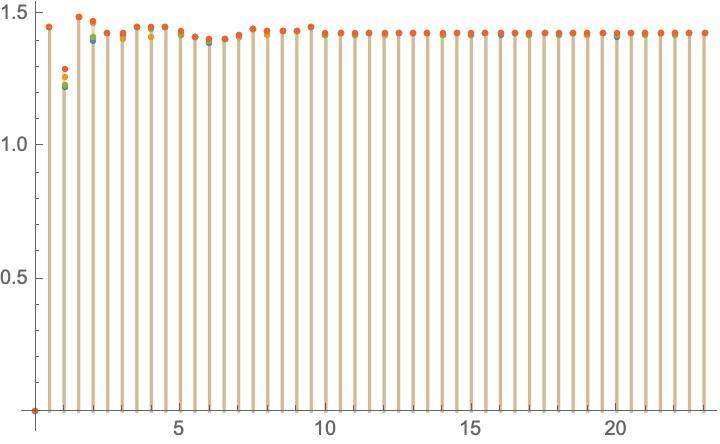}
\caption{Spins $j_1=6,j_2=9,j_3=7,k_1=12,k_2=11,k_3=10$.}
\end{subfigure}
\begin{subfigure}[t]{.45\linewidth} \includegraphics[width=1.0\textwidth]{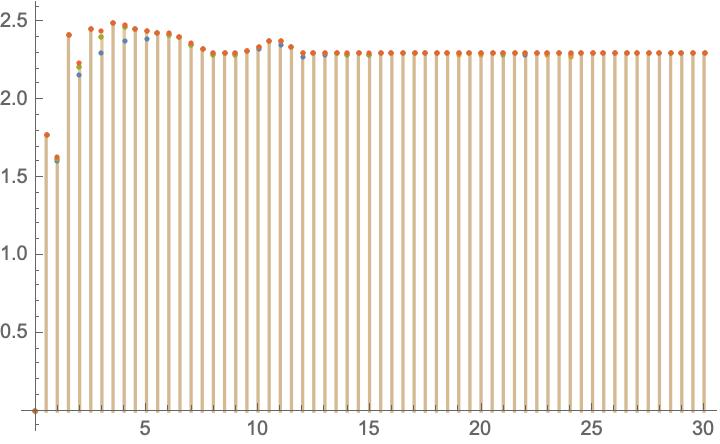}
\caption{Spins $j_1=9/2,j_2=11/2,j_3=7,k_1=12,k_2=16,k_3=27/2$.}
\end{subfigure}
\caption{The comparison between bipartite entanglement entropies and geometric measure of entanglement, where {\color{blue}{blue}} point is for
$\f12S_A$, {\color{orange}{orange}} point  for $\f12S_B$, {\color{green}{green}} point  for $\f12S_C$
and {\color{red}{red}} point for $S_g$.}
\label{fig:EE-GME-trianglegraph}
\end{figure}

One could also have computed the geometric entanglement by gauge-fixing and working out the closure defect's probability distribution. Since the bulk graph consists in a single loop made of three edges, there are three ways to gauge-fix: choosing $k_1$ as the ``loopy spin'', i.e. the edge which is not gauge-fixed and  gauge-fixing the edges associated with $k_{2}$ and $k_{3}$, plus the two other possibilities of choosing $k_{2}$ or $k_{3}$ as the ``loopy spin''.

For instance, contracting the edges carrying $k_{2}$ and $k_{3}$ and keeping $k_{1}$, we obtain the probability distribution of the closure defect  as
\be
p_{k_1}(J)
=
(2J+1) (2k_2+1)(2k_3+1)\sum_{L} (2L+1) \begin{Bmatrix} j_3 & j_2 & L \\ k_1 & k_1 & J \\ k_2 & k_3 & j_1 \end{Bmatrix}^2
\,,
\ee
with the normalization $\sum_{J} p_{k_1}(J)=1$.
This probability distribution gives the 2nd-order time derivative of geometric  entanglement:
\beq
\f12\f{ \rd^2 S_{g} }{ \rd t^2} \Big\vert_{t=0}
&=&
\sum_{J,L}
(2J+1) (2L+1) (2k_1+1)(2k_2+1)(2k_3+1) \begin{Bmatrix} j_3 & j_2 & L \\ k_1 & k_1 & J \\ k_2 & k_3 & j_1 \end{Bmatrix}^2 \sum_{s=0}^{2\ell}
(-1)^{J+s+2k_1} \begin{Bmatrix} J & k_1 & k_1 \\ s & k_1 & k_1 \end{Bmatrix}
\label{eq:IntertwinerEE-trianglegraph-6j-ClosureDefect}
\\
&&
-\left(
\sum_{J,L}
(2J+1) (2L+1) (2k_1+1)(2k_2+1)(2k_3+1) \begin{Bmatrix} j_3 & j_2 & L \\ k_1 & k_1 & J \\ k_2 & k_3 & j_1 \end{Bmatrix}^2
(-1)^{J+\ell+2k_1} \begin{Bmatrix} J & k_1 & k_1 \\ s & k_1 & k_1 \end{Bmatrix}
\right)^2
\,.
\nn
\eeq
It is straightforward to check that these results does not actually depend on the choice of gauge-fixing and that it gives the same expression for the leading order entanglement computed above in eqn.\eqref{eq:IntertwinerEE-trianglegraph-6j}.
%
This can be shown by using the Biedenharn-Elliott identity.
More precisely, we need to prove the following relation:
\beq
&&
 \begin{Bmatrix}
   j_1 & k_2 & k_3 \\
   \ell &  k_3 & k_2
  \end{Bmatrix}\begin{Bmatrix}
   j_2 & k_3 & k_1 \\
   \ell &  k_1 & k_3
  \end{Bmatrix} \begin{Bmatrix}
   j_3 & k_1 & k_2 \\
   \ell & k_2 & k_1
  \end{Bmatrix}
\nn
\\
&=&
(-1)^{\sum_{i=1}^{3}( j_i+2k_i + \ell)}
\sum_{J,L}
(2J+1)(2L+1)
 \begin{Bmatrix}
   j_3 & j_2 & L \\
   k_1 &  k_1 & J \\
   k_2 & k_3 & j_1
  \end{Bmatrix}^2
(-1)^{J+\ell+2k_1}
 \begin{Bmatrix}
   J & k_1 & k_1 \\
   \ell &  k_1 & k_1
  \end{Bmatrix}
  \label{eq:6jsymbols-9jsymbol}
  \,.
\eeq
Starting from the left side, we use the  Biedenharn-Elliot identity twice:  we first recouple the first two $6j$-symbols via the Biedenharn-Elliot identity, which results in three $6j$-symbols, then we take one of them and recouple it to the third  $6j$-symbol of the original prooductone via the Biedenharn-Elliot identity again. This gives:
\beq
&&
\begin{Bmatrix}
   \ell & k_3 & k_3 \\
   j_1 &  k_2 & k_2
  \end{Bmatrix}\begin{Bmatrix}
   \ell & k_1 & k_1 \\
   j_2 &  k_3 & k_3
  \end{Bmatrix} \begin{Bmatrix}
   \ell & k_2 & k_2 \\
   j_3 & k_1 & k_1
  \end{Bmatrix}
\\
  &=&
  \sum_{L',J}
  (-1)^{j_1+j_2+j_3+2k_1+2k_2+2k_3+\ell}(-1)^{J+\ell+2k_1}
  \begin{Bmatrix}
   k_1 & k_1 & J \\
   k_1 & k_1 & \ell
  \end{Bmatrix}
  (2J+1)(2L'+1)
    \begin{Bmatrix}
   j_1 & j_2 & L' \\
   k_1 & k_2 & k_3
  \end{Bmatrix}^2
    \begin{Bmatrix}
   J & j_3  & L' \\
   k_2 & k_1 & k_1
  \end{Bmatrix}^2
  \,,
  \nn
\eeq
Finally, we deal with the two squared  $6j$-symbols  by the contraction formula of Racah coefficients (see \cite{brink1968angular}, page 143)
\be
(-1)^{2h+2d}
\sum_{c}(2c+1)
    \begin{Bmatrix}
   a & i & k \\
   f & b & c
  \end{Bmatrix}
    \begin{Bmatrix}
   a & b  & c \\
   d & e & f \\
   g & h & i
  \end{Bmatrix}
=
    \begin{Bmatrix}
   a & i & k \\
   h & d & g
  \end{Bmatrix}
    \begin{Bmatrix}
   b & f  & k \\
   d & h & e
  \end{Bmatrix}
  \,,
\ee
which allows to write
\beq
&&
\sum_{L'}(2L+1)
    \begin{Bmatrix}
   j_1 & j_2 & L' \\
   k_1 & k_2 & k_3
  \end{Bmatrix}^2
    \begin{Bmatrix}
   J & j_3  & L' \\
   k_2 & k_1 & k_1
  \end{Bmatrix}^2
  \nn
  \\
  &=&
  \sum_{L,L''}(2L+1)(2L''+1)
    \begin{Bmatrix}
   j_1 & j_2 & L' \\
   j_3 & J & L
  \end{Bmatrix}
      \begin{Bmatrix}
   j_1 & j_2 & L' \\
   j_3 & J & L''
  \end{Bmatrix}
    \begin{Bmatrix}
   j_1 & J  & L \\
   k_2 & k_1 & j_3 \\
   k_3 & k_1 & j_2
  \end{Bmatrix}
      \begin{Bmatrix}
   j_1 & J  & L'' \\
   k_2 & k_1 & j_3 \\
   k_3 & k_1 & j_2
  \end{Bmatrix}
=
    \begin{Bmatrix}
   j_1 & J  & L \\
   k_2 & k_1 & j_3 \\
   k_3 & k_1 & j_2
  \end{Bmatrix}^2
  \,,
\eeq
By symmetry of $9j$-symbol, this proves the wanted eqn.(\ref{eq:6jsymbols-9jsymbol}).

\medskip

Finally, we can look at the semi-classical regime at large spins $\{ j_i,k_i\}$, in which the spin network has a clear interpretation in terms of dual triangulation: the triangle graph is dual to a open tetrahedron to which one has removed one triangle, as drawn on fig.\ref{fig:trianglegraph-triangulation-spike}. This missing triangle corresponds to the the boundary edge of the graph, decorated with the spins $(j_{1},j_{2},j_{3})$.
For large spins, assuming that $\ell \ll \{ j_i,k_i\}$, one can apply Racah's approximation (\ref{eq:Angle-Asymptotic-6j-Racah}) for $6j$-symbols to the geometric entanglement formula \eqref{eq:IntertwinerEE-trianglegraph-6j} and write it in terms of Legendre polynomials in the cosine of the triangle angles, as illustrated on  figure \ref{fig:DualTriangulation-TriangleGraph}:
\be
S_{g} [ \Psi_{tri, \{j_i,k_i\}} (\ell,t) ]
\sim
t^2 \sum_{s=0}^{2\ell} P_s (\cos\theta_1) P_s (\cos\theta_2) P_s (\cos\theta_3) - t^2 [ P_s (\cos\theta_1) P_s (\cos\theta_2) P_s (\cos\theta_3) ]^2 + O(t^4)
\label{eq:IntertwinerEE-trianglegraph-N}
\ee
with the convention that  half-integer Legendre polynomials $P_{\ell}(x)$ vanish, for $\ell \in \N+\f12$.
%
The angles are given in terms of the spins by
\be
\begin{aligned}
&\cos\theta_{1}=\f{ k_2(k_2+1) + k_3(k_{3}+1) - j_1(j_1+1)  }{ 2 \sqrt{ k_{2} (k_{2}+1) k_{3} (k_{3}+1) } }
\,,
\\
&\cos\theta_{2}=\f{ k_3(k_3+1) + k_1(k_{1}+1) - j_2(j_2+1)  }{ 2 \sqrt{ k_{3} (k_{3}+1) k_{1} (k_{1}+1) } }
\,,
\\
&\cos\theta_{3}=\f{ k_1(k_1+1) + k_2(k_{2}+1) - j_3(j_3+1)  }{ 2 \sqrt{ k_{1} (k_{1}+1) k_{2} (k_{2}+1) } }
\,.
\end{aligned}
\label{eq:Angles-TriangleGraph}
\ee
This expression allows us to study how the entanglement varies with respect to the spin network initial data.
\begin{figure}[htb!]
\begin{tikzpicture}[scale=0.7]
\coordinate (O) at (0,0);

\draw[thick] (O) ++(60:1.5) coordinate(E3) node[scale=0.7,blue] {$\bullet$} --++ (60:1.2) node[right] {$j_3$};
\draw[thick] (O) ++(180:1.5) coordinate(E1) node[scale=0.7,blue] {$\bullet$} --++ (180:1.2) node[above] {$j_1$};
\draw[thick] (O) ++(300:1.5) coordinate(E2) --++ (300:1.2) node[right] {$j_2$};

\draw[blue,thick] (E1) -- (E2) node[scale=0.7,blue] {$\bullet$} -- (E3) node[scale=0.7,blue] {$\bullet$} -- (E1) node[scale=0.7,blue] {$\bullet$};

\path (O) ++(0:3.5) coordinate (O1);
\path (O) ++(120:3.5) coordinate (O2);
\path (O) ++(240:3.5) coordinate (O3);

\draw[thick,brown] (O1) --(O2) -- (O3) --(O1);
\draw[thick,brown] (O) -- node[above,midway,blue] {$k_1$} (O1);
\draw[thick,brown] (O) -- node[right,midway,blue] {$k_2$} (O2);
\draw[thick,brown] (O) -- node[right,midway,blue] {$k_3$} (O3);

\draw[thick] (O) node[scale=0.7,brown] {$\bullet$};

\pic [draw, ->, "$\theta_3$", angle radius=0.3cm, angle eccentricity=1.75] {angle = O1--O--O2};
\pic [draw, ->, "$\theta_1$", angle radius=0.3cm, angle eccentricity=1.75] {angle = O2--O--O3};
\pic [draw, ->, "$\theta_2$", angle radius=0.3cm, angle eccentricity=1.75] {angle = O3--O--O1};

\end{tikzpicture}

\caption{The dual triangulation ({\color{brown}{brown}} lines) on triangle graph. The angles are given by (\ref{eq:Angles-TriangleGraph}).
}
\label{fig:DualTriangulation-TriangleGraph}
\end{figure}
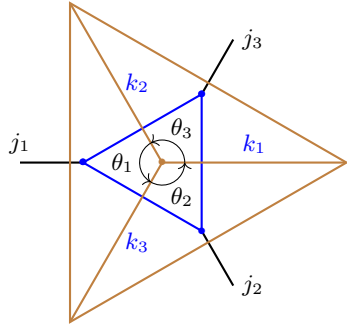

As for the bipartite system defined on the candy graph, the extremal configurations for the entanglement excitation by the holonomy operator are given by extremal geometries: spiky tetrahedral geometries with flatten triangles and maximal curvature.
More precisely, to start with, vanishing angles $\theta_1=\theta_2=\theta_3=0$  maximizes the multipartite entanglement excitation, as for the candy graph geometry. It corresponds to the asymptotic limit of spiky tetrahedron, as drawn on fig.\ref{fig:trianglegraph-triangulation-spike}, where the summit is sent to infinity, which is interpreted as an extremal bulk curvature.
Similarly, angular configurations $\theta_1=0,\theta_2=\theta_3=\pi$, and its two other permutations, also produce a maximal entanglement growth. In terms of spins and thus of edge lengths, this corresponds to $j_1 \ll k_2,k_3$, $j_2\approx k_1+k_3$, $j_3 \approx k_1+k_2$, which satisfy the triangular inequalities for $(j_1,j_2,j_3)$.

On the contrary, the angle configurations $\theta_1=\theta_2=0,\theta_3=\pi$ and its other two permutations, as well as $\theta_1=\theta_2=\theta_3=\pi$, all lead to a minimal entanglement, i.e. vanishing entanglement excitation at leading order in $t^{2}$. Translating in terms of spins and edge lengths, the former case corresponds to $j_1, j_2 \ll k_1,k_2,k_3$, $j_3 \approx k_1+k_2$, which doesn't satisfy the triangular inequalities for $(j_1,j_2,j_3)$, and thus is not an allowed spin network configuration. The latter case, with equal angles  $\theta_1=\theta_2=\theta_3=\pi$ , corresponds to $ j_1\approx k_2+k_3$, $j_2 \approx k_3+k_1$, $j_3 \approx k_1+k_2$, which is allowed by the triangular inequalities.
%
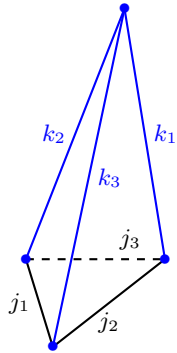
\begin{figure}[htb]
\begin{tikzpicture}[thick,scale=1.5]

\coordinate (A) at (0,2.061,0);
\coordinate (B) at (-0.25,-0.554,1);
\coordinate (C) at (-1.066,-0.354,-0.5);
\coordinate (D) at (0.166,-0.354,-0.5);

\draw[blue] (A) -- (B) node[midway,right] {$k_3$};
\draw[blue] (A) -- (C) node[midway,left] {$k_2$};
\draw[blue] (A) -- (D) node[midway,right] {$k_1$};
\draw[black] (B) -- (C) node[midway,left] {$j_1$};
\draw[dashed,black] (C) -- (D) node[near end,above] {$j_3$};
\draw[black] (D) -- (B)  node[midway,below] {$j_2$};

\draw[blue] (A) node{$\bullet$};
\draw[blue] (B) node{$\bullet$};
\draw[blue] (C) node{$\bullet$};
\draw[blue] (D) node{$\bullet$};
\end{tikzpicture}
\caption{A spike triangulation.
}
\label{fig:trianglegraph-triangulation-spike}
\end{figure}

\smallskip


To conclude this section, we extended to the triangular graph our analysis of the entanglement excitation by the loop holonomy operator on a spin network basis state. This has confirmed our general analysis of the geometric entanglement for an arbitrary graph. More precisely,  we have computed analytically and numerically both the linear entanglement entropies for bipartitions of the graph and the geometric entanglement and checked that they match at leading order in the time $t$. Since the initial state is unentangled, the evolution starts with a vanishing entanglement then grows in $t^{2}$. Moreover, we have identified the spin network configurations with extremal entanglement excitations, which unsurprisingly correspond to discrete geometries with extremal curvature.
This is a positive step towards establishing a more thorough and precise dictionary between quantum entanglement, quantum geometry and its dynamics in the framework of loop quantum gravity.

\section{Conclusion \& Outlook}

The present paper is dedicated to the study of entanglement in loop quantum gravity. More precisely, we looked into the evolution of entanglement under the action of the holonomy operator. Intuitively, the holonomy operator acting on a loop of a spin network state will act at once on all the intertwiners living at the nodes on the loop and will entangle them. More precisely, we studied the unitary evolution generated by a holonomy operator on an initial state given by a spin network basis state. Such an initial state has fixed spins and intertwiners, it is thus a separable state carrying absolutely no entanglement. The evolution will naturally create entanglement between the vertices of the spin network and our goal was to compute the excitation of entanglement by the holonomy operator.
The holonomy around a loop representing a discretized measure of curvature, this leads to a relation between excitations of the geometry -``quanta of curvature''- and excitations of the information -multipartite entanglement- over spin network basis states. This fits in the larger program of interpreting and reconstructing the quantum geometry of space-time from quantum information concepts.

At the technical level, we introduced a notion of geometric entanglement to measure the multipartite entanglement carried by a spin network state, defined as the distance between that state and the set of separable states, identified (up to minor subtleties) to the set of spin network basis states. This is understood as witness of how much the intertwiners of a spin network, i.e. its quanta of volume, are entangled with each other.
Starting from an initial spin network basis state, thus with vanishing entanglement, we find that the first order time derivative at initial time always vanishes, so that the leading order behavior is a quadratic growth of the entanglement. We show that the second order derivative is simply given by the dispersion of the holonomy operator, which can in turn be computed from the probability distribution of the ``closure defect'' around the loop (i.e. the total spin recoupling the spins  of all the edges attached to the loop). Considering holonomy operators with arbitrary spin $\ell\in\N/2$, we find that this entanglement excitation grows with the spin $\ell$ and exhibits a plateau behavior: the entanglement excitation saturates and reaches its maximal value when the spin $\ell$ becomes larger than all the spins around the loop.

%
%
%

We illustrate this analysis through its application to the simplest spin networks with non-trivial bulk and boundary: the candy graph, consisting in two vertices linked by a single loop, and the triangle graph, consisting in three vertices linked by a single loop, both with arbitrary number of boundary edges poking out of the bulk vertices. These configurations allow to study explicitly the correlation and entanglement between the intertwiners living at the graph vertices. We compute the geometric entanglement and express it in terms of 6j-symbols from spin recoupling, which allows to study its low-spin and large-spin regimes. We further compute the bipartite entanglement created by the holonomy operator, defined as the entropy of the reduced density matrix, and show that it fits exactly at leading order with the notion of geometric entanglement  we introduced.

Since holonomy operators are the basic building blocks of the Hamiltonian dynamics of loop quantum gravity, this work gives a first hint of the effect of the dynamics on the quantum information carried by spin network states. We hope that further characterizing the various operators of loop quantum gravity through their effect on the correlation and entanglement between intertwiners will allow to reformulate the whole dynamics and notion of physical states entirely in quantum information terms.

%
%
%

\section*{Acknowledgement}
Q.C. is financially supported by the China Scholarship Council.

\begin{appendix}
\section{Examples of entanglement evolution}

\subsection{Time-dependent Bell state}
\label{eg:TDBell}
%
Let $| \Psi(t) \ra=\cos t | 0 0 \ra+\sin t | 11 \ra$. The Hermitian operator in the case has matrix representation $\wh{H}=\begin{pmatrix} 0 & -\ri \\ \ri & 0 \end{pmatrix}$ with respect to basis $\{ | 00 \ra, |11\ra \}$.
This bipartite state initiates with the unentangled state $| 00 \ra$. Within $t \in [0, \pi / 4)$, the $| \Phi(t) \ra$ is chosen as $| 00 \ra$, while within $t \in ( \pi /4, 3\pi /4)$, the $| \Phi(t) \ra$ is chosen as $| 11 \ra$.
At the instant $t=\pi/4$, there is ambiguity to define the unique $| \Phi(t) \ra$. Also, the $| \Phi(t) \ra$ is discontinuous at $t=\pi/4$.
However, the maximal projection $| \la \Phi(t) | \Psi(t) \ra |^2$ is still differentiable along $t \to s+$ and $t \to s-$ (the left and right derivative could be different).


\subsection{Black hole evaporation toy model}
\label{eg:BHtoy}

We give another simple example whose maximal projection $| \la \Phi(t) | \Psi(t) \ra |^2$ is also differentiable.
Consider the Hamiltonian $\wh{H}=\ri ( a^{\dagger} b^{\dagger} - a b )$ acting on the unentangled initial state $| 0 \ra_A \otimes | 0 \ra_B$.
The evolution produces the entangled quantum state:
 \be
 | \Psi(t) \ra=\f{1}{\cosh t} \sum_{n=0}^{ \infty } \tanh^n t | n \ra_A \otimes | n \ra_B
 \,.
 \ee
The Schmidt eigenvalues are labelled by an interger  $n$ and read $\lambda_n=\tanh^{2n} t / \cosh^2 t$. The bipartite entanglement entropy as a  function of the time $t$ is given by:
\be
S(t)=\ln \cosh^2 t-\sinh^2 t \ln \tanh^2 t\,,\qquad\textrm{with}\quad
\lim_{t \to 0}\f{\rd S(t)}{ \rd t}=0\,,\quad
\lim_{t \to 0}\f{\rd^2 S(t)} {\rd t^2} \to \infty
\,.\ee
This entropy evolution function grows asymptotically linearly as $2t$.
However, the maximal Schmidt eigenvalue is always given by the 0-mode, for the number of quanta $n=0$,  since $|\tanh^2 t| < 1$, so that the separable projection is constant, $| \Phi(t) \ra=| 0 \ra_A \otimes | 0 \ra_B$ for all times $t$. Then the maximal Schmidt eigenvalue $\lambda_{\max}(t)=\lambda_{0}=1 / \cosh^2 t$ is smooth and gives the geometric measure of entanglement.


\end{appendix}



\bibliographystyle{bib-style}
\bibliography{LQG}

\providecommand{\href}[2]{#2}\begingroup\raggedright\begin{thebibliography}{10}

\bibitem{Gaul:1999ys}
M.~Gaul and C.~Rovelli, ``{Loop quantum gravity and the meaning of
  diffeomorphism invariance},'' Lect. Notes Phys. {\bf 541} (2000) 277--324,
  \href{http://arXiv.org/abs/gr-qc/9910079}{{\texttt{arXiv:gr-qc/9910079}}}.

\bibitem{Thiemann:2007zz}
T.~Thiemann, ``{Modern canonical quantum general relativity},''
  \href{http://arXiv.org/abs/gr-qc/0110034}{{\texttt{arXiv:gr-qc/0110034}}}.

\bibitem{Rovelli:2014ssa}
C.~Rovelli and F.~Vidotto, {\em {Covariant Loop Quantum Gravity}: {An
  Elementary Introduction to Quantum Gravity and Spinfoam Theory}}.
\newblock Cambridge Monographs on Mathematical Physics. Cambridge University
  Press, 11, 2014.

\bibitem{Bodendorfer:2016uat}
N.~Bodendorfer, ``{An elementary introduction to loop quantum gravity},''
  \href{http://arXiv.org/abs/1607.05129}{{\texttt{arXiv:1607.05129}}}.

\bibitem{Rovelli:1994ge}
C.~Rovelli and L.~Smolin, ``{Discreteness of area and volume in quantum
  gravity},'' Nucl. Phys. B {\bf 442} (1995) 593--622,
  \href{http://arXiv.org/abs/gr-qc/9411005}{{\texttt{arXiv:gr-qc/9411005}}}.
  [Erratum: Nucl.Phys.B 456, 753--754 (1995)].

\bibitem{Ashtekar:1996eg}
A.~Ashtekar and J.~Lewandowski, ``{Quantum theory of geometry. 1: Area
  operators},'' Class. Quant. Grav. {\bf 14} (1997) A55--A82,
  \href{http://arXiv.org/abs/gr-qc/9602046}{{\texttt{arXiv:gr-qc/9602046}}}.

\bibitem{Ashtekar:1997fb}
A.~Ashtekar and J.~Lewandowski, ``{Quantum theory of geometry. 2. Volume
  operators},'' Adv. Theor. Math. Phys. {\bf 1} (1998) 388--429,
  \href{http://arXiv.org/abs/gr-qc/9711031}{{\texttt{arXiv:gr-qc/9711031}}}.

\bibitem{Thiemann:1996aw}
T.~Thiemann, ``{Quantum spin dynamics (QSD)},'' Class. Quant. Grav. {\bf 15}
  (1998) 839--873,
  \href{http://arXiv.org/abs/gr-qc/9606089}{{\texttt{arXiv:gr-qc/9606089}}}.

\bibitem{Thiemann:1996av}
T.~Thiemann, ``{Quantum spin dynamics (qsd). 2.},'' Class. Quant. Grav. {\bf
  15} (1998) 875--905,
  \href{http://arXiv.org/abs/gr-qc/9606090}{{\texttt{arXiv:gr-qc/9606090}}}.

\bibitem{Reisenberger:1996pu}
M.~P. Reisenberger and C.~Rovelli, ``{'Sum over surfaces' form of loop quantum
  gravity},'' Phys. Rev. D {\bf 56} (1997) 3490--3508,
  \href{http://arXiv.org/abs/gr-qc/9612035}{{\texttt{arXiv:gr-qc/9612035}}}.

\bibitem{Baez:1997zt}
J.~C. Baez, ``{Spin foam models},'' Class. Quant. Grav. {\bf 15} (1998)
  1827--1858,
\href{http://arXiv.org/abs/gr-qc/9709052}{{\texttt{arXiv:gr-qc/9709052}}}.

\bibitem{Barrett:1997gw}
J.~W. Barrett and L.~Crane, ``{Relativistic spin networks and quantum
  gravity},'' J. Math. Phys. {\bf 39} (1998) 3296--3302,
\href{http://arXiv.org/abs/gr-qc/9709028}{{\texttt{arXiv:gr-qc/9709028}}}.

\bibitem{Freidel:1998pt}
L.~Freidel and K.~Krasnov, ``{Spin foam models and the classical action
  principle},'' Adv. Theor. Math. Phys. {\bf 2} (1999) 1183--1247,
  \href{http://arXiv.org/abs/hep-th/9807092}{{\texttt{arXiv:hep-th/9807092}}}.

\bibitem{Livine:2010zx}
E.~R. Livine, {\em {The Spinfoam Framework for Quantum Gravity}}.
\newblock PhD thesis, Lyon, IPN, 2010.
\newblock
\href{http://arXiv.org/abs/1101.5061}{{\texttt{arXiv:1101.5061}}}.
\newblock

\bibitem{Dupuis:2010kqh}
M.~Dupuis, {\em {Spin Foam Models for Quantum Gravity and semi-classical
  limit}}.
\newblock PhD thesis, Lyon, Ecole Normale Superieure, 2010.
\newblock \href{http://arXiv.org/abs/1104.2765}{{\texttt{arXiv:1104.2765}}}.

\bibitem{Perez:2012wv}
A.~Perez, ``{The Spin Foam Approach to Quantum Gravity},'' Living Rev. Rel.
  {\bf 16} (2013) 3,
\href{http://arXiv.org/abs/1205.2019}{{\texttt{arXiv:1205.2019}}}.

\bibitem{DePietri:1999bx}
R.~De~Pietri, L.~Freidel, K.~Krasnov, and C.~Rovelli, ``{Barrett-Crane model
  from a Boulatov-Ooguri field theory over a homogeneous space},'' Nucl. Phys.
  B {\bf 574} (2000) 785--806,
  \href{http://arXiv.org/abs/hep-th/9907154}{{\texttt{arXiv:hep-th/9907154}}}.

\bibitem{Reisenberger:2000zc}
M.~P. Reisenberger and C.~Rovelli, ``{Space-time as a Feynman diagram: The
  Connection formulation},'' Class. Quant. Grav. {\bf 18} (2001) 121--140,
  \href{http://arXiv.org/abs/gr-qc/0002095}{{\texttt{arXiv:gr-qc/0002095}}}.

\bibitem{Freidel:2005qe}
L.~Freidel, ``{Group field theory: An Overview},'' Int. J. Theor. Phys. {\bf
  44} (2005) 1769--1783,
  \href{http://arXiv.org/abs/hep-th/0505016}{{\texttt{arXiv:hep-th/0505016}}}.

\bibitem{Oriti:2006se}
D.~Oriti, ``{The Group field theory approach to quantum gravity},''
  \href{http://arXiv.org/abs/gr-qc/0607032}{{\texttt{arXiv:gr-qc/0607032}}}.

\bibitem{Carrozza:2013oiy}
S.~Carrozza, {\em {Tensorial methods and renormalization in Group Field
  Theories}}.
\newblock PhD thesis, Orsay, LPT, 2013.
\newblock \href{http://arXiv.org/abs/1310.3736}{{\texttt{arXiv:1310.3736}}}.

\bibitem{Oriti:2014yla}
D.~Oriti, J.~P. Ryan, and J.~Th\"urigen, ``{Group field theories for all loop
  quantum gravity},'' New J. Phys. {\bf 17} (2015), no.~2, 023042,
  \href{http://arXiv.org/abs/1409.3150}{{\texttt{arXiv:1409.3150}}}.

\bibitem{Donnelly:2016auv}
W.~Donnelly and L.~Freidel, ``{Local subsystems in gauge theory and gravity},''
  JHEP {\bf 09} (2016) 102,
\href{http://arXiv.org/abs/1601.04744}{{\texttt{arXiv:1601.04744}}}.

\bibitem{Feller:2017jqx}
A.~Feller and E.~R. Livine, ``{Entanglement entropy and correlations in loop
  quantum gravity},'' Class. Quant. Grav. {\bf 35} (2018), no.~4, 045009,
\href{http://arXiv.org/abs/1710.04473}{{\texttt{arXiv:1710.04473}}}.

\bibitem{Livine:2005mw}
E.~R. Livine and D.~R. Terno, ``{Quantum black holes: Entropy and entanglement
  on the horizon},'' Nucl. Phys. {\bf B741} (2006) 131--161,
\href{http://arXiv.org/abs/gr-qc/0508085}{{\texttt{arXiv:gr-qc/0508085}}}.

\bibitem{Livine:2007sy}
E.~R. Livine and D.~R. Terno, ``{Bulk Entropy in Loop Quantum Gravity},'' Nucl.
  Phys. {\bf B794} (2008) 138--153,
\href{http://arXiv.org/abs/0706.0985}{{\texttt{arXiv:0706.0985}}}.

\bibitem{Donnelly:2008vx}
W.~Donnelly, ``{Entanglement entropy in loop quantum gravity},'' Phys. Rev.
  {\bf D77} (2008) 104006,
\href{http://arXiv.org/abs/0802.0880}{{\texttt{arXiv:0802.0880}}}.

\bibitem{Livine:2008iq}
E.~R. Livine and D.~R. Terno, ``{The Entropic boundary law in BF theory},''
  Nucl. Phys. B {\bf 806} (2009) 715--734,
  \href{http://arXiv.org/abs/0805.2536}{{\texttt{arXiv:0805.2536}}}.

\bibitem{Donnelly:2011hn}
W.~Donnelly, ``{Decomposition of entanglement entropy in lattice gauge
  theory},'' Phys. Rev. {\bf D85} (2012) 085004,
\href{http://arXiv.org/abs/1109.0036}{{\texttt{arXiv:1109.0036}}}.

\bibitem{Donnelly:2014gva}
W.~Donnelly, ``{Entanglement entropy and nonabelian gauge symmetry},'' Class.
  Quant. Grav. {\bf 31} (2014), no.~21, 214003,
\href{http://arXiv.org/abs/1406.7304}{{\texttt{arXiv:1406.7304}}}.

\bibitem{Feller:2015yta}
A.~Feller and E.~R. Livine, ``{Ising Spin Network States for Loop Quantum
  Gravity: a Toy Model for Phase Transitions},'' Class. Quant. Grav. {\bf 33}
  (2016), no.~6, 065005,
\href{http://arXiv.org/abs/1509.05297}{{\texttt{arXiv:1509.05297}}}.

\bibitem{Bianchi:2015fra}
E.~Bianchi, L.~Hackl, and N.~Yokomizo, ``{Entanglement entropy of squeezed
  vacua on a lattice},'' Phys. Rev. {\bf D92} (2015), no.~8, 085045,
\href{http://arXiv.org/abs/1507.01567}{{\texttt{arXiv:1507.01567}}}.

\bibitem{Feller:2016zuk}
A.~Feller and E.~R. Livine, ``{Surface state decoherence in loop quantum
  gravity, a first toy model},'' Class. Quant. Grav. {\bf 34} (2017), no.~4,
  045004, \href{http://arXiv.org/abs/1607.00182}{{\texttt{arXiv:1607.00182}}}.

\bibitem{Bianchi:2016hmk}
E.~Bianchi, J.~Guglielmon, L.~Hackl, and N.~Yokomizo, ``{Loop expansion and the
  bosonic representation of loop quantum gravity},'' Phys. Rev. D {\bf 94}
  (2016), no.~8, 086009,
  \href{http://arXiv.org/abs/1609.02219}{{\texttt{arXiv:1609.02219}}}.

\bibitem{Delcamp:2016eya}
C.~Delcamp, B.~Dittrich, and A.~Riello, ``{On entanglement entropy in
  non-Abelian lattice gauge theory and 3D quantum gravity},'' JHEP {\bf 11}
  (2016) 102,
\href{http://arXiv.org/abs/1609.04806}{{\texttt{arXiv:1609.04806}}}.

\bibitem{Livine:2017fgq}
E.~R. Livine, ``{Intertwiner Entanglement on Spin Networks},'' Phys. Rev. {\bf
  D97} (2018), no.~2, 026009,
\href{http://arXiv.org/abs/1709.08511}{{\texttt{arXiv:1709.08511}}}.

\bibitem{Baytas:2018wjd}
B.~Baytas, E.~Bianchi, and N.~Yokomizo, ``{Gluing polyhedra with entanglement
  in loop quantum gravity},'' Phys. Rev. {\bf D98} (2018), no.~2, 026001,
\href{http://arXiv.org/abs/1805.05856}{{\texttt{arXiv:1805.05856}}}.

\bibitem{Anza:2016fix}
F.~Anzà and G.~Chirco, ``{Typicality in spin-network states of quantum
  geometry},'' Phys. Rev. {\bf D94} (2016), no.~8, 084047,
\href{http://arXiv.org/abs/1605.04946}{{\texttt{arXiv:1605.04946}}}.

\bibitem{Chirco:2017xjb}
G.~Chirco, F.~M. Mele, D.~Oriti, and P.~Vitale, ``{Fisher Metric, Geometric
  Entanglement and Spin Networks},'' Phys. Rev. D {\bf 97} (2018), no.~4,
  046015, \href{http://arXiv.org/abs/1703.05231}{{\texttt{arXiv:1703.05231}}}.

\bibitem{Colafranceschi:2021acz}
E.~Colafranceschi, G.~Chirco, and D.~Oriti, ``{Holographic maps from quantum
  gravity states as tensor networks},'' Phys. Rev. D {\bf 105} (2022), no.~6,
  066005, \href{http://arXiv.org/abs/2105.06454}{{\texttt{arXiv:2105.06454}}}.

\bibitem{Chirco:2021chk}
G.~Chirco, E.~Colafranceschi, and D.~Oriti, ``{Bulk area law for boundary
  entanglement in spin network states: Entropy corrections and horizon-like
  regions from volume correlations},'' Phys. Rev. D {\bf 105} (2022), no.~4,
  046018, \href{http://arXiv.org/abs/2110.15166}{{\texttt{arXiv:2110.15166}}}.

\bibitem{Colafranceschi:2022ual}
E.~Colafranceschi and G.~Adesso, ``{Holographic entanglement in spin network
  states: a focused review},''
  \href{http://arXiv.org/abs/2202.05116}{{\texttt{arXiv:2202.05116}}}.

\bibitem{PhysRevA.68.042307}
T.-C. Wei and P.~M. Goldbart, ``Geometric measure of entanglement and
  applications to bipartite and multipartite quantum states,'' Phys. Rev. A
  {\bf 68} (Oct, 2003) 042307.

\bibitem{Amico:2007ag}
L.~Amico, R.~Fazio, A.~Osterloh, and V.~Vedral, ``{Entanglement in many-body
  systems},'' Rev. Mod. Phys. {\bf 80} (2008) 517--576,
  \href{http://arXiv.org/abs/quant-ph/0703044}{{\texttt{arXiv:quant-ph/0703044}}}.

\bibitem{GUHNE20091}
O.~Gühne and G.~Tóth, ``Entanglement detection,'' Physics Reports {\bf 474}
  (2009), no.~1, 1--75.

\bibitem{Ashtekar:2021kfp}
A.~Ashtekar and E.~Bianchi, ``{A short review of loop quantum gravity},'' Rept.
  Prog. Phys. {\bf 84} (2021), no.~4, 042001,
  \href{http://arXiv.org/abs/2104.04394}{{\texttt{arXiv:2104.04394}}}.

\bibitem{Chen:2021vrc}
Q.~Chen and E.~R. Livine, ``{Loop quantum gravity's boundary maps},'' Class.
  Quant. Grav. {\bf 38} (2021), no.~15, 155019,
  \href{http://arXiv.org/abs/2103.08409}{{\texttt{arXiv:2103.08409}}}.

\bibitem{Anza:2017dkd}
F.~Anzà and G.~Chirco, ``{Fate of the Hoop Conjecture in Quantum Gravity},''
  Phys. Rev. Lett. {\bf 119} (2017), no.~23, 231301,
\href{http://arXiv.org/abs/1703.05241}{{\texttt{arXiv:1703.05241}}}.

\bibitem{Bonzom:2009zd}
V.~Bonzom, E.~R. Livine, and S.~Speziale, ``{Recurrence relations for spin foam
  vertices},'' Class. Quant. Grav. {\bf 27} (2010) 125002,
  \href{http://arXiv.org/abs/0911.2204}{{\texttt{arXiv:0911.2204}}}.

\bibitem{Borja:2010gn}
E.~F. Borja, J.~Diaz-Polo, I.~Garay, and E.~R. Livine, ``{Dynamics for a
  2-vertex Quantum Gravity Model},'' Class. Quant. Grav. {\bf 27} (2010)
  235010, \href{http://arXiv.org/abs/1006.2451}{{\texttt{arXiv:1006.2451}}}.

\bibitem{osti_4824659}
G.~Ponzano and T.~Regge, ``{Semiclassical limits of Racah coefficients},'' pp
  1-58 of Spectroscopic and Group Theoretical Methods in Physics. Block, F.
  (ed.). New York, John Wiley and Sons, Inc., 1968.

\bibitem{Roberts:1998zka}
J.~Roberts, ``{Classical 6j-symbols and the tetrahedron},'' Geom. Topol. {\bf
  3} (1999), no.~1, 21--66,
  \href{http://arXiv.org/abs/math-ph/9812013}{{\texttt{arXiv:math-ph/9812013}}}.

\bibitem{Borja:2010rc}
E.~F. Borja, L.~Freidel, I.~Garay, and E.~R. Livine, ``{U(N) tools for Loop
  Quantum Gravity: The Return of the Spinor},'' Class. Quant. Grav. {\bf 28}
  (2011) 055005,
  \href{http://arXiv.org/abs/1010.5451}{{\texttt{arXiv:1010.5451}}}.

\bibitem{Livine:2011up}
E.~R. Livine and M.~Martin-Benito, ``{Classical Setting and Effective Dynamics
  for Spinfoam Cosmology},'' Class. Quant. Grav. {\bf 30} (2013) 035006,
  \href{http://arXiv.org/abs/1111.2867}{{\texttt{arXiv:1111.2867}}}.

\bibitem{Aranguren:2022nzn}
E.~Aranguren, I.~n. Garay, and E.~R. Livine, ``{Classical dynamics for Loop
  Gravity: The 2-vertex model},''
  \href{http://arXiv.org/abs/2204.00307}{{\texttt{arXiv:2204.00307}}}.

\bibitem{brink1968angular}
D.~Brink and G.~Satchler, {\em Angular Momentum}.
\newblock Oxford library of the physical sciences. Clarendon P., 1968.

\end{thebibliography}\endgroup

\end{document}